\newcommand{\red}[1]{#1}
\newcommand{\R}{\mathbb{R}}
\newcommand{\Z}{\mathbb{Z}}
\newtheorem{thm}{Theorem}   
\newtheorem{lem}{Lemma}    
\newtheorem{cnj}{Conjecture} 
\newtheorem{rem}{Remark}   
\theoremstyle{definition}
\newtheorem{example}{Example}   
\def\keywords{\vspace{-.3em}
 \if@twocolumn
  \small\it Keywords\/\bf---$\!$%
 \else
\begin{center}\small\bf
Keywords\end{center}\quotation\small  \fi}
\def\endkeywords{\vspace{0.6em}\par\if@twocolumn\else\endquotation\fi
 \normalsize\rm}
\title{A reclaimer scheduling problem arising in coal stockyard management\thanks{This research was supported by ARC Linkage Grant nos. LP0990739 and LP110200524, and by the HVCCC and Triple Point Technology.}}
\author{
Enrico Angelelli \\
{\it University of Brescia, Italy} \and
Thomas Kalinowski
, Reena Kapoor\\
{\it University of Newcastle, Australia} \and
Martin W.P. Savelsbergh\\
{\it Georgia Institute of Technology, USA}
}
\begin{document}

\maketitle

\begin{abstract}

We study a number of variants of an abstract scheduling problem
inspired by the scheduling of reclaimers in the stockyard of a coal
export terminal. We analyze the complexity of each of the variants,
providing complexity proofs for some and polynomial algorithms for
others. For one, especially interesting variant, we also develop a
constant factor approximation algorithm.

\medskip
\noindent\textbf{Keywords:} reclaimer scheduling, stockyard management, approximation algorithm, complexity
\end{abstract}


\section{Introduction}

We investigate a scheduling problem that arises in the management of
a stockyard at a coal export terminal. Coal is marketed and sold to
customers by brand. The brand of coal dictates its characteristics,
for example the range in which its calorific value, ash, moisture
and/or sulphur content lies. In order to deliver a brand required by
a customer, coal from different mines, producing coal with different
characteristics, is ``mixed'' in a stockpile at the stockyard of a
coal terminal to obtain a blended product meeting the required brand
characteristics. Stackers are used to add coal that arrives at the
terminal to stockpiles in the yard and reclaimers are used to
reclaim completed stockpiles for delivery to waiting ships at the
berths.

We focus on the scheduling of the reclaimers. The stockyard
motivating our investigation has four pads on which stockpiles are
build. A stockpile takes up the entire width of a pad and a portion
of its length. Each pad is served by two reclaimers that cannot pass
each other and each reclaimer serves two pads, one on either side of
the reclaimer. Effective reclaimer scheduling, even though only one
of component of the management of the stockyard management at a coal
terminal, is a critical component, because reclaimers tend to be the
constraining entities in a coal terminal (reclaiming capacity, in
terms of tonnes per hour, is substantially lower than stacking
capacity).

In order to gain a better understanding of the challenges associated
with reclaimer scheduling, we introduce an abstract model of
reclaimer scheduling and study the complexity of different variants
of the model as well as algorithms for the solution of these
variants. Our investigation has not only resulted in insights that
may be helpful in improving stockyard efficiency, but has also given
rise to a new and intriguing class of scheduling problems that has
proven to be surprisingly rich. One reason is that the travel time
of the reclaimers, i.e., the time between the completion of the
reclaiming of one stockpile and the start of the reclaiming of a
subsequent stockpile, cannot be ignored. Another reason is the
interaction between the two reclaimers, caused by the fact that they
cannot pass each other.

The remainder of the paper is organized as follows.  In Section
\ref{sec:background}, we provide background information on the
operation of a coal export terminal and the origin of the reclaimer
scheduling problem. In Section \ref{sec:lit}, we provide a brief
literature review. In Section \ref{sec:problem}, we introduce the
abstract model of the reclaimer scheduling problem that is the focus
of our research and we introduce a graphical representation of
schedules that will be used throughout the paper. In Sections
\ref{sec:without} and \ref{sec:with}, we present the analysis of a
number of variants of the reclaimer scheduling problem. In Section
\ref{sec:final}, we give some final remarks and discuss future
research opportunities.

\section{Background}
\label{sec:background}

The Hunter Valley Coal Chain (HVCC) refers to the inland portion of
the coal export supply chain in the Hunter Valley, New South Wales,
Australia, which is the largest coal export supply chain in the
world in terms of volume. Most of the coal mines in the Hunter
Valley are open pit mines. The coal is mined and stored either at a
railway siding located at the mine or at a coal loading facility
used by several mines. The coal is then transported to one of the
terminals at the Port of Newcastle, almost exclusively by rail. The
coal is dumped and stacked at a terminal to form stockpiles. Coal
from different mines with different characteristics is ``mixed'' in
a stockpile to form a coal blend that meets the specifications of a
customer. Once a vessel arrives at a berth at the terminal, the
stockpiles with coal for the vessel are reclaimed and loaded onto
the vessel. The vessel then transports the coal to its destination.
The coordination of the logistics in the Hunter Valley is
challenging as it is a complex system involving 14 producers
operating 35 coal mines, 27 coal load points, 2 rail track owners, 4
above rail operators, 3 coal loading terminals with a total of 8
berths, and 9 vessel operators. Approximately 1700 vessels are
loaded at the terminals in the Port of Newcastle each year. For a
more in-depth description of the Hunter Valley Coal Chain see Boland
and Savelsbergh (\cite{Boland}).

An important characteristic of a coal loading terminal is whether it
operates as a cargo assembly terminal or as a dedicated stockpiling
terminal. When a terminal operates as a cargo assembly terminal, it
operates in a ``pull-based'' manner, where the coal blends assembled
and stockpiled are based on the demands of the arriving ships.  When
a terminal operates as dedicated stockpiling terminal, it operates
in a ``push-based'' manner, where a small number of coal blends are
built in dedicated stockpiles and only these coal blends can be
requested by arriving vessels.  We focus on cargo assembly terminals
as they are more difficult to manage due to the large variety of
coal blends that needs to be accommodated.

Depending on the size and the blend of a cargo, the assembly may
take anywhere from three to seven days.  This is due, in part, to
the fact that mines can be located hundreds of miles away from the
port and getting a trainload of coal to the port takes a
considerable amount of time. Once the assembly of a stockpile has
started, it is rare that the location of the stockpile in the
stockyard is changed; relocating a stockpile is time-consuming and
requires resources that can be used to assemble or reclaim other
stockpiles. Thus, deciding where to locate a stockpile and when to
start its assembly is critical for the efficiency of the system.
Ideally, the assembly of the stockpiles for a vessel completes at
the time the vessel arrives at a berth (i.e., ``just-in-time''
assembly) and the reclaiming of the stockpiles commences
immediately. Unfortunately, this does not always happen due to the
limited capacities of the resources in the system, e.g., stockyard
space, stackers, and reclaimers, and the complexity of the stockyard
planning problem.

A seemingly small, but in fact crucial component of the planning
process is the scheduling of the reclaimers, because reclaimers tend
to be the constraining entities in a coal terminal (the reclaiming
capacity is substantially lower than the stacking capacity).

The characteristics of the reclaimer scheduling problems studied in
this paper are motivated by those encountered at a stockyard at one
of the cargo assembly terminals at the Port of Newcastle. At this
particular terminal, the stockyard has four pads, $A$, $B$, $C$, and
$D$, on which cargoes are assembled. Coal arrives at the terminal by
train. Upon arrival at the terminal, a train dumps its contents at
one of three dump stations. The coal is then transported on a
conveyor to one of the pads where it is added to a stockpile by a
stacker. There are six stackers, two that serve pad $A$, two that
serve pad $B$ and pad $C$, and two that serve pad $D$. A single
stockpile is built from several train loads over several days. After
a stockpile is completely built, it dwells on its pad for some time
(perhaps several days) until the vessel onto which it is to be
loaded is available at one of the berths. A stockpile is reclaimed
using a bucket-wheel reclaimer and the coal transferred to the berth
on a conveyor. The coal is then loaded onto the vessel by a
shiploader. There are four reclaimers, two that serve pad $A$ and
pad $B$ and two that serve pad $C$ and pad $D$. Both stackers and
reclaimers travel on rails at the side of a pad. Stackers and
reclaimers that serve that same pads cannot pass each other.

A brief overview of the events driving the cargo assembly planning
process is presented next.  An incoming vessel alerts the coal chain
managers of its pending arrival at the port. This announcement is
referred to as the vessel's nomination.  Upon nomination, a vessel
provides its estimated time of arrival ($ETA$) and a specification
of the cargoes to be assembled to the coal chain managers.  As coal
is a blended product, the specification includes for each cargo a
recipe indicating from which mines coal needs to be sourced and in
what quantities.  At this time, the assembly of the cargoes
(stockpiles) for the vessel can commence.  A vessel cannot arrive at
a berth prior to its $ETA$, and often a vessel has to wait until
after its $ETA$ for a berth to become available.  Once at a berth,
and once all its cargoes have been assembled, the reclaiming of the
stockpiles (the loading of the vessel) can begin.  A vessel must be
loaded in a way that maintains its physical balance in the water. As
a consequence, for vessels with multiple cargoes, there is a
predetermined sequence in which its cargoes must be reclaimed. The
goal of the planning process is to maximize the throughput without
causing unacceptable delays for the vessels.

For a given set of vessels arriving at the terminal, the goal is
thus to assign each cargo of a vessel to a location in the
stockyard, schedule the assembly of these cargoes, and schedule the
reclaiming of these cargoes, so as to minimize the average delay of
the vessels, where the delay of a vessel is defined to be the
difference between the departure time of the vessel (or equivalently
the time that the last cargo of the vessel has been reclaimed) and
the earliest time the vessel could depart under ideal circumstances,
i.e., the departure time if we assume the vessel arrives at its
$ETA$ and its stockpiles are ready to be reclaimed immediately upon
its arrival.

When assigning the cargoes of a vessel to locations in the
stockyard, scheduling their assembly, and scheduling their
reclaiming, the limited stockyard space, stacking rates, reclaiming
rates, and reclaimer movements have to be accounted for.

Since reclaimers are most likely to be the constraining entities in
the system, reclaimer activities need to be modeled at a fine level
of detail. That is all reclaimer activities, e.g., the reclaimer
movements along its rail track and the reclaiming of a stockpile,
have to be modeled in continuous time.

When deciding a stockpile location, a stockpile stacking start time,
and a stockpile reclaiming start time, a number of constraints have
to be taken into account: at any point in time no two stockpiles can
occupy the same space on a pad, reclaimers cannot be assigned to two
stockpiles at the same time, reclaimers can only be assigned to
stockpiles on pads that they serve, reclaimers serving the same pad
cannot pass each other, the stockpiles of a vessel have to be
reclaimed in a specified reclaim order and the time between the
reclaiming of consecutive stockpiles of a vessel can be no more than
a prespecified limit, the so-called continuous reclaim time limit,
and the reclaiming of the first stockpile of a vessel cannot start
before \textit{all} stockpiles of that vessel have been stacked. We focus on
some of the aspects of real world reclaimer scheduling as specified in Section 4.

The reclaiming of a stockpile using a bucket wheel reclaimer is
conducted in a series of long travel bench cuts. For each cut, the reclaimer moves along the whole
length of the stockpile with a fixed boom position. Then the boom is adjusted for the next cut, the
reclaimer turns around and moves along the stockpile in the opposite direction as indicated in
Figure~\ref{fig:reclaim}. The reclaiming process is fully automated and a typical stockpile is
reclaimed in three benches with approximately 55 cuts. In our simplified model we assume that a
stockpile is reclaimed while a reclaimer moves along it exactly once.

\begin{figure}[htb]
  \begin{minipage}[b]{.55\linewidth}
\centering
\begin{tikzpicture}
\tikzset{>=stealth'}
     \draw (0.8,0.4)--(-0.2,1)--(2,2.8);
     \draw (4.8,0.4)--(6,1)--(3.6,2.8);
     \draw (0.8,0.4)--(4.8,0.4);
     \draw (2,2.8)--(3.6,2.8);
      \draw[loosely dashed] (0.0,1.2)--(5.6,1.2);
      \draw[loosely dashed] (1,2)--(4.6,2);

      \draw [gray] (0.04,0.9332) arc (270:315:0.8);
      \draw [gray] (0.48,0.6664) arc (270:340:0.8);
      \draw [gray] (0.88,0.4) arc (270:360:0.8);
      \draw [gray] (1.28,0.4) arc (270:360:0.8);
      \draw [gray] (1.68,0.4) arc (270:360:0.8);
      \draw [gray] (2.08,0.4) arc (270:360:0.8);
      \draw [gray] (2.48,0.4) arc (270:360:0.8);
      \draw [gray] (2.88,0.4) arc (270:360:0.8);
      \draw [gray] (3.28,0.4) arc (270:360:0.8);
      \draw [gray] (3.68,0.4) arc (270:360:0.8);
      \draw [gray] (4.08,0.4) arc (270:360:0.8);
      \draw [gray] (4.48,0.4) arc (270:360:0.8);

      \draw [gray] (0.08,1.2) arc (270:349:1);
      \draw [gray] (0.56,1.2) arc (270:350:1);
      \draw [gray] (1,1.2) arc (270:350:1);
      \draw [gray] (1.4,1.2) arc (270:350:1);
      \draw [gray] (1.8,1.2) arc (270:350:1);
      \draw [gray] (2.2,1.2) arc (270:350:1);
      \draw [gray] (2.6,1.2) arc (270:350:1);
      \draw [gray] (3,1.2) arc (270:350:1);
      \draw [gray] (3.4,1.2) arc (270:350:1);
      \draw [gray] (3.8,1.2) arc (270:340:1.05);
      \draw [gray] (4.2,1.2) arc (270:330:1);
      \draw [gray] (4.6,1.2) arc (270:315:1);
      \draw [gray] (5,1.2) arc (270:300:1);

      \draw [thick] (1.04,2) arc (270:348:0.92);
      \draw [gray] (1.44,2) arc (270:352:0.92);
      \draw [gray] (1.84,2) arc (270:352:0.92);
      \draw [gray] (2.24,2) arc (270:352:0.92);
      \draw [gray] (2.64,2) arc (270:352:0.92);
      \draw [gray] (3.04,2) arc (270:339:0.92);
      \draw [gray] (3.44,2) arc (270:324:0.92);
      \draw [gray] (3.84,2) arc (270:308:0.92);

      \draw [->] (0.6,2.76) -- (1.70, 2.3);

     \node [ xshift=4.8cm, yshift = 2.6cm]{ {\small top bench }};
     \node [ xshift=6cm, yshift = 1.9cm]{ {\small middle bench }};
     \node [ xshift=6.4cm, yshift = 0.55cm]{ {\small bottom bench }};
     \node [ xshift=.8cm, yshift = 3.0cm]{ {\parbox{100pt}{$1^{\text{st}}$ cut top bench}}};
   \end{tikzpicture}
 \end{minipage}\hfill
\begin{minipage}[b]{.43\linewidth}
\centering
\begin{tikzpicture}[scale=3]
\tikzset{>=stealth'}
    \centering
     \draw (-0.05,0.0) rectangle  (0.55,1.2)  ;
      \draw[densely dashed, gray] (0.0,0.0) -- (0.0,1.2)  ;
      \draw[gray] (-0.02,0.025) -- (0.0,0) -- (0.02,0.025);
      \draw (0.05,0.0) -- (0.05,1.2)  ;
      \draw [gray] (0.0,-0.01) arc (180:360:0.05);
      \draw[gray] (0.08,-0.025) -- (0.1,0.0) -- (0.12,-0.025);
      \draw[densely dashed,  gray] (0.1,0.0) -- (0.1,1.2)  ;
      \draw[gray] (0.08,1.175) -- (0.1,1.2) -- (0.12,1.175);
      \draw (0.15,0.0) -- (0.15,1.2)  ;
      \draw [gray] (0.2,1.21) arc (0:180:0.05);
      \draw[gray] (0.18,1.225) -- (0.2,1.2) -- (0.22,1.225);
      \draw[densely dashed,  gray] (0.2,0.0) -- (0.2,1.2)  ;
      \draw[gray] (0.18,0.025) -- (0.2,0) -- (0.22,0.025);
      \draw (0.25,0.0) -- (0.25,1.2)  ;
      \draw [gray] (0.2,-0.01) arc (180:360:0.05);
      \draw[gray] (0.28,-0.025) -- (0.3,0.0) -- (0.32,-0.025);
      \draw[densely dashed,  gray] (0.3,0.0) -- (0.3,1.2)  ;
      \draw[gray] (0.28,1.175) -- (0.3,1.2) -- (0.32,1.175);
      \draw (0.35,0.0) -- (0.35,1.2)  ;
      \draw [gray] (0.4,1.21) arc (0:180:0.05);
      \draw[gray] (0.38,1.225) -- (0.4,1.2) -- (0.42,1.225);
      \draw[densely dashed,  gray] (0.4,0.0) -- (0.4,1.2)  ;
       \draw[gray] (0.38,0.025) -- (0.4,0) -- (0.42,0.025);
      \draw (0.45,0.0) -- (0.45,1.2)  ;
      \draw [gray] (0.4,-0.01) arc (180:360:0.05);
      \draw[gray] (0.48,-0.025) -- (0.5,0.0) -- (0.52,-0.025);
      \draw[densely dashed,  gray] (0.5,0.0) -- (0.5,1.2)  ;
      \draw[gray] (0.48,1.175) -- (0.5,1.2) -- (0.52,1.175);
      \draw[rounded corners,->] (0.9,0.45) -- (0.9,-0.04) -- (0.52,-0.04);
      \draw[rounded corners,->] (0.9,0.74) -- (0.9,1.24) -- (0.42,1.24);
      \node [ xshift=2.5cm, yshift = 1.8cm]{ {\parbox{20pt}{{\small turnaround points}}}};
\end{tikzpicture}
\end{minipage}
\caption{A cross section (left) and an aerial view (right) of a stockpile, illustrating the long travel bench cut.} \label{fig:reclaim}
\end{figure}
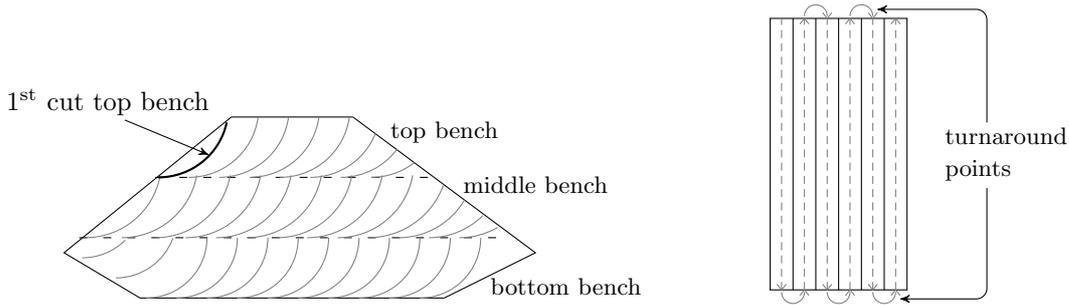

\section{Literature Review}
\label{sec:lit}

The scheduling of bucket wheel reclaimers in a coal terminal has
some similarities to the scheduling of quay and yard cranes in
container terminals. When a vessel arrives at a container terminal,
import containers are taken off the vessel and mounted onto trucks
by quay cranes and then unloaded by yard cranes at various locations
in the yard for storage. In the reverse operation, export containers
are loaded onto trucks by yard cranes at the yard, are off-loaded at
the quay, and loaded onto a vessel by quay cranes. Both reclaimers
and cranes move along a single rail track and therefore cannot pass
each other, and handle one object at a time (a stockpile in the case
of a bucket wheel reclaimer and a container in the case of quay or
yard cranes). Furthermore, in coal terminals as well as container
terminals maximizing throughput, i.e., minimizing the time it takes
to load and/or unload vessels, is the primary objective, and
achieving a high throughput depends strongly on effective scheduling
of the equipment.

However, there are also significant differences between the
scheduling of bucket wheel reclaimers in a coal terminal and the
scheduling of quay and yard cranes in the container terminals. The
number of containers that has to be unloaded from or loaded onto a
vessel in a container terminal is much larger than the number of
cargoes that has to be loaded onto a vessel in a coal terminal. As a
result, the sequencing of operations (e.g., respecting precedence
constraints between the unloading/loading of containers) is much
more challenging and a primary focus in crane scheduling problems.
On the other hand, containers and holds of a vessel has fixed length
whereas stockpiles can have an arbitrary length. As a consequence,
the time between the completion of one task and the start of the
next task (to account for any movement of equipment) can only take
on a limited number of values in crane scheduling problems at a
container terminal, especially in quay crane scheduling, and can
take on any value in reclaimer scheduling problem.

Below, we give a brief overview of the literature on scheduling of
quay and yard cranes in container terminals.

Most of the literature on quay crane scheduling focuses on a static
version of the problem in which the number of vessels, berth
assignments, and the quay crane assignments for each vessel are
known in advance for the entire planning horizon.

\cite{RePEc:eee:transb:v:23:y:1989:i:3:p:159-175} studies the
optimal assignment quay cranes (QCs) to the holds of multiple
vessels. In the considered setting, a task refers to the loading or
unloading of a single hold of a vessel and any precedence
constraints between the containers of a single hold are not
accounted for. Crane movement time between different holds is
assumed to be negligible. The fact that QCs cannot pass each other
is not explicitly considered. A mixed integer programming
formulation minimizing the (weighted) sum of departure times of the
vessels is presented. Some heuristics for a dynamic variant, in
which vessel arrival times are uncertain are also proposed.
\cite{peterkofsky1990branch} develop a branch-and-bound algorithm
for the same problem that is able to solve larger problem instances.

\cite{kim2004crane} studies the scheduling of multiple QCs
simultaneously operating on a single vessel taking into account
precedence constraints between containers and no-passing constraints
between QCs. In their setting, a task refers to a ``cluster'', where
a cluster represents a collection of adjacent slots in a hold and a
set of containers to be loaded into/unloaded from these slots. The
objective is to minimize a weighted combination of the load/unload
completion time of the vessel and the sum of the completion times of
the QCs, with higher weight for the load/unload completion time. The
sum of the completion times of the QCs is included in the objective
to ensure that QCs will be available to load/unload other vessels as
early as possible. The problem is modeled as an parallel machine
scheduling problem. A MIP formulation is presented and
branch-and-bound algorithm is developed for its solution. A greedy
randomized adaptive search procedure (GRASP) is developed to handle
instances where the branch-and-bound algorithm takes too much time.
\cite{NAV:NAV20121} have strengthened the MIP formulation of
\cite{kim2004crane} by deriving sets of valid inequalities. They
propose a branch-and-cut algorithm to solve the problem to
optimality.

\cite{doi:10.1080/03052150600691038} also study the scheduling of
multiple QCs for a single vessel with the objective of minimizing
the loading/unloading time. In their setting, a task refers to the
loading/unloading of a single hold, but {\it no-passing} constraints
for the QCs are explicitly taken into account. A heuristic that
partitions the holds of the vessel into non-overlapping zones and
assigns a QC to each zone is proposed. The optimal zonal partition
is found by dynamic programming. \cite{ZhuYandLiA2006} consider the
same setting, but formulate a mixed integer programming model and
propose a branch-and-bound algorithm for its solution (which
outperforms CPLEX on small instances). A simulated annealing
algorithm is developed to handle larger instances.

\cite{NAV:NAV20108} study a dynamic variant of the problem
considered by \cite{RePEc:eee:transb:v:23:y:1989:i:3:p:159-175},
which accounts for movement time of QCs and that QCs cannot pass
each other, and enforces a minimum separation between vessels. By
limiting the movement of QCs to be unidirectional, i.e., either from
``stern to bow'' or from ``bow to stern'' when loading/unloading a
vessel, it becomes relatively easy to handle the no-passing
constraint and to enforce a minimum separation. The objective is to
minimizing the time to load/unload multiple vessels. A heuristic is
proposed for the solution of the problem. \cite{NAV:NAV20189} study
a static, single vessel setting and also adopt a unidirectional
movement restriction for QCs to handle the no-passing constraint.
When a task refers to the loading/unloading of a hold, it is shown
that there always exists an optimal schedule among the
unidirectional schedules. Since a unidirectional schedule can be
easily obtained for a given task-to-QC assignment, a simulated
annealing algorithm is proposed to explore the space of task-to-QC
assignments.

For a more detailed, more comprehensive survey of crane scheduling,
the reader is referred to \cite{Bierwirth2010615}.

Recently \cite{legato2012modeling} presented a refined version of an
existing mixed integer programming formulation of the QC scheduling
problem incorporating many real-life constraints, such as QC service
rates, QC ready and due times, QC no-passing constraints, and
precedence constraints between groups of containers. Unidirectional
QC movements can be captured in the model as well. The best-known
branch-and-bound algorithm (i.e., from \cite{bierwirth2009fast}) is
improved with new lower bounding and branching techniques.

Yard crane (YC) scheduling is another critical component of the
efficient operation of a container terminal. The yard is typically
divided into several storage blocks and YCs are used to transfer
containers between these storage blocks and trucks (or prime
movers). YCs are either rail mounted or rubber wheeled. The rubber
wheeled YCs have the flexibility to move from one yard block to
another while rail mounted YCs are restricted to work on a single
yard block.

\cite{young1999routing} study the problem of minimizing the sum of
the set up times and the travel times of single YC. Their mixed
integer programming model determines the optimal route for the YC as
well as the containers to be picked up by the YC in each of the
storage blocks. Because of the excessive solve times of the mixed
integer program for large instances, two heuristics are proposed in
\cite{NAV:NAV10076}.

\cite{zhang2002dynamic} study the problem of scheduling a set of YCs
covering a number of storage blocks so as to minimize the total
tardiness (or delays). A mixed integer program model determines the
number of YCs to be deployed in each storage block in each planning
period and a lagrangian relaxation based heuristic algorithm is
employed to find an optimal solution.

\cite{Ng2005263, doi:10.1080/03052150500323849} studied the problem
of scheduling a YC that has to load/unload a given set of containers
with different ready times. The objective is to minimize the sum of
waiting times. The problem is formulated as mixed integer
programming problem, and a branch-and-bound algorithm is developed
for its solution. \cite{RePEc:eee:ejores:v:164:y:2005:i:1:p:64-78}
expands the study to the scheduling of multiple YCs in order to
minimize the total loading time or the sum of truck waiting times.
Because more than one YC can serve a storage block, a no-passing
constraint has to be enforced. A dynamic programming based heuristic
is proposed to solve the problem and a lower bound is derived to be
able to assess the quality of the solutions produced by the
heuristic.

\cite{petering2009effect} investigates how the width of the storage
blocks affects the efficiency of the operations at a container
terminal, given that the number of prime movers, the number of YCs,
the service rates of the YCs remain unchanged. A simulation study
indicates that the optimal storage block width ranges form 6 to 12
rows, depending on the size and shape of the terminal and the annual
number of containers handled by the terminal. Their experimental
results further show that restrictive YC mobility due to more
storage blocks gives better performance than a system with greater
YC mobility.

Only recently, researchers have started to examine the scheduling of
equipment in bulk goods terminals.
\cite{Hu:2012:SSD:2441300.2441305} consider the problem of
scheduling the stacker and reclaiming at a terminal for iron ore. It
is assumed that all tasks (stacking and reclaiming operation) are
known at the start of the planning horizon. The terminal
configuration is such that a single stracker/reclaimer serves two
pads, so there is no need to consider a no-passing constraint. A
sequence dependent set-up time, as a result of the movement of the
stacker/reclaimer between two consecutive tasks, is considered. A
mixed integer programming formulation is presented and a genetic
algorithm is proposed. \cite{6565191} study the problem of
scheduling reclaimers at an iron ore import terminal serving the
steel industry. Each reclaim task has a release date and due date,
and the goal is to minimize the completion of a set of reclaim
tasks. The terminal configuration is not specified and no mention is
made of no-passing constraints.  A mixed integer programming
formulation is presented and a Benders decomposition algorithms is
proposed for its solution.

For the single reclaimer case, the fact that the reclaimer has to travel along every stockpile gives it a traveling salesman
flavor. The basic traveling salesman problem is trivial when all nodes are on a line, but it becomes
difficult when additional constraints, such as time windows, are added (\cite{psaraftis1990routing},
\cite{tsitsiklis1992special}). In our problem, the single reclaimer case becomes a traveling salesman
problem on a line with prescribed edges: the nodes are the endpoints of the stockpiles, and for
every stockpile the edge connecting its endpoints has to be traversed in the solution.
   
\section{Problem Description}
\label{sec:problem}

The practical importance of reclaimer scheduling at a coal terminal
prompted us to study a set of simplified and idealized reclaimer
scheduling problems. These simplified and idealized reclaimer
scheduling problems turn out to lead to intriguing and, in some
cases, surprisingly challenging optimization problems.

We make the following basic assumptions:
\begin{itemize}
\item
There are two reclaimers $R_0$ and $R_1$ that serve two pads; one on
either side of the reclaimers.
\item
Reclaimer $R_0$ starts at one end of the stock pads and Reclaimer
$R_1$ starts at the other end of the stock pads.
\item  Stockpiles are reclaimed by one of the two reclaimers $R_0$ and $R_1$ that move forward and backward along a single rail in the aisle between the two pads.
\item
The reclaimers cannot pass each other but they can go along side by side.
\item
The reclaimers are identical, i.e., they have the same reclaim speed
and the same travel speed.
\item
Each stockpile has a given length and a given reclaim time (derived
from the stockpile's size and the reclaim speed of the reclaimers).
\item
When a stockpile is reclaimed, it has to be traversed along its
entire length by one of the reclaimers, either from left to right or
from right to left.
\item
After reclaiming the stockpiles, the reclaimers need to return to
their original position.
\end{itemize}

Using these basic assumptions, we define a number of variants of the
reclaimer scheduling problem:
\begin{itemize}
\item
Both reclaimers are used for the reclaiming of stockpiles or only
one reclaimer ($R_0$) is used to reclaim of stockpiles.
\item
The positions of the stockpiles on the pads are given or have to be
decided. If the positions on the pad are given, it is implicitly
assumed that the stockpile positions are feasible, i.e., that
stockpiles on the same pad do not overlap. If the positions have to
be decided, then both the pad and the location on the pad have to be
decided for each stockpile.
\item
Precedence constraints between stockpiles have to be observed or
not. When precedence constraints have to be observed, the reclaim
sequence of the stockpiles is completely specified. That is, the
precedence constraints form a chain involving all the stockpiles.
\end{itemize}

The goal in all settings is to reclaim all stockpiles and to
minimize the time at which both reclaimers have returned to their
original positions.

We use the following notation.  When the positions of the stockpiles
are given, we have two sets $J_1 = \{1, \ldots, n_1\}$ and $J_2 =
\{n_1+1, \ldots, n\}$ of stockpiles located on the two identical and
opposite pads $P_1$ and $P_2$. We represent a pad by segment
$[0,L]$, with $L$ being the length of the pad. Stockpile $j \in J_1$
occupies a segment $[l_j, r_j]$ on pad $P_1$ ($0 \leqslant l_j < r_j \leq
L$). Similarly, stockpile $j \in J_2$ occupies a segment $[l_j,
r_j]$ on pad $P_2$. Stockpiles cannot overlap on the same pad and we
assume that $r_j \leqslant l_{j+1}$ for $j \in \{1, \ldots, n_1-1\}$ and
for $j \in \{n_1+1, \ldots, n-1\}$ and that $r_j, l_j$ for $j \in
\{1, \ldots, n\}$ and $L$ are integers.

Reclaimers start and finish at the two endpoints of the rail,
reclaimer $R_0$ at point $0$ and reclaimer $R_1$ at point $L$, and
can reclaim stockpiles on either one of the pads (we assume there is
no time required to switch from one pad to the other), but they
cannot pass each other. A reclaimer can stay idle or move forward
and backward at the given speed $s$. When reclaiming a stockpile the
speed cannot be larger than $s$. Without loss of generality, we assume that the
reclaim speed is equal to 1 and the travel speed is $s\geqslant 1$. Thus, the time necessary to reclaim
stockpile $j$ is $p_j = r_j - l_j$, the length of stockpile $j$. 

When the positions of the stockpiles are not given but have to be
decided, we are given the length $p_j \in \Z$ of each stockpile $j$
($0 < p_j \leqslant L$) and we have to decide the pad on which to
locate the stockpile (either $P_1$ or $P_2$), the position $(l_j,
r_j)$ of that pad that the stockpile will occupy, and the reclaimer
schedules. 

\subsection{Graphical representation of a feasible schedule}

The schedule $H_k$ of reclaimer $R_k$ ($k=0,1$) with makespan $C_k$
can be described by a piecewise linear function representing the
position of the reclaimer on the rail as a function of time. Such a 
function can be represented by an ordered list of breakpoints
\[
B_k=\left(\left(t_i^{(k)},x^{(k)}_i\right)\in \R^{+}\times [0,L]\ :\ i=0,1,\ldots,q_k \right)
\]
in the time-space Cartesian plane, where $0=t_0^{(k)}<t_1^{(k)}<\cdots<t_{q_k}^{(k)}=C_k$,
$x_0^{(k)}=x_{q_k}^{(k)}=kL$. For $t\in[t_i^{(k)},t^{(k)}_{i+1}]$ we have
\[H_k(t)=x_i^{(k)}+\frac{x^{(k)}_{i+1}-x^{(k)}_{i}}{t^{(k)}_{i+1}-t^{(k)}_{i}}\left({t-t^{(k)}_{i}}\right),\]
and the slope between consecutive points $(t^{(k)}_i,x^{(k)}_i)$ and $(t^{(k)}_{i+1},x^{(k)}_{i+1})$ is either:
\begin{itemize}
\item $0$, the reclaimer is idle;
\item $+s$, the reclaimer is moving to the right without processing any
stockpile;
\item $-s$, the reclaimer is moving to left without processing any
stockpile;
\item $+1$, the reclaimer is moving to right while processing a stockpile on either one of the two
pads; and
\item $-1$, the reclaimer is moving to left while processing a stockpile on either one of the two
pads.
\end{itemize}
This is illustrated in Figure~\ref{fig:example}.
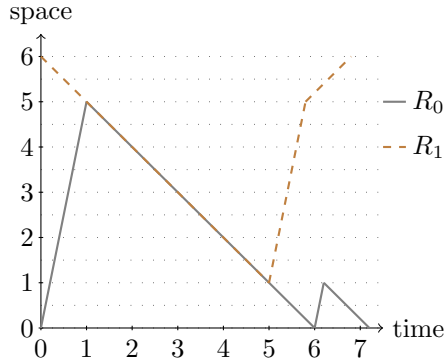
\begin{figure}[htb]
\centering
\begin{tikzpicture}[scale =0.3,
      declare function={
        func(\x) = (\x)/(4);
      }]
\pgfkeys{
     /pgf/number format/precision=1,
    /pgf/number format/fixed zerofill=true,
    /pgf/number format/fixed
}

    \draw[->] (0.0,0.0) -- (15.0,0.0) node[anchor =west] {{time}}  ;
    \draw[->] (0.0,0.0) -- (0.0,13.0) node[anchor = south] {{space}} ;
    \foreach \y in {0,...,6}
        {\draw(-.1,2*\y)--(.1,2*\y) node[left, xshift = -0.02] {{$\y$}};}
    \foreach \x in {0,...,7}
        {\draw (2*\x, 0.1) -- (2*\x, -0.1) node[below, yshift = -0.02] {{$\x$}};;
        }
    \foreach \y in {0,...,12}
        \draw[very thin, loosely dotted] (0.0,\y) -- (15,\y) ;

            \coordinate (SPLR) at (0.0,0.0);
            \coordinate (SPC) at ({(10.0-0.0)/5},10);
            \draw[gray, thick] (SPLR) -- (SPC);
            \coordinate (EPC) at ($({(10-0)/1}, 0) +( SPC |- 0,0.0)$);
            \draw[gray, thick] (SPC) -- (EPC);
            \coordinate (SPA) at ($({(2-0)/5},2) +( EPC |- 0,0)$);
            \draw[gray, thick] (EPC) -- (SPA);
            \coordinate (EPA) at ($({(2-0)/1}, 0) +( SPA |- 0,0.0)$);
            \draw[gray, thick] (EPA) -- (SPA);
            \draw[gray, thick] (15,10) -- (16,10);
            \node at (17,10) {$R_0$};
            \coordinate (SPRR) at (0.0,12);
            \coordinate (EPD) at ($({(12-2)/1},2) +( SPRR |- 0,0)$);
            \draw[brown, thick, dashed] (SPRR) -- (EPD);
            \coordinate (SPB) at ($({(10-2)/5}, 10) +( EPD |- 0,0.0)$);
            \draw[brown, thick, dashed] (EPD) -- (SPB);
            \coordinate (EPB) at ($({(12-10)/1}, 12) +( SPB |- 0,0)$);
            \draw[brown, thick, dashed] (SPB) -- (EPB);
            \draw[brown, thick, dashed] (15,8) -- (16,8);
            \node at (17,8) {$R_1$};
  \end{tikzpicture}
\caption{Reclaimer movement in time-space.}\label{fig:example}
\end{figure}
A pair ($H_0$, $H_1$) of reclaimer schedules is feasible if:
\begin{enumerate}
\item the two functions $H_0$ and $H_1$ satisfy the inequality
$H_1(t)\geqslant H_0(t), \forall t\geqslant 0$ (the reclaimers do not pass
each other);

\item each interval $[l_j,r_j]$ is traversed at least once at speed 1 (either from left to right or from right to
left); and

\item all other constraints are satisfied, e.g., precedence constraints between
stockpiles.
\end{enumerate}

The makespan of a feasible schedule $(H_0,H_1)$ is $C = \max(C_0, C_1)$. Next, we analyze a number of variants of the reclaimer schedule
problem. We start by considering variants in which the positions of
the stockpiles are given, which means only the schedules of the
reclaimers have to be determined. This is followed by considering
variants in which the positions of the stockpiles are not given, but
have to be determined, which means that both the stockpile positions
and the reclaimer schedules have to be determined.

\section{Reclaimer Scheduling Without Positioning Decisions}\label{sec:without}

\subsection{No precedence constraints}

\subsubsection{Single reclaimer}

For variants with a single reclaimer, we assume that the active
reclaimer is reclaimer $R_0$ with initial position $x_0 = 0$, and
that the schedule of reclaimer $R_1$ is $H_1(t) = L$ for $t
\geqslant 0$.

We start by observing that the optimal makespan $C^*$ cannot
be less than twice the time it takes to reach the farthest stockpile
endpoint $r= \max\{r_{n_1}, r_{n}\}$ at speed $s$ plus the
additional time to process the stockpiles, i.e.,
\begin{equation}\label{eq:lower_bound}
C^* \geqslant 2\frac{r}{s} + \sum_{j=1}^{n} \left[(r_j-l_j)-\frac{r_j-l_j}{s} \right]
\end{equation}
Next, we consider the \textbf{Forward-Backward (FB)} algorithm given in Algorithm~\ref{alg:fb},
where we omit the index $k$ for the reclaimer which is understood to be 0.
\begin{algorithm}[htb]
\caption{\textbf{Forward-Backward}} \label{alg:fb}
\begin{tabbing}
  ....\=....\=....\=............................... \kill \\
\textbf{Input:} $n_1$ and $ \{(l_j,r_j) | \ j=1,\ldots,n \} $ \\[1ex]
Initialize $q=0$ and $B=\left(\,(0,0)\,\right)$\\
\textbf{for} $j=1,\ldots,n_1$ \textbf{do}\\
\> \textbf{if} $l_j\neq x_{q}$ \textbf{then} add $\left(t_{q}+(l_j-x_{q})/s,\,l_j\right)$ to $B$ and increase $q$ by 1\\
\> Add $\left(t_{q}+(r_j-l_j),\,r_j\right)$ to $B$ and increase $q$ by 1\\
\textbf{for} $j=n,n-1,\ldots,n_1+1$ \textbf{do}\\
\> \textbf{if} $r_j\neq x_{q}$ \textbf{then} add $\left(t_{q}+\left\lvert r_j-x_{q}\right\rvert/s,\,r_j\right)$ to $B$ and increase $q$ by 1\\
\> Add $\left(t_{q}+(r_j-l_j),\,l_j\right)$ to $B$ and increase $q$ by 1\\
\textbf{if} $l_{n_1+1}\neq 0$ \textbf{then} add $\left(t_{q}+l_{n_1+1}/s,\,0\right)$ to $B$ and increase $q$ by 1\\[1ex]
\textbf{Output:} $C=t_{q}$ and $B$
\end{tabbing}
\end{algorithm}

\begin{thm}\label{thm:FB-optimal}
Algorithm~\ref{alg:fb} computes an optimal schedule for a single reclaimer in time $O(n)$.
\end{thm}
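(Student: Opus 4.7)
The plan is to prove optimality by exhibiting that the schedule output by Algorithm~\ref{alg:fb} attains the lower bound in~\eqref{eq:lower_bound}; the $O(n)$ bound on the running time is then immediate, since the algorithm consists of two loops of combined length $n$ doing a constant amount of work per stockpile.

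First I would verify feasibility. In the forward loop the reclaimer sweeps left-to-right across $P_1$, moving at speed $s$ between stockpiles (this is well defined since the assumption $r_j \leqslant l_{j+1}$ guarantees $l_j \geqslant x_q$ when stockpile $j$ of $J_1$ is about to be processed) and traversing each $[l_j,r_j]$ at the reclaim speed $1$; the backward loop does the symmetric thing on $P_2$, and the final instruction brings the reclaimer back to the origin. Hence every stockpile is reclaimed exactly once and the schedule is admissible.

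The central step is to compute the makespan. Set $r = \max\{r_{n_1}, r_n\}$ and $S = \sum_{j=1}^{n}(r_j - l_j)$. I would show that, over the three phases (forward across $P_1$; reposition to $r_n$ and then backward across $P_2$; return to $0$), the total distance travelled by the reclaimer is exactly $2r$. This is the main computation: it splits into two sub-cases depending on whether $r_n \geqslant r_{n_1}$ or $r_n < r_{n_1}$, but in both the sum of the segment lengths $r_{n_1} + |r_n - r_{n_1}| + (r_n - l_{n_1+1}) + l_{n_1+1}$ telescopes to $2r$. Of these $2r$ units, exactly $S$ are traversed at the reclaim speed $1$ and the remaining $2r - S$ at the travel speed $s$, so the makespan equals
\[
S + \frac{2r - S}{s} = \frac{2r}{s} + \left(1 - \frac{1}{s}\right) S,
\]
which is precisely the right-hand side of~\eqref{eq:lower_bound}.

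The main obstacle, such as it is, lies in the case distinction for the repositioning move between the two pads and in handling the degenerate cases $J_1 = \emptyset$, $J_2 = \emptyset$, and $l_{n_1+1} = 0$. Each of these simply removes one of the terms in the distance sum without changing the identity total $=2r$, so the formula above continues to hold and optimality follows.
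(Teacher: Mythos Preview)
Your proof is correct and follows the same approach as the paper: show that the makespan produced by Algorithm~\ref{alg:fb} coincides with the lower bound~\eqref{eq:lower_bound}, and observe that the running time is $O(n)$. The paper's own proof is a one-liner asserting exactly this; your version simply fills in the feasibility check and the distance computation $r_{n_1} + |r_n - r_{n_1}| + (r_n - l_{n_1+1}) + l_{n_1+1} = 2r$ that the paper leaves implicit.
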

\begin{proof}
The schedule that is returned by Algorithm~\ref{alg:fb} is optimal
because by construction its makespan equals the lower bound given
by~(\ref{eq:lower_bound}). The algorithm runs in time $O(n)$.
\end{proof}

\subsubsection{Two reclaimers}

Unfortunately, optimally exploiting the additional flexibility and
extra opportunities offered by a second reclaimer is not easy as
we have the following theorem.

\begin{thm}\label{thm:np_hard_1}
Determining an optimal schedule for two reclaimers when the
positions of the stockpiles are given and the stockpiles can be
reclaimed in any order is NP-hard.
\end{thm}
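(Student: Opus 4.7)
My plan is to reduce from \textsc{Partition}: given positive integers $a_1,\dots,a_n$ with $\sum_i a_i=2S$, decide whether some subset sums to $S$. I would construct, in polynomial time, a two-reclaimer scheduling instance---pad length $L$, positions of ``item'' stockpiles on $P_1$ and $P_2$, and possibly some auxiliary anchor stockpiles---together with a threshold $T$, and prove that a schedule with makespan at most $T$ exists iff the \textsc{Partition} instance is a yes-instance.

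The heart of the construction is one item stockpile of length $a_i$ per partition element, positioned so that assigning the item to either reclaimer contributes exactly $a_i$ (beyond a fixed, partition-independent overhead) to that reclaimer's total. I would concentrate the item stockpiles in a central region, using both $P_1$ and $P_2$ to respect the same-pad non-overlap condition, and add anchor stockpiles forcing each reclaimer to travel essentially the full length of the pad regardless of which items it is given. With this setup, reclaimer $R_k$ should incur time $\Theta+(1-1/s)\sum_{i\in I_k}a_i$ for a constant $\Theta$ depending only on the anchors and $L$, so the makespan is minimised precisely when $(I_0,I_1)$ is a balanced partition, at the threshold $T=\Theta+(1-1/s)S$. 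Correctness would then be a direct check in two directions: from a balanced partition I exhibit an explicit feasible schedule matching $T$ (each reclaimer sweeps through its assigned items in spatial order and returns, in the spirit of Algorithm~\ref{alg:fb}), while any schedule of makespan at most $T$ induces a partition of the items by which reclaimer traverses each, and a lower-bound argument analogous to~(\ref{eq:lower_bound}), combined with the travel equalisation, forces the two partial sums of the $a_i$ to be equal.

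The main obstacle I foresee is the no-passing constraint. A naive layout that places all item stockpiles consecutively on a single pad makes the reduction useless, because no-passing then forces $R_0$'s items to be a prefix and $R_1$'s items a suffix, reducing the problem to a trivial prefix-sum search solvable in $O(n)$ time. The construction therefore has to exploit the two pads (and possibly short gaps allowing brief side-by-side manoeuvres) so that \emph{every} subset $I_0\subseteq\{1,\dots,n\}$ is realisable as a feasible assignment whose travel cost is independent of the particular subset chosen. Making this rigorous---verifying that the achievable time for $R_k$ really reduces to $\Theta+(1-1/s)\sum_{i\in I_k}a_i$ for any subset arising from the gadget, and that no clever use of side-by-side motion or reclaimer swapping can beat the \textsc{Partition} lower bound when the input is a no-instance---is the delicate step that I expect will absorb most of the effort.
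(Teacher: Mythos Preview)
Your overall strategy---reduction from \textsc{Partition}, with one item stockpile per element and a threshold that is hit iff the partition is balanced---matches the paper's. But you have misdiagnosed the no-passing obstacle, and as a result your proposal heads toward a much more complicated construction than is needed.

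You assert that placing all item stockpiles consecutively on a single pad ``makes the reduction useless, because no-passing then forces $R_0$'s items to be a prefix and $R_1$'s items a suffix''. This is false. The no-passing constraint is $H_1(t)\geqslant H_0(t)$ for all $t$; it does \emph{not} force a spatial prefix/suffix split. If $R_1$ enters and leaves the item region before $R_0$ ever arrives there, any subset assignment is feasible. The paper exploits exactly this: it sets the travel speed to $s=5B$ (a value depending on the instance, which is allowed since $s$ is part of the input), places all $m$ item stockpiles consecutively in $[2B,4B]$ on pad $P_1$, and flanks them with two anchor stockpiles of length $2B$ at $[0,2B]$ and $[4B,6B]$. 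Reclaimer $R_1$ first dashes from $L=6B$ to $2B$ (taking time $4/5$), then reclaims its assigned items left-to-right on the way back; it exits $[2B,4B]$ by time $B+1$. Reclaimer $R_0$ spends time $2B$ reclaiming the left anchor and only enters $[2B,4B]$ at time $2B>B+1$. Thus no clash occurs regardless of which subset is chosen, and the makespan threshold $3B+1$ is attainable iff the $a_i$ split evenly.

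So the ``delicate step'' you anticipate---engineering a two-pad gadget so that every subset is realisable at equal travel cost---is unnecessary. The whole difficulty evaporates once you let $s$ grow with the instance. Your two-pad approach might be made to work, but you have not given the construction, and the paper's single-pad argument is both simpler and complete.
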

\begin{proof}
We provide a transformation from \textsc{Partition}. An instance is
given by positive integers $a_1,\ldots,a_m$ and $B$ satisfying
$a_1+\cdots+a_m=2B$ and the problem is to decide if there is an
index set $I\subseteq\{1,\ldots,m\}$ with $\sum_{i\in I}a_i=B$. We
reduce this to the following instance of the reclaimer scheduling
problem. The length of the pad is $L=6B$, the travel speed is
$s=5B$, and we have $n=m+2$ stockpiles which are all placed on pad
$P_1$, i.e., $n_1=n$. The stockpile lengths  are $a_i$ ($i = 1,
\ldots, m$) for the first $m$ stockpiles and the two additional
stockpiles have both length $2B$. The positions of the stockpiles on
the pad are given by $(l_{m+1}, r_{m+1}) = (0,2B)$, $(l_{m+2},
r_{m+2}) = (4B,6B)$, and $(l_i, r_i) = (2B + \sum_{j=1}^{i-1} a_i,
2B + \sum_{j=1}^i a_i)$ for $i = 1, \ldots, m$. We claim that a
makespan $\leqslant 3B+1$ can be achieved if and only if the
\textsc{Partition} instance is a YES-instance. Clearly, if there is
no $I$ with $\sum_{i\in I}a_i=B$, we cannot divide the stockpiles
between the two reclaimers in such a way that the total stockpile
length for both reclaimers is $3B$, which implies that one of the
reclaimers has a reclaim time of at least $3B+1$, hence its makespan
is larger than $3B+1$ (as the reclaimer also has to travel without
reclaiming a stockpile). Conversely, if there is an $I$ with
$\sum_{i\in I}a_i=B$, we can achieve a makespan of less than or
equal to $3B+1$ as follows. Reclaimer $R_0$ moves from $x=0$ to
$x=4B$ while reclaiming (from left to right) stockpile $m+1$ and the
stockpiles with index in $I$, and then it returns to its start point
at time $3B+1$. Reclaimer $R_1$ moves from $x=L$ to $x=2B$ without
reclaiming anything, and then it moves back to $x=L$, reclaiming
(from left to right) the stockpiles with index in
$\{1,\ldots,m\}\setminus I$ and stockpile $m+2$. There is no
clashing because the region that is visited by both reclaimers is
the interval $[2B,4B]$, and reclaimer $R_0$ enters this interval at
time $2B$ and leaves it at time $2B+3/5$, while reclaimer $R_1$
enters at time $2/5$ and leaves at time $B+1$.
\end{proof}

To be able to analyze the quality of schedules for two reclaimers,
we start by deriving a lower bound. For this purpose, we allow
preemption, i.e., we allow a stockpile to be split and be processed
either simultaneously or at different times by any of the two
reclaimers. Let
\begin{align}
  S_1 &= \bigcup\limits_{j=1}^{n_1}[l_j,r_j]&&\text{ and}& S_2 &= \bigcup\limits_{j=n_1+1}^{n}[l_j,r_j]\label{eq:stockpiles}
\end{align}
be the subsets of $[0,L]$ that represent occupied space on pads
$P_1$ and $P_2$, respectively. Furthermore, let
\begin{align*}
Q_1 &= S_1 \triangle S_2,& Q_1 &= S_1 \cap S_2,&& \text{and}& E &=
[0,L] \setminus (S_1 \cup S_2)
\end{align*}
be the subsets of $[0,L]$ with stockpiles on one side, with a
stockpile on both sides, and with no stockpile on either side,
respectively. Note that $E$ is a union of finitely many pairwise
disjoint intervals, say
\begin{equation}\label{eq:empty_area}
E = [a_1, b_1] \cup [a_2, b_2] \cup \cdots \cup [a_r, b_r].
\end{equation}

For a subset $X \subseteq[0,L]$ let $\ell(X)$ denote the (total)
length of $X$. The set $Q_2$ has to be traversed twice with speed 1
and the set $Q_1$ has to be traversed once with speed 1 and once
with traveling speed $s$, hence $C_1+C_2\geqslant 2\ell(Q_2) +
\ell(Q_1)+ \ell(Q_1)/s$, which implies the lower bound
\[C=\max\{C_0,\,C_1\}\geqslant\frac12\left[2\ell(Q_2) + \ell(Q_1)+ \ell(Q_1)/s\right].\]

We can improve this bound by taking into account the set $E$. Note
that at most one of the intervals in the
partition~(\ref{eq:empty_area}) can contain points that are not
visited by any reclaimer, because otherwise the stockpiles between
two such points are not reclaimed. Now we consider two cases.
\begin{description}
\item[Case 1.]
If every point of the set $E$ is visited, then the set $E$ is
traversed (at least) twice with speed $s$, so in this case the
makespan $C$ is at least
\[K_0 = \frac12 \left[ 2\ell(Q_2) + \ell(Q_1) + \ell(Q_1)/s + 2\ell(E)/s \right].\]

\item[Case 2.]
If some point in the interval $[a_i,b_i]$, $i \in \{1, \ldots, r\}$,
is not visited, then everything left of $a_i$ is reclaimed by $R_0$,
while everything right of $b_i$ is reclaimed by $R_1$, so in this
case the makespan $C$ is at least
\[K_i = \max \left\{ 2\ell(Q_2^-) + \ell(Q_1^-) +\frac{\ell(Q_1^-)}{s} + 2\frac{\ell(E^-)}{s},\ 2\ell(Q_2^+) + \ell(Q_1^+) +\frac{\ell(Q_1^+)}{s} + 2\frac{\ell(E^+)}{s} \right\} ,\]
where $Q_2^- = Q_2 \cap[0, a_i]$, $Q_2^+ = Q_2 \cap [b_i, L]$, and
similarly for $Q_1$ and $E$.
\end{description}
\begin{thm}\label{thm:preempt_opt}
The optimal makespan for a preemptive schedule equals
\[K^*=\min\{K_i\ :\ i=0,1\ldots,r\},\]
and an optimal schedule can be computed in linear time.
\end{thm}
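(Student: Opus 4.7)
My plan is to complement the lower bound $C\ge\min_i K_i$ proved just before the theorem by exhibiting, for each $i\in\{0,1,\dots,r\}$, a feasible preemptive schedule whose makespan is exactly $K_i$; taking the index achieving the minimum then matches the lower bound. The natural tool is the work-profile function
\[
 W(x)=\int_0^x w(y)\,dy,\qquad w(y)=\begin{cases}2 & y\in Q_2,\\ 1+1/s & y\in Q_1,\\ 2/s & y\in E,\end{cases}
\]
which records the time a single reclaimer needs to start at $0$, reclaim every stockpile of either pad lying in $[0,x]$, and return to $0$: on $Q_2$ each point must be visited twice at speed $1$; on $Q_1$ once at speed $1$ (reclaiming the single stockpile) and once at travel speed $s$; on $E$ twice at travel speed $s$. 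In particular $W(L)=2K_0$, and since $w>0$ the function $W$ is continuous and strictly increasing.

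For the case $i=0$, let $m\in[0,L]$ be the unique point with $W(m)=K_0$. I would send $R_0$ on one round trip across $[0,m]$ and $R_1$ on one round trip across $[m,L]$, both following the speed profile implicit in $w$: on a subinterval of $Q_2$ move at speed $1$ on both legs (reclaiming pad $P_1$'s stockpile outbound and pad $P_2$'s stockpile on the return); on a subinterval of $Q_1$ move at speed $1$ on the leg whose side actually contains the stockpile and at speed $s$ on the other leg; and travel at speed $s$ on $E$. Each reclaimer then finishes exactly at time $K_0$, a stockpile that happens to straddle $m$ is split between the two reclaimers (which preemption permits), and the trajectories live in $[0,m]$ and $[m,L]$, so the reclaimers never pass each other.

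For $i\ge 1$ the construction is even simpler: $R_0$ round-trips across $[0,a_i]$ and $R_1$ round-trips across $[b_i,L]$, using the same speed profile; the completion times are $W(a_i)$ and $W(L)-W(b_i)$, i.e.\ precisely the two arguments of the maximum defining $K_i$, and disjointness is immediate from $a_i<b_i$. Taking $i^{\star}\in\argmin_i K_i$ then yields an optimal schedule. For the linear-time claim, a single merge-sweep of the $O(n)$ endpoints $l_j,r_j$ (already sorted within each pad) produces $Q_1,Q_2$ and the intervals $[a_i,b_i]$ making up $E$, together with cumulative values of $W$ at every breakpoint; the quantities $K_0,\dots,K_r$ and the turnaround point $m$ are then read off with constant work per breakpoint. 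The main subtlety is the speed profile for $i=0$ on $Q_1$: one must notice that the per-unit-length cost $1+1/s$ is the same irrespective of which pad's stockpile is assigned to which leg, so any consistent allocation yields the claimed makespan and the bookkeeping goes through cleanly.
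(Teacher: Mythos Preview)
Your proposal is correct and follows essentially the same approach as the paper: your work-profile function $W$ is exactly the paper's function $f$ (with $g(x)=W(L)-W(x)$), your splitting point $m$ is the paper's $x^*$, and your linear-time merge-sweep is identical. The one cosmetic difference is that the paper does a case analysis on whether $x^*$ lies in $Q_1\cup Q_2$ or in some $[a_i,b_i]$ to identify \emph{which} $K_i$ is the minimum, whereas you simply construct a schedule achieving each $K_i$ and take the minimum; your route is slightly cleaner for the optimality proof, while the paper's case analysis has the side benefit of pinpointing the optimal index directly from $x^*$.
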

\begin{proof}
By the above discussion, $K^*$ is a lower bound for the makespan of
a preemptive schedule. We define two functions $f,g:[0,L]\to\R$ as
follows. Let $f(x)$ be the return time of reclaimer $R_0$ if it
moves from $0$ to $x$, reclaiming everything on this part of pad
$P_1$, and then moves back to $0$ while reclaiming everything on
this part of pad $P_2$. Similarly, let $g(x)$ be the return time of
reclaimer $R_1$ if it moves from $L$ to $x$ reclaiming everything on
this part of pad $P_1$, and then moves back to $L$, reclaiming
everything on this part of pad $P_2$. These are piecewise linear,
continuous functions, which can be expressed in terms of the sets
$Q_1$, $Q_2$ and $E$:
\begin{align*}
  f(x) &= 2\ell(Q_2^-(x)) + \ell(Q_1^-(x)) +\frac{\ell(Q_1^-(x))}{s} + 2\frac{\ell(E^-(x))}{s},\\
  g(x) &= 2\ell(Q_2^+(x)) + \ell(Q_1^+(x)) +\frac{\ell(Q_1^+(x))}{s} + 2\frac{\ell(E^+(x))}{s},
\end{align*}
where $Q_2^-(x) = Q_2 \cap[0, x]$, $Q_2^+(x) = Q_2 \cap [x, L]$, and
similarly for $Q_1$ and $E$. Note that
$K_i=\max\{f(a_i),\,g(b_i)\}$. The functions $f$ and $g$ satisfy the
following conditions:
\begin{itemize}
\item $f(0)=g(L)=0$ and $f(L)=g(0)=2K_0$,
\item $f(x)+g(x)=2K_0$ for all $x\in[0,L]$, and
\item $f$ is strictly increasing, and $g$ is strictly decreasing.
\end{itemize}
This implies that there is a unique $x^*\in[0,L]$ with
$f(x^*)=g(x^*)=K_0$.
\begin{description}
\item[Case 1.] There is at least one stockpile at position $x^*$, i.e., $x^*\in Q_1\cup Q_2$. In this case, for any interval $[a_i,b_i]$ in the partition~(\ref{eq:empty_area}), either $b_i\leqslant x^*$ or $a_i\geqslant x^*$, hence $K_i=\max\{f(a_i),\,g(b_i)\}\geqslant K_0$, and consequently $K^*=K_0$. This value is achieved by reclaiming everything left of $x^*$ by reclaimer $R_0$ and everything right of $x^*$ by reclaimer $R_1$ as described in the definition of the functions $f$ and $g$.
\item[Case 2.] There is no stockpile at position $x^*$, i.e., $x^*\in[a_i,b_i]$ for some interval $[a_i,b_i]$ in the partition~(\ref{eq:empty_area}). Then $K_i=\max\{f(a_i),\,g(b_i)\}\leqslant f(x^*)=K_0$. For $j<i$, we have $K_j\geqslant g(b_j)>g(x^*)=K_0$, and for $j>i$, $K_j\geqslant f(a_j)>f(x^*)=K_0$. Hence $K^*=K_i$, and this value is achieved by reclaiming everything left of $a_i$ by reclaimer $R_0$ and everything right of $b_i$ by reclaimer $R_1$ as described in the definition of the functions $f$ and $g$.
\end{description}
This concludes the proof of the optimality of the value $K^*$. In
order to compute $x^*$, which defines an optimal schedule, we order
the numbers $0,l_1,r_1,\ldots,l_n,r_n,L$ increasingly, which can be
done in linear time, because we assume that the stockpiles on each
pad are already ordered from left to right. This gives an ordered
list
\[0=x_0\leqslant x_1\leqslant x_2\leqslant\cdots\leqslant x_{2n+2}=L\]
of the breakpoints of the piecewise linear functions $f$ and $g$. We
can determine the values of $f$ and $g$ at these points recursively,
by $f(x_0)=0$,
\[f(x_{k})=
  \begin{cases}
    f(x_{k-1})+(x_{k}-x_{k-1})\cdot 2/s & \text{if }[x_{k-1},\,x_k]\subseteq E\\
    f(x_{k-1})+(x_{k}-x_{k-1})\left(1+1/s\right) & \text{if }[x_{k-1},\,x_k]\subseteq Q_1\\
    f(x_{k-1})+(x_{k}-x_{k-1})\cdot 2 & \text{if }[x_{k-1},\,x_k]\subseteq Q_2
  \end{cases}
\]
for $k=1,2,\ldots,2n+2$, and $g(x_k)=2K_0-f(x_k)$. Then there is a
unique index $k$ with $f(x_{k-1})\leqslant K_0<f(x_k)$, and we
obtain $x^*$ by
\[x^*=x_{k-1}+\frac{K_0-f(x_{k-1})}{f(x_k)-f(x_{k-1})}\cdot(x_k-x_{k-1}).\qedhere\]
\end{proof}

In order to describe and analyze non-preemptive schedules, we
introduce some additional notation and a few more functions. For
$x\in[0,L]$, the region occupied by stockpiles left (resp. right) of
$x$ on pad $i$ is denoted by $S^-_i(x)$ (resp. $S^+_i(x)$). More
precisely, with $S_1$ and $S_2$ defined by~(\ref{eq:stockpiles}),
\begin{align*}
S^-_i(x)&=S_i\cap[0,x], & S^+_i(x)&=S_i\cap[x,L].
\end{align*}
Furthermore, we define functions $f_i,g_i:[0,L]\to\R$ for
$i\in\{1,2\}$ by
\begin{align*}
  f_i(x) &= \ell(S^-_i(x))+\frac{x-\ell(S^-_i(x))}{s},&
  g_i(x) &= \ell(S^+_i(x))+\frac{L-x-\ell(S^+_i(x))}{s}.
\end{align*}
Note that $f(x)=f_1(x)+f_2(x)$ and $g(x)=g_1(x)+g_2(x)$, where $f$
and $g$ are the functions defined in the proof of
Theorem~\ref{thm:preempt_opt}.

Let $j\in\{1,\ldots,n_1\}$ be a stockpile on pad $P_1$, and let
$j'\in\{n_1+1,\ldots,n\}$ be a stockpile on pad $P_2$. If $R_0$
reclaims all stockpiles left of (and including) $j$ on pad $P_1$,
and all stockpiles left of (and including) $j'$ on pad $P_2$, then
its earliest possible return time is
\begin{equation}\label{eq:F_unimod}
F(j,j')=f_1(r_j)+\lvert r_j-r_{j'}\rvert/s+f_2(r_{j'}).
\end{equation}
Similarly, if $R_1$ reclaims all stockpiles right of (not including)
$j$ on pad $P_1$, and all stockpiles right of (not including) $j'$,
then its earliest possible return time is
\begin{equation}\label{eq:G_unimod}
G(j,j')=g_1(l_{j+1})+\lvert l_{j+1}-l_{j'+1}\rvert/s+g_2(l_{j'+1}).
\end{equation}
See Figure~\ref{fig:unimodal} for an illustration.
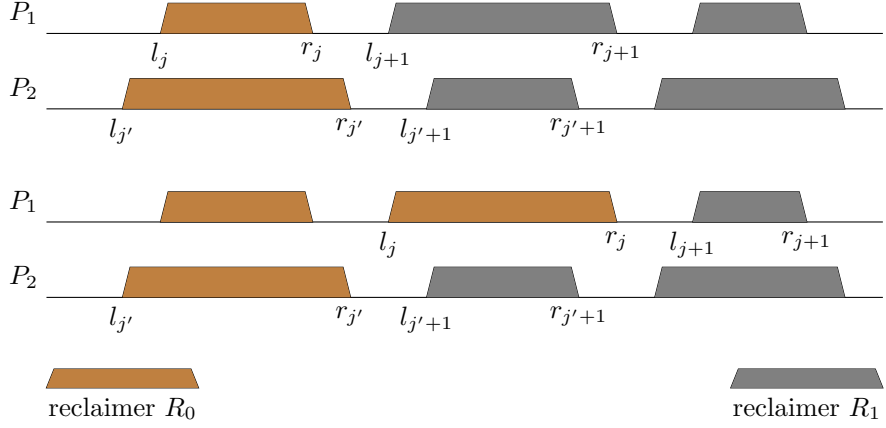
\begin{figure}[htb]
  \centering
  \begin{tikzpicture}
  \draw (0,0) -- (11,0);
  \draw (0,1) -- (11,1);
  \draw  (1.5,1) -- (1.6,1.4) -- (3.4,1.4) -- (3.5,1) -- cycle;
  \fill [brown]  (1.5,1) -- (1.6,1.4) -- (3.4,1.4) -- (3.5,1) -- cycle;
  \draw  (4.5,1) -- (4.6,1.4) -- (7.4,1.4) -- (7.5,1) -- cycle;
  \fill [brown]  (4.5,1) -- (4.6,1.4) -- (7.4,1.4) -- (7.5,1) -- cycle;
  \coordinate [label = below:$l_{j}$] (l11) at (4.5,1);
  \coordinate [label = below:$r_{j}$] (r11) at (7.5,1);
  \draw  (8.5,1) -- (8.6,1.4) -- (9.9,1.4) -- (10,1) -- cycle;
  \fill [gray]  (8.5,1) -- (8.6,1.4) -- (9.9,1.4) -- (10,1) -- cycle;
  \coordinate [label = below:$l_{j+1}$] (l12) at (8.5,1);
  \coordinate [label = below:$r_{j+1}$] (r12) at (10,1);
  \draw  (1,0) -- (1.1,.4) -- (3.9,.4) -- (4,0) -- cycle;
  \fill [brown]  (1,0) -- (1.1,.4) -- (3.9,.4) -- (4,0) -- cycle;
  \coordinate [label = below:$l_{j'}$] (l21) at (1,0);
  \coordinate [label = below:$r_{j'}$] (r21) at (4,0);
  \draw  (5,0) -- (5.1,.4) -- (6.9,.4) -- (7,0) -- cycle;
  \fill [gray]  (5,0) -- (5.1,.4) -- (6.9,.4) -- (7,0) -- cycle;
  \coordinate [label = below:$l_{j'+1}$] (l22) at (5,0);
  \coordinate [label = below:$r_{j'+1}$] (r22) at (7,0);
  \draw  (8,0) -- (8.1,.4) -- (10.4,.4) -- (10.5,0) -- cycle;
  \fill [gray]  (8,0) -- (8.1,.4) -- (10.4,.4) -- (10.5,0) -- cycle;
  \draw (0,2.5) -- (11,2.5);
  \draw (0,3.5) -- (11,3.5);
  \draw  (1.5,3.5) -- (1.6,3.9) -- (3.4,3.9) -- (3.5,3.5) -- cycle;
  \fill [brown]  (1.5,3.5) -- (1.6,3.9) -- (3.4,3.9) -- (3.5,3.5) -- cycle;
  \draw  (4.5,3.5) -- (4.6,3.9) -- (7.4,3.9) -- (7.5,3.5) -- cycle;
  \fill [gray]  (4.5,3.5) -- (4.6,3.9) -- (7.4,3.9) -- (7.5,3.5) -- cycle;
  \coordinate [label = below:$l_{j}$] (l11a) at (1.5,3.5);
  \coordinate [label = below:$r_{j}$] (r11a) at (3.5,3.5);
  \draw  (8.5,3.5) -- (8.6,3.9) -- (9.9,3.9) -- (10,3.5) -- cycle;
  \fill [gray]  (8.5,3.5) -- (8.6,3.9) -- (9.9,3.9) -- (10,3.5) -- cycle;
  \coordinate [label = below:$l_{j+1}$] (l12a) at (4.5,3.5);
  \coordinate [label = below:$r_{j+1}$] (r12a) at (7.5,3.5);
  \draw  (1,2.5) -- (1.1,2.9) -- (3.9,2.9) -- (4,2.5) -- cycle;
  \fill [brown]  (1,2.5) -- (1.1,2.9) -- (3.9,2.9) -- (4,2.5) -- cycle;
  \coordinate [label = below:$l_{j'}$] (l21a) at (1,2.5);
  \coordinate [label = below:$r_{j'}$] (r21a) at (4,2.5);
  \draw  (5,2.5) -- (5.1,2.9) -- (6.9,2.9) -- (7,2.5) -- cycle;
  \fill [gray]  (5,2.5) -- (5.1,2.9) -- (6.9,2.9) -- (7,2.5) -- cycle;
  \coordinate [label = below:$l_{j'+1}$] (l22a) at (5,2.5);
  \coordinate [label = below:$r_{j'+1}$] (r22a) at (7,2.5);
  \draw  (8,2.5) -- (8.1,2.9) -- (10.4,2.9) -- (10.5,2.5) -- cycle;
  \fill [gray]  (8,2.5) -- (8.1,2.9) -- (10.4,2.9) -- (10.5,2.5) -- cycle;
  \draw (0,-1.2) -- (0.1,-.95) -- (1.9,-.95) -- (2,-1.2) -- cycle;
  \fill[brown] (0,-1.2) -- (0.1,-.95) -- (1.9,-.95) -- (2,-1.2) -- cycle;
  \coordinate [label = below:reclaimer $R_0$] (R0) at (1,-1.2);
  \draw (9,-1.2) -- (9.1,-.95) -- (10.9,-.95) -- (11,-1.2) -- cycle;
  \fill[gray] (9,-1.2) -- (9.1,-.95) -- (10.9,-.95) -- (11,-1.2) -- cycle;
  \coordinate [label = below:reclaimer $R_1$] (R1) at (10,-1.2);
  \coordinate [label = above:$P_1$] (P1a) at (-.3,1);
  \coordinate [label = above:$P_2$] (P2a) at (-.3,0);
  \coordinate [label = above:$P_1$] (P1b) at (-.3,3.5);
  \coordinate [label = above:$P_2$] (P2b) at (-.3,2.5);
  \end{tikzpicture}
\caption{Two stockpile assignments for non-preemptive schedules (indicated by the colors of the stockpiles): at
the top with $\min\{l_{j+1},\,l_{j'+1}\}\geqslant\max\{r_j,r_{j'}\}$
and at the bottom with $l_{j'+1}<r_j$.}\label{fig:unimodal}
\end{figure}
Observe that if $l_{j'+1}\geqslant r_j$ and $l_{j+1}\geqslant
r_{j'}$, then no clashes will occur between the two reclaimers and a
makespan of $C(j,j') = \max(F(j,j'),\,G(j,j'))$ can be achieved. On
the other hand, if $l_{j'+1}<r_j$ or $l_{j+1}<r_{j'}$, then it can
happen that one reclaimer has to wait. Therefore, in order to
specify a schedule, we have to
\begin{itemize}
\item
Choose one of two options for the routing of $R_0$: (1) first
reclaim stockpiles $1,2,\ldots,j$ on pad $P_1$ from left to right
and then stockpiles $j',j'-1,\ldots,n_1+1$ on pad $P_2$ from right
to left, or (2) first reclaim stockpiles $n_1+1,\ldots,j'$ on pad
$P_2$ from left to right and then stockpiles $j,j-1,\ldots,1$ on pad
$P_1$ from right to left;
\item
Choose one of two options for the routing of $R_1$: (1) first
reclaim stockpiles $n_1,n_1-1,\ldots,j+1$ on pad $P_1$ from right to
left and then stockpiles $j'+1,\ldots,n$ on pad $P_2$ from left to
right, or (2) first reclaim stockpiles $n,n-1,\ldots,j'+1$ on pad
$P_2$ from right to left and then stockpiles $j+1,\ldots,n_1$ on pad
$P_1$ from left to right; and
\item
Choose which reclaimer waits.
\end{itemize}
Taking all possible combinations we have $8$ different schedules for
a given pair $(j,j')$ of stockpiles. For $p,q\in\{1,2\}$ and
$k\in\{0,1\}$, let $C_{pqk}$ be the makespan that results from
routing option $p$ for $R_0$, routing option $q$ for $R_1$, and
letting $R_k$ wait if necessary. We describe the computation of
$C_{pqk}$ in detail for $l_{j'+1}<r_j$ and $k=1$. The cases with
$l_{j+1}<r_{j'}$ or $k=0$ can be treated in the same way. Since
$R_1$ waits if necessary, the makespan of $R_0$ is $F(j,j')$,
defined in~(\ref{eq:F_unimod}). So
$C_{pq1}=\max\{F(j,j'),C^1_{pq1}\}$, where $C^1_{pq1}$ is the
corresponding makespan for $R_1$ and can be computed as follows. In
each case we express $C^1_{pq1}$ as $G(j,j')+w$ where $G(j,j')$ is
the lower bound for the makespan of $R_1$ given
in~(\ref{eq:G_unimod}) and $w$ is the waiting time which is the
expression in square brackets in the equations below.
\begin{description}
\item[Case 1.]
Both reclaimers start on pad $P_1$. If $g_1(r_j)\geqslant f_1(r_j)$,
then no waiting is necessary and $C^1_{111}=G(j,j')$. Otherwise
$R_1$ waits at $x=r_j$ until $R_0$ arrives there at time $f_1(r_j)$,
hence
\[C^1_{111} = g_1(r_j) + [f_1(r_j)-g_1(r_j)] +(r_j-l_{j'+1})/s+g_2(l_{j'}).\]

\item[Case 2.]
$R_0$ starts on pad $P_1$ and $R_1$ starts on pad $P_2$. If
$g_2(l_{j'+1})\leqslant f_1(l_{j'+1})$, then no waiting is necessary
and $C^1_{121}=G(j,j')$. Otherwise $R_1$ waits on its way to
$x=l_{j'+1}$ for a period of length $f_1(r_j)-g_2(r_j)$ and the
makespan is
\[C^1_{121} = g_2(l_{j'+1})+[f_1(r_j)-g_2(r_j)]+(l_{j+1}-l_{j'+1})/s+g_1(l_{j+1}).\]

\item[Case 3.]
$R_0$ starts on pad $P_2$ and $R_1$ starts on pad $P_1$. If
$g_1(r_j)\geqslant f_2(r_{j'})+(r_j-r_{j'})/s$, then $R_1$ arrives
at $x=r_j$ when $R_0$ is already on its way back, no waiting is
necessary, and $C^1_{211}=G(j,j')$. Otherwise $R_1$ waits at
$x=l_{j+1}$ for a period of length
$f_2(r_{j'})+(r_j-r_{j'})/s-g_1(r_j)$ and the makespan is
\[C^1_{211} = g_1(l_{j+1})+[f_2(r_{j'})+(r_j-r_{j'})/s-g_1(r_j)]+(l_{j+1}-l_{j'+1})/s+g_2(l_{j'+1}).\]

\item[Case 4.]
Both reclaimers start on pad $P_2$. If $g_2(l_{j'+1})\leqslant
f_2(l_{j'+1})$, then $R_1$ is already on its way back when $R_0$
arrives at $x=l_{j'+1}$, no waiting is necessary, and
$C^1_{211}=G(j,j')$. Otherwise $R_1$ waits to the right of $x=r_j$
for a period of length
$f_2(r_{j'})+(r_j-r_{j'})/s+f_1(r_j)-f_1(l_{j'+1})$ to arrive at
$x=l_{j'+1}$ at the same time as $R_0$ on its way back, and the
makespan is
\[C^1_{221} = g_2(l_{j'+1})+[f_2(r_{j'})+(r_j-r_{j'})/s+f_1(r_j)-f_1(l_{j'+1})]+(l_{j+1}-l_{j'+1})/s+g_1(l_{j+1}).\]
\end{description}
The necessary data to evaluate the $8$ schedules associated with a
pair $(j,j')$ can be computed in linear time in the same way as the
functions $f$ and $g$ are evaluated in the proof of
Theorem~\ref{thm:preempt_opt}.

In the discussion above, we have assumed that a schedule has the
following properties:
\begin{itemize}
\item Each reclaimer changes direction exactly once.
\item
Reclaimer $R_0$ reclaims everything between $0$ and some point on
one pad from left to right, and then everything from some (possibly
different) point to 0 on the other pad from right to left.
\item
Reclaimer $R_1$ reclaims everything between $L$ and some point on
one pad from right to left, and then everything from some (possibly
different) point to $L$ on the other pad from left to right.
\end{itemize}
In the following, we call such a schedule \emph{contiguous
unimodal}. Since every contiguous unimodal schedule is associated
with some stockpile pair, we have the following theorem.
\begin{thm}\label{thm:best_unimodal}
An optimal contiguous unimodal schedule can be computed in quadratic
time. \qed
\end{thm}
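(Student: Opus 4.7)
The plan is to argue that the theorem follows almost directly from the case analysis preceding it, together with an observation about preprocessing. A contiguous unimodal schedule is by definition determined by choosing a ``turning point'' pair $(j,j')$ (i.e., how far each reclaimer goes on each pad before turning around), a routing option $p\in\{1,2\}$ for $R_0$, a routing option $q\in\{1,2\}$ for $R_1$, and a choice $k\in\{0,1\}$ of which reclaimer waits if a clash would otherwise occur. Thus the set of candidate schedules is parametrized by tuples $(j,j',p,q,k)$, of which there are $8\cdot n_1\cdot(n-n_1)=O(n^2)$.

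First I would preprocess to enable constant-time evaluation of each candidate. Concretely, following the recursive computation used in the proof of Theorem~\ref{thm:preempt_opt}, I would sweep once through the sorted list of breakpoints $0,l_1,r_1,\ldots,l_n,r_n,L$ and tabulate $f_1(x)$, $f_2(x)$, $g_1(x)$, $g_2(x)$ at every breakpoint, which takes $O(n)$ time and space. The quantities $F(j,j')$ and $G(j,j')$ from equations~(\ref{eq:F_unimod}) and~(\ref{eq:G_unimod}) are simple combinations of these tabulated values and can thus be retrieved in $O(1)$. The same holds for each of the waiting-time expressions arising in Cases~1--4 of the $C^1_{pqk}$ computation (and their analogues with $k=0$ or with $l_{j+1}<r_{j'}$ instead of $l_{j'+1}<r_j$): each is a $O(1)$ arithmetic combination of precomputed values at breakpoints belonging to stockpiles $j,j+1,j',j'+1$.

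Given this, I would iterate over the $O(n^2)$ pairs $(j,j')$, evaluate all $8$ makespans $C_{pqk}$ in $O(1)$ each, and keep the running minimum together with the tuple that achieves it. The reconstruction of the actual schedule from the winning tuple is then immediate from the case description. The total running time is $O(n) + O(n^2)\cdot O(1) = O(n^2)$, as claimed.

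The main (modest) obstacle is simply bookkeeping: one must check that the four cases with $k=1$ described before the theorem, together with their mirror versions for $k=0$ and for the configuration $l_{j+1}<r_{j'}$, genuinely exhaust all contiguous unimodal schedules associated with $(j,j')$, and that each case's waiting-time formula is correctly expressed in terms of $f_1,f_2,g_1,g_2$ evaluated at breakpoints. Once this verification is done, the quadratic bound is immediate from the enumeration argument above.
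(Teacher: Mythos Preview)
Your proposal is correct and is essentially the paper's own argument: the theorem is stated with a \qed\ because it is meant to follow immediately from the preceding discussion, namely that every contiguous unimodal schedule is associated with some pair $(j,j')$, that there are $O(n^2)$ such pairs, and that after a single linear-time tabulation of $f_1,f_2,g_1,g_2$ at the breakpoints each of the eight makespans $C_{pqk}$ can be evaluated in constant time. Your write-up just makes this reasoning explicit.
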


The next example shows that it is possible that there is no optimal
contiguous unimodal schedule.
\begin{example}\label{ex:unimod_not_opt}
Consider an instance with four stockpiles of lengths
$2,\,10,\,10,\,2$ shown in Figure~\ref{fig:instance}, and let the
travel speed be $s=5$.
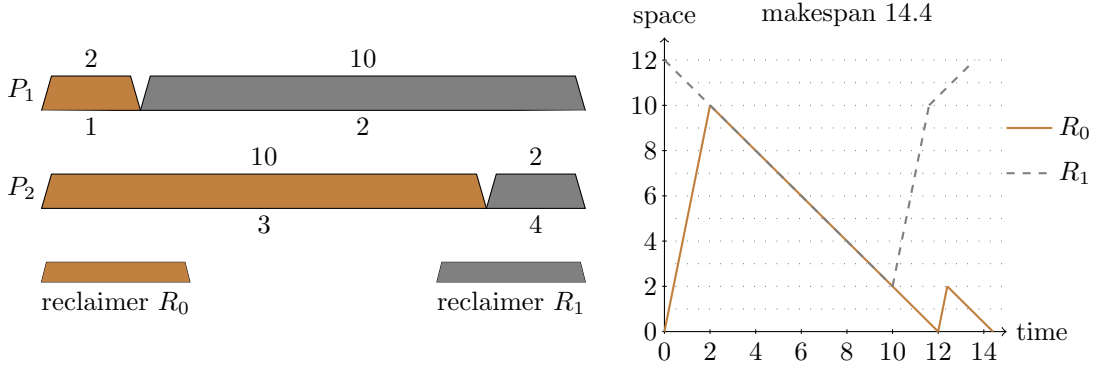
\begin{figure}[htb]
\centering
    \begin{minipage}{.4\linewidth}
    \hspace*{-3em}\raisebox{-.9em}{
        \begin{tikzpicture} [scale =0.65]
        \coordinate (A) at (0,0);
        \coordinate (B) at (0.2,0.7);
        \coordinate (C) at (1.8,0.7);
        \coordinate (D) at (2,0);
        \fill [brown] (A) -- (B) -- (C) -- (D) -- cycle;
        \fill [gray] (2,0) -- (2.2,0.7) -- (10.8,0.7) -- (11.00,0) -- cycle;
        \draw (A) -- (B) -- (C) -- (D) -- cycle;
        \draw  (2,0) -- (2.2,0.7) -- (10.8,0.7) -- (11.00,0) -- cycle;
        \coordinate [label = above:$1$] (A) at (1.0,-.7);
        \coordinate [label = above:$2$] (B) at (6.5,-.7);
        \coordinate [label = above:$2$] (A) at (1.0,0.7);
        \coordinate [label = above:$10$] (B) at (6.5,0.7);
        \fill [white] (0,0) -- (0,-0.1) -- (11.00,0) -- (11.00,-0.1) -- cycle;
        \fill [brown] (0,-2) -- (0.2,-1.3) -- (8.8,-1.3) -- (9.00,-2) -- cycle;
        \fill [gray] (9.0,-2) -- (9.2,-1.3) -- (10.8,-1.3) -- (11.00,-2) -- cycle;
        \draw (0,-2) -- (0.2,-1.3) -- (8.8,-1.3) -- (9.00,-2) -- cycle;
        \draw (9.00,-2) -- (9.2,-1.3) -- (10.8,-1.3) -- (11.00,-2) -- cycle;
        \coordinate [label = above:$3$] (A) at (4.5,-2.7);
        \coordinate [label = above:$4$] (B) at (10.0,-2.7);
        \coordinate [label = above:$10$] (G) at (4.5,-1.3);
        \coordinate [label = above:$2$] (A) at (10.0,-1.3);
          \draw (0,-3.5) -- (0.1,-3.1) -- (2.9,-3.1) -- (3,-3.5) -- cycle;
  \fill[brown] (0,-3.5) -- (0.1,-3.1) -- (2.9,-3.1) -- (3,-3.5) -- cycle;
  \coordinate [label = below:reclaimer $R_0$] (R0) at (1.5,-3.5);
  \draw (8,-3.5) -- (8.1,-3.1) -- (10.9,-3.1) -- (11,-3.5) -- cycle;
  \fill[gray] (8,-3.5) -- (8.1,-3.1) -- (10.9,-3.1) -- (11,-3.5) -- cycle;
  \coordinate [label = below:reclaimer $R_1$] (R1) at (9.5,-3.5);
  \coordinate [label = above:$P_1$] (P1) at (-.4,0);
  \coordinate [label = above:$P_2$] (P2) at (-.4,-2);       
  \end{tikzpicture}
        }
    \end{minipage}
    \begin{minipage}{.4\linewidth}
    \hspace*{3em}\raisebox{-0.3em}{
        \begin{tikzpicture}[scale = 0.3,
            declare function={
                func(\x) = (4.0)*(\x);
            }]
        \pgfkeys{
            /pgf/number format/precision=1,
            /pgf/number format/fixed zerofill=true,
            /pgf/number format/fixed
        }
            \node at (8,14) {{makespan $14.4$}};
            \draw[->] (0.0,0.0) -- (15.0,0.0) node[anchor =west] {{time}}  ;
            \draw[->] (0.0,0.0) -- (0.0,13.0) node[anchor = south] {{space}} ;
            \foreach \y in {0,...,6}
                { \pgfmathtruncatemacro{\result}{2*\y}
                                  \draw(-.1,2*\y)--(.1,2*\y) node[left, xshift = -0.02] {{\result}};}
            \foreach \x in {0,...,7}
                { \pgfmathtruncatemacro{\result}{2*\x}
                                  \draw (2*\x, 0.1) -- (2*\x, -0.1) node[below, yshift = -0.02] {{\result}};;
                }
            \foreach \y in {0,...,12}
                \draw[very thin, loosely dotted] (0.0,\y) -- (15,\y) ;
            \coordinate (SPLR) at (0.0,0.0);
            \coordinate (SPC) at ({(10.0-0.0)/5},10);
            \draw[brown, thick] (SPLR) -- (SPC);
            \coordinate (EPC) at ($({(10-0)/1}, 0) +( SPC |- 0,0.0)$);
            \draw[brown, thick] (SPC) -- (EPC);
            \coordinate (SPA) at ($({(2-0)/5},2) +( EPC |- 0,0)$);
            \draw[brown, thick] (EPC) -- (SPA);
            \coordinate (EPA) at ($({(2-0)/1}, 0) +( SPA |- 0,0.0)$);
            \draw[brown, thick] (EPA) -- (SPA);
            \draw[brown, thick] (15,9) -- (17,9);
            \node at (18,9) {{$R_0$}};
            \coordinate (SPRR) at (0.0,12);
            \coordinate (EPD) at ($({(12-2)/1},2) +( SPRR |- 0,0)$);
            \draw[gray, thick, dashed] (SPRR) -- (EPD);
            \coordinate (SPB) at ($({(10-2)/5}, 10) +( EPD |- 0,0.0)$);
            \draw[gray, thick, dashed] (EPD) -- (SPB);
            \coordinate (EPB) at ($({(12-10)/1}, 12) +( SPB |- 0,0)$);
            \draw[gray, thick, dashed] (SPB) -- (EPB);
            \draw[gray, thick, dashed] (15,7) -- (17,7);
            \node at (18,7) {{$R_1$}};
        \end{tikzpicture}
        }
    \end{minipage}
\caption{Instance demonstrating that unimodal routing is not always
optimal. The assignment between stockpiles and reclaimers is indicated by different colors, the
numbers above the stockpiles are the stockpile lengths while the numbers below are the stockpile indices.}\label{fig:instance}
\end{figure}
Unimodal routing results in $C = 15.2$. However, when $R_0$
first travels to $x=10$ without processing any stockpile, then
processes stockpile 3 while coming back, then travels to $x=2$, and
finally processes stockpile 1, and $R_1$ first processes stockpile
2, then turns and processes stockpile 4 on the way back, the
resulting makespan is 14.4. This shows that sometimes ``zigzagging''
can be beneficial.
\end{example}

Next, we analyze one particular schedule, which is obtained by a
natural modification of the optimal preemptive schedule and
therefore can be computed in linear time.

Let $x^*$ be the optimal split point for a preemptive schedule as
described in Theorem~\ref{thm:preempt_opt}, and let $k$ be the index
with $x_{k-1}<x^*\leqslant x_k$. If $x^*\in E$, i.e., there is no
stockpile at $x^*$, then the optimal preemptive schedule is actually
non-preemptive, and yields an optimal solution with makespan $K^*$,
the optimal preemptive makespan. In general, the stockpiles are
assigned according to the following rules (see
Figure~\ref{fig:assign}).
\begin{enumerate}
\item All stockpiles $j$ with $r_j\leqslant x^*$ are assigned to $R_0$, and all stockpiles $j$ with $l_j\geqslant x^*$ are assigned to $R_1$.
\item If there is exactly one stockpile $j$ with $l_j\leqslant x^*\leqslant r_j$ then this stockpile is assigned to $R_0$ if $x^*-l_j\geqslant r_j-x^*$ and to $R_1$ otherwise.
\item If there are two stockpiles $j\in\{1,\ldots,n_1\}$ and $j'\in\{n_1+1,\ldots,n\}$ with $l_j\leqslant x^*\leqslant r_j$ and $l_{j'}\leqslant x^*\leqslant r_{j'}$ then both of them are assigned to $R_0$ if
  \begin{equation}\label{eq:left_assign}
(x^*-l_j)+(x^*-l_{j'})+\lvert
l_j-l_{j'}\rvert/s\geqslant(r_j-x^*)+(r_{j'}-x^*)+\lvert
r_j-r_{j'}\rvert/s,
  \end{equation}
and otherwise both stockpiles are assigned to $R_1$.
\end{enumerate}
\begin{figure}[htb]
  \centering
  \begin{tikzpicture}
  \draw (0,2.5) -- (11,2.5);
  \draw (0,3.5) -- (11,3.5);
  \draw  (.5,3.5) -- (.6,3.9) -- (2.4,3.9) -- (2.5,3.5) -- cycle;
  \fill [brown]  (.5,3.5) -- (.6,3.9) -- (2.4,3.9) -- (2.5,3.5) -- cycle;
  \draw  (3.5,3.5) -- (3.6,3.9) -- (7.9,3.9) -- (8,3.5) -- cycle;
  \fill [brown]  (3.5,3.5) -- (3.6,3.9) -- (7.9,3.9) -- (8,3.5) -- cycle;
  \coordinate [label = below:$l_{j}$] (l11) at (3.5,3.5);
  \coordinate [label = below:$r_{j}$] (r11) at (8,3.5);
  \draw  (9.5,3.5) -- (9.6,3.9) -- (10.4,3.9) -- (10.5,3.5) -- cycle;
  \fill [gray]  (9.5,3.5) -- (9.6,3.9) -- (10.4,3.9) -- (10.5,3.5) -- cycle;
  \draw[dashed] (6,2.5) -- (6,4.1);
  \draw[dashed] (4.5,2.5) -- (4.5,4.1);
  \draw[dashed] (7,2.5) -- (7,4.1);
  \coordinate [label = right:$x^*$] (x) at (6,4.1);
  \coordinate [label = right:$x_{k-1}$] (x1) at (4.5,4.1);
  \coordinate [label = right:$x_k$] (x2) at (7,4.1);
  \draw  (1,2.5) -- (1.1,2.9) -- (1.4,2.9) -- (1.5,2.5) -- cycle;
  \fill [brown]  (1,2.5) -- (1.1,2.9) -- (1.4,2.9) -- (1.5,2.5) -- cycle;
  \draw  (3,2.5) -- (3.1,2.9) -- (4.4,2.9) -- (4.5,2.5) -- cycle;
  \fill [brown]  (3,2.5) -- (3.1,2.9) -- (4.4,2.9) -- (4.5,2.5) -- cycle;
  \draw  (8,2.5) -- (8.1,2.9) -- (10.4,2.9) -- (10.5,2.5) -- cycle;
  \fill [gray]  (8,2.5) -- (8.1,2.9) -- (10.4,2.9) -- (10.5,2.5) -- cycle;
  \draw  (7,2.5) -- (7.1,2.9) -- (7.4,2.9) -- (7.5,2.5) -- cycle;
  \fill [gray]  (7,2.5) -- (7.1,2.9) -- (7.4,2.9) -- (7.5,2.5) -- cycle;
  \draw (0,0) -- (11,0);
  \draw (0,1) -- (11,1);
  \draw  (1.5,1) -- (1.6,1.4) -- (3.4,1.4) -- (3.5,1) -- cycle;
  \fill [brown]  (1.5,1) -- (1.6,1.4) -- (3.4,1.4) -- (3.5,1) -- cycle;
  \draw  (4.5,1) -- (4.6,1.4) -- (8.9,1.4) -- (9,1) -- cycle;
  \fill [gray]  (4.5,1) -- (4.6,1.4) -- (8.9,1.4) -- (9,1) -- cycle;
  \draw  (10,1) -- (10.1,1.4) -- (10.4,1.4) -- (10.5,1) -- cycle;
  \fill [gray]  (10,1) -- (10.1,1.4) -- (10.4,1.4) -- (10.5,1) -- cycle;
  \coordinate [label = below:{$l_{j}$}] (l12a) at (4.35,1);
  \coordinate [label = below:$r_{j}$] (r12a) at (9,1);
  \draw  (.5,0) -- (.6,.4) -- (1.4,.4) -- (1.5,0) -- cycle;
  \fill [brown]  (.5,0) -- (.6,.4) -- (1.4,.4) -- (1.5,0) -- cycle;
  \draw  (5,0) -- (5.1,.4) -- (6.9,.4) -- (7,0) -- cycle;
  \fill [brown]  (2.5,0) -- (2.6,.4) -- (6.9,.4) -- (7,0) -- cycle;
  \coordinate [label = below:$l_{j'}$] (l22a) at (2.5,0);
  \coordinate [label = below:{$r_{j'}$}] (l22a) at (7,0);
  \draw  (10,0) -- (10.1,.4) -- (10.4,.4) -- (10.5,0) -- cycle;
  \fill [gray]  (10,0) -- (10.1,.4) -- (10.4,.4) -- (10.5,0) -- cycle;
  \draw  (7.5,0) -- (7.6,.4) -- (7.9,.4) -- (8,0) -- cycle;
  \fill [gray]  (7.5,0) -- (7.6,.4) -- (7.9,.4) -- (8,0) -- cycle;
  \draw  (8.5,0) -- (8.6,.4) -- (9.4,.4) -- (9.5,0) -- cycle;
  \fill [gray]  (8.5,0) -- (8.6,.4) -- (9.4,.4) -- (9.5,0) -- cycle;
  \coordinate [label = above:$P_1$] (P1a) at (-.3,1);
  \coordinate [label = above:$P_2$] (P2a) at (-.3,0);
  \coordinate [label = above:$P_1$] (P1b) at (-.3,3.5);
  \coordinate [label = above:$P_2$] (P2b) at (-.3,2.5);
  \draw[dashed] (6,0) -- (6,1.6); \coordinate [label = right:$x^*$] (x) at (6,1.6);
  \draw[dashed] (4.5,0) -- (4.5,1.6); \coordinate [label = right:$x_{k-1}$] (x1) at (4.5,1.6);
  \draw[dashed] (7,0) -- (7,1.6); \coordinate [label = right:$x_k$] (x2) at (7,1.6);
  \draw (0,-1.2) -- (0.1,-.95) -- (1.9,-.95) -- (2,-1.2) -- cycle;
  \fill[brown] (0,-1.2) -- (0.1,-.95) -- (1.9,-.95) -- (2,-1.2) -- cycle;
  \coordinate [label = below:reclaimer $R_0$] (R0) at (1,-1.2);
  \draw (9,-1.2) -- (9.1,-.95) -- (10.9,-.95) -- (11,-1.2) -- cycle;
  \fill[gray] (9,-1.2) -- (9.1,-.9) -- (10.9,-.9) -- (11,-1.2) -- cycle;
  \coordinate [label = below:reclaimer $R_1$] (R1) at (10,-1.2);
  \end{tikzpicture}
\caption{Possible assignments (indicated by colors) for contiguous unimodal schedules:
$x^*\in S_1\setminus S_2$ (top), $x^*\in S_1\cap S_2$ (bottom).}
  \label{fig:assign}
\end{figure}
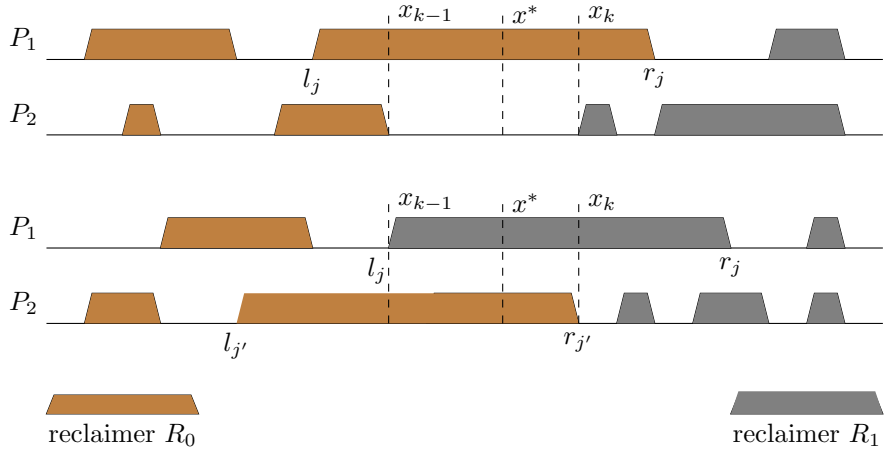
Let $(\tilde H_0,\,\tilde H_1)$ be the best unimodal schedule
associated with this stockpile assignment, and let $\tilde
C=\max\{\tilde C_0,\,\tilde C_1\}$ be the makespan of this schedule.
\begin{thm}\label{thm:2_approx}
We have $\tilde C\leqslant 2K^*$. In particular, the schedule
$(\tilde H_0,\,\tilde H_1)$ provides a 2-approximation for the
problem of scheduling two reclaimers when the positions of the
stockpiles are given and the stockpiles can be reclaimed in any
order. The factor 2 is asymptotically best possible (for
$s\to\infty$).
\end{thm}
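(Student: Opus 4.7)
The plan is to bound $\tilde C_0$ and $\tilde C_1$ separately in terms of $K^*$. Since $K^*$ is a lower bound also for the optimal non-preemptive makespan, $\tilde C\leqslant 2K^*$ immediately yields the $2$-approximation. I would split the analysis according to where $x^*$ lies: if $x^*\in E$ the preemptive optimum is already non-preemptive and $\tilde C=K^*$; otherwise $x^*$ lies either in a single stockpile (say $j$ on $P_1$, with the other pad empty at $x^*$) or in two overlapping stockpiles $j$ on $P_1$ and $j'$ on $P_2$. By symmetry it suffices to treat the sub-cases in which the assignment rule sends the critical stockpile(s) to $R_0$; the other sub-cases are handled by swapping the roles of the two reclaimers and using $g$ in place of $f$.

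To bound $\tilde C_0$, I would exhibit one specific schedule from the eight unimodal candidates associated with the given assignment: the one in which $R_0$ travels out along one pad up to $r_j$ (respectively $\max\{r_j,r_{j'}\}$ in the two-stockpile case) and returns along the other pad. Starting from expression~(\ref{eq:F_unimod}) and using the identity $f_i(r)=f_i(x^*)+(r-x^*)$, valid whenever $[x^*,r]$ lies inside a stockpile on pad $P_i$, I would rewrite the makespan of this schedule as $K^*+\Delta_0$, where $\Delta_0$ simplifies to $(r_j-x^*)(1+1/s)$ in the single-stockpile case and to the right-hand side of~(\ref{eq:left_assign}) in the two-stockpile case. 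The assignment rule then guarantees that $\Delta_0$ is no larger than the corresponding left-hand expression. To close the argument, I would apply the lower bound $f_i(x^*)\geqslant(x^*-l)+l/s$, obtained from $f_i(x)=(1-1/s)\,\ell(S^-_i(x))+x/s$, once for each critical stockpile. Then $K^*=f_1(x^*)+f_2(x^*)$ exceeds the left-hand side of~(\ref{eq:left_assign}) by $2\min\{l_j,l_{j'}\}/s\geqslant 0$ (or by $l_j/s\geqslant 0$ in the single-stockpile case), which yields $\Delta_0\leqslant K^*$ and hence $\tilde C_0\leqslant 2K^*$.

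For $\tilde C_1$, the crucial observation is that $R_1$ in the non-preemptive schedule reclaims only a subset of what it reclaimed in the preemptive schedule, and the leftmost point it must visit on each pad is at least $x^*$. An analogous calculation based on $g$ and~(\ref{eq:G_unimod}) yields $\tilde C_1\leqslant K^*$, so $\tilde C=\max\{\tilde C_0,\tilde C_1\}\leqslant 2K^*$. For asymptotic tightness I would use the instance with a single stockpile of length $p$ placed on $P_1$ and filling the entire rail $[0,L]=[0,p]$: then Theorem~\ref{thm:preempt_opt} gives $K^*=(p/2)(1+1/s)$, while the rule assigns the stockpile to $R_0$ and the resulting unimodal schedule has $\tilde C=p(1+1/s)$, so the ratio equals $2$ for every $s$. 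The main obstacle in executing this plan is the algebraic matching in the two-stockpile case: correctly identifying $\Delta_0$ with the right-hand side of~(\ref{eq:left_assign}) after the $f_i$-rewrite, and producing a lower bound on $K^*$ sharp enough to simultaneously absorb both the reclaim and the travel contributions on the two pads.
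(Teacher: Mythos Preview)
Your argument for $\tilde C_0$ is essentially the paper's (indeed with more detail on the two-stockpile inequality), but there is a real gap in your treatment of $\tilde C_1$. You observe that $R_1$ only reclaims stockpiles lying to the right of $x^*$ and deduce $\tilde C_1\leqslant K^*$ from~(\ref{eq:G_unimod}). However, $G(j,j')$ is $R_1$'s makespan \emph{only if $R_1$ does not have to wait for $R_0$}. Waiting can be forced: after the critical stockpile(s) are assigned to $R_0$, the rightmost point visited by $R_0$ is $r_j>x^*$, while the leftmost point $R_1$ must reach on pad~$P_2$ is $l_{j'+1}$, which may satisfy $x^*\leqslant l_{j'+1}<r_j$. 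In that situation the two reclaimers' ranges overlap in $[l_{j'+1},r_j]$ and one of them must wait. If you let $R_0$ wait so that $\tilde C_1=G(j,j')\leqslant K^*$, then your bound $\tilde C_0=F(j,j')$ is no longer valid, since it presumed $R_1$ waits; conversely, if $R_1$ waits then $\tilde C_1=G(j,j')+w$ and you have not bounded $w$. You cannot combine a $\tilde C_0$ bound from one of the eight schedules with a $\tilde C_1$ bound from a different one. The paper fixes the schedule in which $R_0$ starts on $P_1$, $R_1$ starts on $P_2$, and $R_1$ waits; it then computes the waiting time explicitly as $w\leqslant g_2(x_k)-g_2(r_j)+(r_j-x_k)$ and shows $G(j,j')+w\leqslant 2K^*$. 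This step is missing from your plan.

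Your tightness example is also off target for the statement as phrased. With a single stockpile $[0,L]$ on $P_1$, the non-preemptive optimum $C^*$ equals $L(1+1/s)$ (some reclaimer must traverse the whole stockpile and return), which coincides with $\tilde C$; so the approximation ratio $\tilde C/C^*$ is $1$, not $2$. You have only shown that the inequality $\tilde C\leqslant 2K^*$ is tight, which is weaker. The theorem asserts that the \emph{approximation factor} $2$ is asymptotically best possible, and for this the paper uses two stockpiles of length $L$, one on each pad: the rule assigns both to $R_0$, giving $\tilde C=2L$, whereas assigning one stockpile to each reclaimer yields $C^*=(1+2/s)L$, so $\tilde C/C^*\to 2$ as $s\to\infty$.
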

\begin{proof}
To see that the factor of $2$ cannot be improved, consider two
stockpiles each of length $L$. According to our rule, both
stockpiles are assigned to the left reclaimer and this yields a
makespan of $\tilde C=2L$. On the other hand, assigning one
stockpile to each reclaimer and reclaiming both of them from right
to left yields a makespan of $C^*=(1+2/s)L$.

Without loss of generality, we make the following assumptions.
\begin{itemize}
\item
If $x^*\in S_1\triangle S_2$, then $x^*\in S_1\setminus S_2$ and for
the stockpile $j$ with $l_j<x^*<r_j$, we have $x^*-l_j\geqslant
r_j-x^*$, so that stockpile $j$ is assigned to $R_0$.
\item
If $x^*\in S_1\cap S_2$, then for the two stockpiles
$j\in\{1,\ldots,n_1\}$ and $j'\in\{n_1+1,\ldots,n\}$ with
$l_j<x^*<r_j$ and $l_{j'}<x^*<r_{j'}$, we have $r_j\geqslant r_{j'}$
and (\ref{eq:left_assign}) holds, so that both stockpiles are
assigned to $R_0$.
\end{itemize}
Furthermore, it turns out that in order to establish the factor 2
bound it is sufficient to consider the setting where $R_0$ starts on
pad $P_1$, $R_1$ starts on pad $P_2$, and $R_1$ waits if necessary.
We start by bounding $\tilde C_0$, the makespan for $R_0$. If
$x^*\in S_1\setminus S_2$ then
\[\tilde C_0=f_1(r_j)+(r_j-x^*)/s+f_2(x^*)=f_1(x^*)+(r_j-x^*)(1+1/s)+f_2(x^*)\leqslant 2K^*,\]
where the last inequality follows from
\[K^*=f_1(x^*)+f_2(x^*)\geqslant (x^*-l_j)+x^*/s\geqslant(r_j-x^*)(1+1/s).\]
If $x^*\in S_1\cap S_2$ then
\begin{multline*}
\tilde C_0=f_1(x^*)+(r_j-x^*)+(r_j-r_{j'})/s+f_2(x^*)+(r_{j'}-x^*)\\
= K^*+(r_j-x^*)+(r_j-r_{j'})/s+(r_{j'}-x^*)\leqslant 2K^*.
\end{multline*}
The bound for the makespan $\tilde C_1$ of $R_1$ can be derived
simultaneously for both cases after noting that in our setting we
have $r_{j'}=x_k$. If $g_2(x_k)\leqslant f_1(x_k)$ or
$g_2(r_j)\geqslant f_1(r_j)$ then no waiting is necessary, and the
makespan of $R_1$ is
\[\tilde C_1\leqslant g_2(x_k)+(r_j-x_k)+g_1(r_j)\leqslant g_1(x_k)+g_2(x_k)\leqslant K^*.\]
Otherwise the waiting time for $R_1$ is
\[f_1(r_j)-g_2(r_j)=[f_1(x_k)+(r_j-x_k)]-g_2(r_j)<g_2(x_k)-g_2(r_j)+(r_j-x_k)\]
hence the makespan is (using $g_1(x_k)=g_1(r_j)+(r_j-x_k)$)
\begin{multline*}
\tilde C_1= g_2(x_k)+[g_2(x_k)-g_2(r_j)+(r_j-x_k)]+(r_j-x_k)/s+g_1(r_j)\\
= [g_1(x_k)+g_2(x_k)]+[g_2(x_k)-g_2(r_j)+(r_j-x_k)/s]\leqslant
2K^*.\qedhere
\end{multline*}
\end{proof}

Example~\ref{ex:unimod_not_opt} shows that unimodal routing might
not be optimal. Next, we examine whether contiguous assignment is
always optimal, i.e., whether there always exists an optimal
schedule characterized by two stockpiles $j \in \{1,\ldots,n_1\}$ on
pad $P_1$ and $j' \in \{n_1+1,\ldots,n\}$ on pad $P_2$ and an
associated assignment of stockpiles $\{1,\ldots,j,
j',j'-1,\ldots,n_1+1\}$ to $R_0$ and the remaining stockpiles to
$R_1$. We call such a schedule a \emph{contiguous} schedule. The
next example shows that it is possible that no contiguous schedule
is optimal.
\begin{example}\label{ex:weak_uni_not_opt}
Consider the instance (illustrated in
Figure~\ref{fig:Counter_Assign}) with $n=n_1=5$, i.e., all
stockpiles on pad $P_1$, and stockpile positions $(0,1) \, (1,2),
\,(2,4), \, (4,5), \, (5,6)\}$. The best we can do with a contiguous
schedule is to assign stockpiles $1$ and $2$ to $R_0$ and the other
stockpiles to $R_1$, which yields a makespan of $4+4/s$. But by
assigning stockpiles $1$, $2$ and $4$ to $R_0$ and stockpiles $3$
and $5$ to $R_1$ we can achieve a makespan of $3+9/s$. The ratio
$\frac{4+4/s}{3+9/s}$ tends to $4/3$ for $s\to\infty$.
\begin{figure}[htb]
\centering
\begin{tikzpicture}[scale =0.55,
      declare function={
        func(\x) = (4.0)*(\x);
      }]
\pgfkeys{
     /pgf/number format/precision=1,
    /pgf/number format/fixed zerofill=true,
    /pgf/number format/fixed
}
      \draw[thick] (-3.2,-1.3) rectangle (7.6,8.6);
     \node at (2.2,8) {{\small contiguous: makespan $4.22$}};
      \draw[->] (-1,0.0) -- (5,0.0) node[anchor = west] {{\small time}} ;
      \draw[->] (-1,0.0) -- (-1,7) node[anchor = east] {{\small space}} ;
\foreach \y in {1,2,3,4,5,6} {
    \draw (-0.98,\y) -- ( -1.02,\y) node[left, xshift = -0.02] {{\small $\y$}};
} \foreach \x in {0,1,2,3,4,5} {
    \draw (\x-1.0, 0.02) -- (\x-1.0, -0.02) node[below, yshift = -0.02]  {{\small $\x$}};
}

    \draw[very thin, loosely dotted] (-1,1) -- (5,1) ;
    \draw[very thin, loosely dotted] (-1,2) -- (5,2) ;
    \draw[very thin, loosely dotted] (-1,3) -- (5,3) ;
    \draw[very thin, loosely dotted] (-1,4) -- (5,4) ;
    \draw[very thin, loosely dotted] (-1,5) -- (5,5) ;
    \draw[very thin, loosely dotted] (-1,6) -- (5,6) ;
      \coordinate (SPLRC) at (-1,0.0);
      \coordinate (SPAA) at (-1,0);
      \draw[brown, thick] (SPLRC) -- (SPAA);
      \coordinate (EPAA) at ($({(1-0)/1}, 1) +( SPAA |- 0,0)$);
      \draw[brown, thick] (SPAA) -- (EPAA);
      \coordinate (SPBB) at ($({(1-1)/18}, 1) +( EPAA |- 0,0.0)$);
      \draw[brown, thick] (EPAA) -- (SPBB);
      \coordinate (EPBB) at ($({(2-1)/1},2) +( SPBB |- 0,0)$);
      \draw[brown, thick] (SPBB) -- (EPBB);
      \coordinate (EPLRC) at ($({(2-0.0)/18}, 0.0) +( EPBB |- 0,0)$);
      \draw[brown, thick] (EPBB) -- (EPLRC);
      \draw[brown, thick] (5.5,4) -- (6.1,4);
      \node at (6.7,4) {{\small $R_0$}};
      \coordinate (SPRRC) at (-1.0,6);
      \coordinate (SPEE) at (-1,6);
      \draw[gray, thick, dashed] (SPRRC) -- (SPEE);
      \coordinate (EPEE) at ($({(6-5)/1}, 5) +( SPEE |- 0,0)$);
      \draw[gray, thick, dashed] (SPEE) -- (EPEE);
      \coordinate (SPDD) at ($({(5-5)/18}, 5) +( EPEE |- 0,0.0)$);
      \draw[gray, thick, dashed] (EPEE) -- (SPDD);
      \coordinate (EPDD) at ($({(5-4)/1}, 4) +( SPDD |- 0,0)$);
      \draw[gray, thick, dashed] (SPDD) -- (EPDD);
      \coordinate (EPCC) at ($({(4-2)/1}, 2) +( EPDD |- 0,0)$);
      \draw[gray, thick, dashed] (EPDD) -- (EPCC);
      \coordinate (EPRRC) at ($({(6-2)/18}, 6) +( EPCC |- 0,0)$);
      \draw[gray, thick, dashed] (EPCC) -- (EPRRC);
      \draw[gray, thick, dashed] (5.5,3) -- (6.1,3);
      \node at (6.7,3) {{\small $R_1$}};


      \draw[thick] (8.5,-1.3) rectangle (19.3,8.6);
     \node at (13.9,8) {{\small optimal: makespan 3.5}};
      \draw[->] (10.7,0.0) -- ({10.7+6},0.0) node[anchor = west] {{\small time}} ;
      \draw[->] (10.7,0.0) -- (10.7,7) node[anchor = east] {{\small space}} ;
\foreach \y in {1,2,3,4,5,6} {
    \draw (10.72,\y) -- ( 10.68,\y) node[left, xshift = -0.02] {{\small $\y$}};
} \foreach \x in {10,11,12,13,14,15} {
    \pgfmathtruncatemacro{\result}{\x-10}
    \draw (\x+0.7, 0.02) -- (\x+0.7, -0.02) node[below, yshift = -0.02] {{\small \result}};
}
    \draw[very thin, loosely dotted] (10.7,1) -- (16.7,1) ;
    \draw[very thin, loosely dotted] (10.7,2) -- (16.7,2) ;
    \draw[very thin, loosely dotted] (10.7,3) -- (16.7,3) ;
    \draw[very thin, loosely dotted] (10.7,4) -- (16.7,4) ;
    \draw[very thin, loosely dotted] (10.7,5) -- (16.7,5);
    \draw[very thin, loosely dotted] (10.7,6) -- (16.7,6) ;
      \coordinate (SPLR) at (10.7,0.0);
      \draw[brown, thick] (SPLR) -- ({10.7+(1)/(9)},0);
      \coordinate (EPA) at ({12.7+(1)/(9)},2);
      \draw[brown, thick] ({10.7+(1)/(9)},0) -- (EPA);
      \coordinate (EPB) at ({12.7+(2)/(9)},4);
      \draw[brown, thick] (EPA) -- (EPB);
      \coordinate (SPD) at ({13.7+(2)/(9)},5);
      \draw[brown, thick] (EPB) -- (SPD);
      \coordinate (EPD) at ({10.7+3.5},0);
      \draw[brown, thick] (SPD) -- (EPD);
      \draw[brown, thick] (17.2,4) -- (17.8,4);
      \node at (18.4,4) {{\small $R_0$}};
      \coordinate (SPRR) at (10.7,6);
      \coordinate (SPC) at ($({(6-2)/18},2) +(10.7,0)$);
      \draw[gray, thick, dashed] (SPRR) -- (SPC);
      \coordinate (EPC) at ($({(4-2)/1}, 4) +( SPC |- 0,0)$);
      \draw[gray, thick, dashed] (SPC) -- (EPC);
      \coordinate (SPE) at ($({(5-4)/18},5) +( EPC |- 0,0.0)$);
      \draw[gray, thick, dashed] (EPC) -- (SPE);
      \coordinate (EPE) at ($({(6-5)/1}, 6) +( SPE |- 0,0)$);
      \draw[gray, thick, dashed] (SPE) -- (EPE);
      \draw[gray, thick, dashed] (17.2,3) -- (17.8,3);
      \node at (18.4,3) {{\small $R_1$}};
  \end{tikzpicture}
\bigskip

\begin{tikzpicture}[scale=1.8]
 \fill [gray] (0,0) -- (0.2,0.3) -- (0.8,0.3) -- (1,0) -- cycle;
 \fill [gray] (1,0) -- (1.2,0.3) -- (1.8,0.3) -- (2,0) -- cycle;
 \fill [gray] (2,0) -- (2.2,0.3) -- (3.8,0.3) -- (4,0) -- cycle;
 \fill [gray] (4,0) -- (4.2,0.3) -- (4.8,0.3) -- (5,0) -- cycle;
 \fill [gray] (5,0) -- (5.2,0.3) -- (5.8,0.3) -- (6,0) -- cycle;
 \draw  (0,0) -- (0.2,0.3) -- (0.8,0.3) -- (1,0) -- cycle;
 \draw  (1,0) -- (1.2,0.3) -- (1.8,0.3) -- (2,0) -- cycle;
 \draw  (2,0) -- (2.2,0.3) -- (3.8,0.3) -- (4,0) -- cycle;
 \draw  (4,0) -- (4.2,0.3) -- (4.8,0.3) -- (5,0) -- cycle;
 \draw  (5,0) -- (5.2,0.3) -- (5.8,0.3) -- (6,0) -- cycle;
 \draw (0,0) -- (6,0);
 \coordinate [label = below:{\small $1$}] (A) at (0.5,0);
 \coordinate [label = below:{\small $2$}] (B) at (1.5,0);
 \coordinate [label = below:{\small $3$}] (C) at (3,0);
 \coordinate [label = below:{\small $4$}] (D) at (4.5,0);
 \coordinate [label = below:{\small $5$}] (E) at (5.5,0);
\coordinate [label = above:$P_1$] (P1a) at (-.3,0);
\coordinate [label = above:$P_2$] (P2a) at (-.3,-.6);
\draw (0,-.6) -- (6.0,-.6);
\end{tikzpicture}
\caption{An instance where contiguous scheduling is not optimal
(with $s=18$). At the top the reclaimer movements and at the bottom
the stockpile positions.}\label{fig:Counter_Assign}
\end{figure}
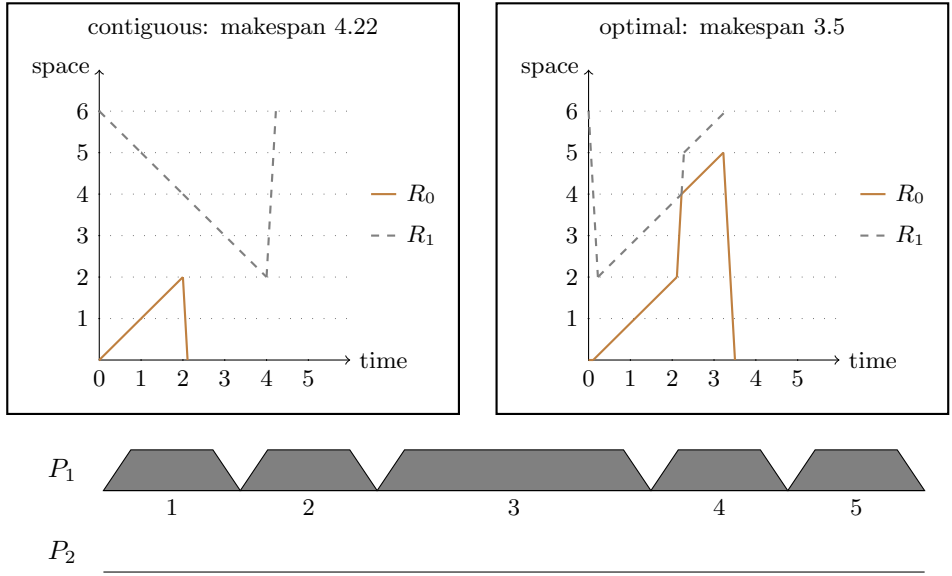
\end{example}

We conjecture that Example~\ref{ex:weak_uni_not_opt} represents the
worst case for the performance of contiguous schedules.
\begin{cnj}\label{con:weak_uni_ratio}
An optimal contiguous schedule provides a $4/3$-approximation for
the problem of scheduling two reclaimers when the positions of the
stockpiles are given and the stockpiles can be reclaimed in any
order.
\end{cnj}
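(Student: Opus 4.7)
My plan is to prove the conjecture by comparing the best contiguous schedule directly with a carefully normalized optimal schedule, rather than with the preemptive lower bound $K^*$ as in the proof of Theorem~\ref{thm:2_approx}. Working against $K^*$ alone cannot suffice: in Example~\ref{ex:weak_uni_not_opt} the true optimum satisfies $C^*/K^*\to 1$ while the best contiguous makespan satisfies $C^*_{\text{contig}}/C^*\to 4/3$, so the gap between $K^*$ and $C^*_{\text{contig}}$ is at least $4/3$, and any proof must exploit structural information about $C^*$ itself.

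The first step would be to put an arbitrary optimal schedule $(H_0^*,H_1^*)$ into canonical form: each reclaimer changes direction a minimum number of times, never waits longer than forced by the no-passing constraint, and each stockpile is traversed in a fixed direction. This reduces an optimal schedule to a combinatorial object, namely an assignment of stockpiles to $\{R_0,R_1\}$ together with a reclaim order for each reclaimer consistent with $H_1(t)\geqslant H_0(t)$. The second step is to characterize non-contiguity: a \emph{cross-assignment}, meaning a stockpile on pad $P_i$ assigned to $R_0$ with some stockpile to its left on the same pad assigned to $R_1$, can only be executed if $R_0$ reaches the rightward stockpile before $R_1$ encroaches, or if $R_0$ reclaims it after $R_1$ has retreated to its right. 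Either option forces high-speed travel that appears in both reclaimer makespans. The third step would be to convert this extra travel into a bound of the form $C^*\geqslant \tfrac{3}{4}\min_{(j,j')}C(j,j')$, where the minimum runs over contiguous splits $(j,j')$ and $C(j,j')$ is the associated contiguous makespan defined before Theorem~\ref{thm:best_unimodal}. Combined with Theorem~\ref{thm:best_unimodal}, which finds the best such contiguous schedule in quadratic time, this would yield the desired $4/3$-approximation.

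The main obstacle will be the third step. A single cross-assignment of one rightward stockpile can be charged with a localized travel detour, but a non-contiguous optimal schedule may contain many cross-assignments interacting across both pads. A naive uncrossing exchange, resolving one cross-assignment at a time and paying a local cost, is likely to accumulate losses beyond the $4/3$ threshold, so a global charging scheme is probably required that simultaneously accounts for the forced high-speed travel of each reclaimer induced by the whole pattern of non-contiguities. Balancing these charges against the no-passing constraint and the interaction of the two pads, while keeping the final factor at exactly $4/3$, is the delicate part of the argument, and I suspect this is why the statement is presented as a conjecture rather than a theorem.
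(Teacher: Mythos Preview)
The statement you are addressing is presented in the paper as an open \emph{conjecture}, not a theorem; the paper provides no proof whatsoever, only the motivating Example~\ref{ex:weak_uni_not_opt} showing that the factor~$4/3$ cannot be improved. So there is no ``paper's own proof'' to compare your proposal against.

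Your proposal is not a proof but an outline of a possible strategy, and you are candid about this. The first two steps (normalizing an optimal schedule and isolating cross-assignments) are reasonable preprocessing moves, but the entire content of the conjecture lies in your third step, which you leave as a bare inequality $C^*\geqslant \tfrac{3}{4}\min_{(j,j')}C(j,j')$ with no mechanism for establishing it. You correctly identify the obstacle---multiple interacting cross-assignments across both pads and the need for a global rather than local charging scheme---and you correctly observe that this is presumably why the authors left it open. But identifying the obstacle is not the same as overcoming it: your proposal contains no concrete charging argument, no invariant, and no inductive structure that would actually deliver the factor~$4/3$. As it stands, the proposal is an honest assessment of where the difficulty lies rather than a proof, and the conjecture remains open.
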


A natural approach for finding an optimal contiguous schedule is to
determine an optimal schedule for each pair
$(j,j')\in\{1,\ldots,n_1\} \times \{n_1+1,\ldots,n\}$ and pick the
best one. Unfortunately, it is already an NP-hard problem to
determine the best routing for a given contiguous assignment of stockpiles to
reclaimers.
\begin{thm}\label{thm:oracle_hardness}
\red{Suppose that we are given the stockpile positions and a contiguous assignment of the stockpiles to
two reclaimers. Then it is NP-hard to find an optimal schedule.}
\end{thm}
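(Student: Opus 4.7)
The plan is to reduce from \textsc{Partition}, analogous to the proof of Theorem~\ref{thm:np_hard_1}, but with the added difficulty that the contiguous stockpile-to-reclaimer assignment is now fixed by the construction rather than being a decision of the scheduler. Given an instance with positive integers $a_1,\ldots,a_m$ summing to $2B$, I would construct a reclaimer scheduling instance with a prescribed contiguous assignment in which a threshold makespan is achievable if and only if the \textsc{Partition} instance is a YES-instance.

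The key observation is that if the two reclaimers were decoupled, each could reclaim its stockpiles in any order with essentially the same makespan (since every position it must visit is fixed by its task set), so NP-hardness must come entirely from the no-passing interaction. I would therefore engineer a spatial overlap between the two reclaimers' territories by placing a long stockpile on $P_2$ assigned to $R_1$ whose leftmost position lies strictly to the left of the rightmost extent of $R_0$'s stockpiles on $P_1$. This forces $R_1$ to enter the region in which $R_0$ operates and hence, via no-passing, forces $R_0$ to be at or below a specific spatial threshold $x^*$ by a specific time $T^*$. The Partition item stockpiles, of lengths $a_1,\ldots,a_m$, are placed in $R_0$'s territory on $P_1$; any ``spacer'' stockpiles needed to keep the prescribed assignment contiguous are chosen to contribute only fixed constants to the makespan. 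The decision left to $R_0$ is which subset of items to reclaim on its outbound trip to $x^*$ and which on the return, and the time at which $R_0$ first clears $x^*$ depends monotonically on the total length of the outbound subset; with $s$ and the geometry tuned carefully, this time equals $T^*$ exactly when the outbound subset sums to $B$.

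To complete the argument I would specify explicit values for $L$, $s$, and all stockpile positions so that the threshold-time constraint translates into the equality $\sum_{i\in S}a_i=B$, verify the contiguous-assignment conditions $r_j\leqslant l_{j+1}$ on each pad, and prove both directions of the equivalence: from a \textsc{Partition} solution $S$ I would exhibit an explicit feasible schedule meeting $T^*$, and from any schedule meeting $T^*$ I would extract a balanced subset by recording which items $R_0$ has reclaimed at the moment it clears $x^*$.

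The main obstacle, and the technical heart of the reduction, is ruling out alternative routings. Example~\ref{ex:unimod_not_opt} shows that $R_0$ can benefit from non-unimodal ``zigzag'' trajectories, so the construction must be tuned---via the choice of $s$ relative to the item sizes and the positions of the spacers---so that any zigzag either still corresponds to an outbound subset summing to $B$, or else incurs strictly more travel than a direct outbound-return schedule and cannot meet $T^*$. I would handle this with a geometric bookkeeping argument that lower-bounds, for every time $t$, the earliest position $R_0$ can occupy as a function of the items it has reclaimed by time $t$; the parameters are then chosen so that this bound is tight only at balanced partitions, and the usual ``$+1$'' additive slack (as in Theorem~\ref{thm:np_hard_1}) absorbs rounding in the travel times.
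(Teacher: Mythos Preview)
Your high-level plan---reduce from \textsc{Partition} and let the no-passing constraint create a timing bottleneck---is the same as the paper's, but the concrete mechanism you sketch has a gap.  The single interaction you describe (``$R_0$ must be at or below $x^*$ by time $T^*$'') yields only a \emph{one-sided} inequality on the outbound subset sum: if reclaiming outbound subset $S$ lets $R_0$ clear $x^*$ at time $T^*$, then reclaiming any smaller subset clears $x^*$ strictly earlier, and $R_0$'s own unimodal makespan is independent of $S$.  Nothing in your outline forces $\sum_{i\in S}a_i\geqslant B$, so you do not yet encode the equality $\sum_{i\in S}a_i=B$ that \textsc{Partition} requires.  The ``$+1$ slack'' trick from Theorem~\ref{thm:np_hard_1} does not help here because in that theorem the two-sidedness came from the \emph{assignment} decision (each reclaimer's load must be $\leqslant 3B$), which is precisely what is now fixed.

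The paper obtains the missing second inequality by a substantially more elaborate construction: the \textsc{Partition} items are placed at the far right and assigned to $R_1$ (not $R_0$), together with several large ``frame'' stockpiles on both pads, so that \emph{two} separate timing interactions between the reclaimers occur.  The crux is that $R_1$ must enter a central interval $X=[9B,35B]$ to reclaim a long stockpile there, and the feasible entry time $t_0$ is squeezed into a narrow window from both sides.  A case split on $t_0$, combined with what $R_0$ must be doing with its own stockpiles at the same time, then produces in any NO-instance the contradictory pair $\sum_{j\in I}a_j<B$ and $\sum_{j\in I'}a_j<B$ for the items $R_1$ reclaims before and after visiting $X$.  It is this two-sided squeeze---engineered via multiple large stockpiles and verified by a detailed numerical case analysis---that your outline is missing; the zigzag issue you flag is real, but ruling out all alternative routings in the paper ultimately relies on these carefully chosen frame sizes rather than on a generic bookkeeping bound.
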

\begin{proof}
We provide a reduction from \textsc{Partition}. Let the instance be
given by positive integers $a_1,\ldots,a_m$ whose sum is $2B$. A
corresponding instance of the reclaimer scheduling problem is
constructed as follows (see Figure~\ref{fig:instance-pf} for an
illustration). The pad length is $L=53B$ and the travel speed is
$s=2$. On pad $P_1$, there are $m+3$ stockpiles with lengths $a_1$,
$\ldots$, $a_m$, $9B$, $26B$ and $16B$, and on pad $P_2$ there are
$3$ stockpiles with lengths $35B$, $6B$ and $2B$. The positions of
the first $m$ stockpiles corresponding to the integers from the
\textsc{Partition} instance are determined by
\begin{align*}
  l_j &= L-\sum_{i=1}^ja_i, & r_j &= l_j+a_j && \text{for }j\in\{1,\ldots,m\}
\end{align*}
and the positions of the 6 dummy stockpiles are $[0,9B]$,
$[9B,35B]$, $[35B,51B]$, $[0,35B]$, $[35B,41B]$ and $[41B,43B]$. Let
stockpiles $1,\ldots, m, m+2, m+3$, and $m+6$ be assigned to
reclaimer $R_1$ and let stockpiles $m+1$, $m+4$, and $m+5$ be
assigned to reclaimer $R_0$.
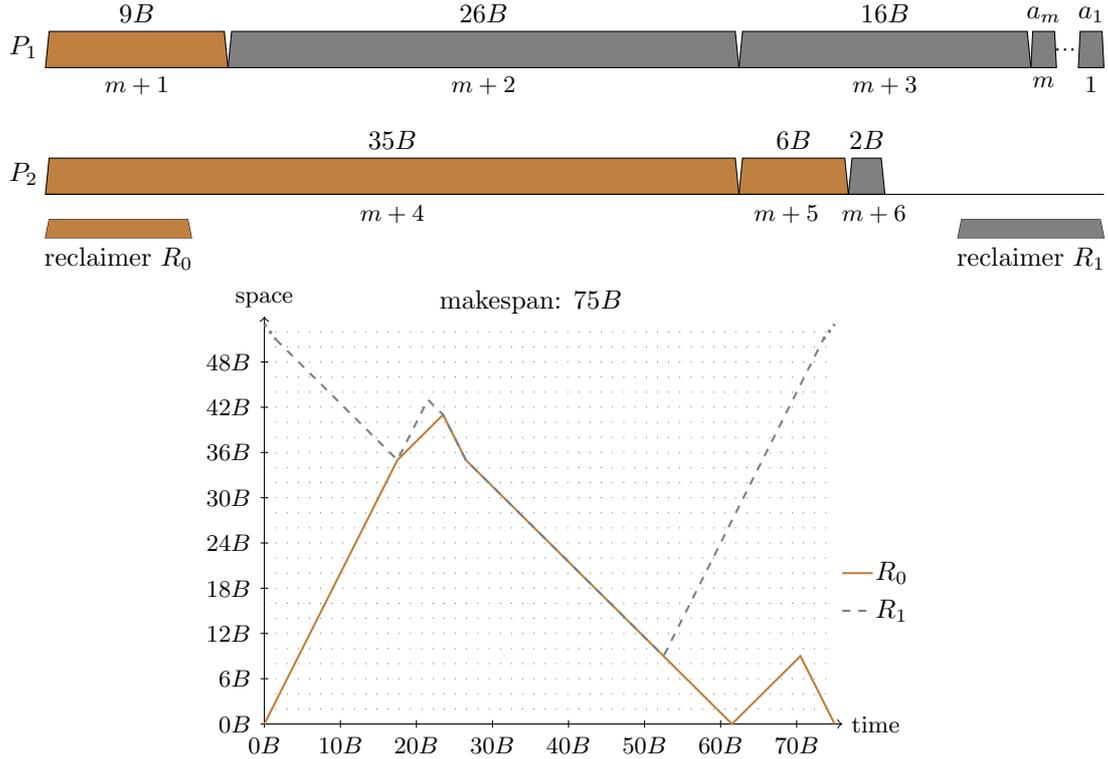
\begin{figure}[htb]
\centering
 \begin{tikzpicture}[scale=0.48]
        \coordinate (A) at (0,0);
        \coordinate (B) at (0.1,1);
        \coordinate (C) at (4.9,1);
        \coordinate (D) at (5,0);
        \fill [brown] (A) -- (B) -- (C) -- (D) -- cycle;
        \fill [gray] (5,0) -- (5.1,1) -- (18.9,1) -- (19,0) -- cycle;
        \fill [gray] (19,0) -- (19.1,1) -- (26.9,1) -- (27,0) -- cycle;
        \fill [gray] (27,0) -- (27.05,1) -- (27.65,1) -- (27.7,0) -- cycle;
        \fill [gray] (28.3,0) -- (28.35,1) -- (28.95,1) -- (29,0) -- cycle;
                \draw (A) -- (B) -- (C) -- (D) -- cycle;
        \draw (5,0) -- (5.1,1) -- (18.9,1) -- (19,0) -- cycle;
        \draw (19,0) -- (19.1,1) -- (26.9,1) -- (27,0) -- cycle;
        \draw (27,0) -- (27.05,1) -- (27.65,1) -- (27.7,0) -- cycle;
        \draw (28.3,0) -- (28.35,1) -- (28.95,1) -- (29,0) -- cycle;
        \draw [dotted, thick] (27.7,0.5) -- (28.3,0.5);
        \coordinate [label = below:{\small $m+1$}] (A) at (2.5,0);
        \coordinate [label = below:{\small $m+2$}] (B) at (12,0);
        \coordinate [label = below:{\small $m+3$}] (C) at (23,0);
        \coordinate [label = below:{\small $m$}] (D) at (27.35,0);
        \coordinate [label = below:{\small $1$}] (G) at (28.65,0);
        \coordinate [label = above:$9B$] (A) at (2.5,1);
        \coordinate [label = above:$26B$] (B) at (12,1);
        \coordinate [label = above:$16B$] (C) at (23,1);
        \coordinate [label = above:$a_m$] (D) at (27.35,1);
        \coordinate [label = above:$a_1$] (G) at (28.65,1);
        \fill [brown] (0,-3.5) -- (0.1,-2.5) -- (18.9,-2.5) -- (19,-3.5) -- cycle;
        \fill [brown] (19,-3.5) -- (19.1,-2.5) -- (21.9,-2.5) -- (22,-3.5) -- cycle;
        \fill [gray] (22,-3.5) -- (22.1,-2.5) -- (22.9,-2.5) -- (23,-3.5) -- cycle;
                \draw (0,-3.5) -- (0.1,-2.5) -- (18.9,-2.5) -- (19,-3.5) -- cycle;
        \draw (19,-3.5) -- (19.1,-2.5) -- (21.9,-2.5) -- (22,-3.5) -- cycle;
        \draw (22,-3.5) -- (22.1,-2.5) -- (22.9,-2.5) -- (23,-3.5) -- cycle;
        \draw (0,-3.5) -- (29,-3.5);
        \coordinate [label = below:{\small $m+4$}] (A) at (9.5,-3.55);
        \coordinate [label = below:{\small $m+5$}] (B) at (20.3,-3.55);
        \coordinate [label = below:{\small $m+6$}] (C) at (22.7,-3.55);
        \coordinate [label = above:$35B$] (G) at (9.5,-2.5);
        \coordinate [label = above:$6B$] (A) at (20.5,-2.5);
        \coordinate [label = above:$2B$] (B) at (22.5,-2.5);
        \draw (0,-4.7) -- (0.1,-4.2) -- (3.9,-4.2) -- (4,-4.7) -- cycle;
  \fill[brown] (0,-4.7) -- (0.1,-4.2) -- (3.9,-4.2) -- (4,-4.7) -- cycle;
  \coordinate [label = below:reclaimer $R_0$] (R0) at (2,-4.7);
  \draw (25,-4.7) -- (25.1,-4.2) -- (28.9,-4.2) -- (29,-4.7) -- cycle;
  \fill[gray] (25,-4.7) -- (25.1,-4.2) -- (28.9,-4.2) -- (29,-4.7) -- cycle;
  \coordinate [label = below:reclaimer $R_1$] (R1) at (27,-4.7);
   \coordinate [label = above:$P_1$] (P1a) at (-.6,0);
   \coordinate [label = above:$P_2$] (P2a) at (-.6,-3.5);
        \end{tikzpicture}
        \begin{tikzpicture}[scale =0.1]
        \centering
        \pgfkeys{
            /pgf/number format/precision=0,
            /pgf/number format/fixed zerofill=true,
            /pgf/number format/fixed
        }
            \node at (35,56) {{makespan: $75B$}};
            \draw[->] (0.0,0.0) -- (76.0,0.0) node[anchor=west] {{\small time}}  ;
            \draw[->] (0.0,0.0) -- (0.0,54.0) node[anchor=south] {{\small space}} ;
            \foreach \y in {0,...,8}
            { \pgfmathsetmacro\result{6*\y}
            \draw ( 0.5, 6*\y) -- ( -0.5, 6*\y) node[left, xshift = -0.05] {{\small{\pgfmathprintnumber{\result}$B$}}};
            }
            \foreach \x in {0,...,7}
            { \pgfmathsetmacro\result{10*\x}
            \draw (\result, .5) -- (\result, -.5) node[below, yshift = -0.02] {{\small{\pgfmathprintnumber{\result}$B$}}};
            }

            \foreach \y in {0,...,26}
                \draw[very thin, loosely dotted] (0.0,2*\y) -- (75,2*\y) ;
            \draw[brown, thick] (0,0) -- (17.5,35) -- (23.5,41) -- (26.5,35) -- (61.5,0) -- (70.5,9) -- (75,0);
            \draw[gray][very thick][dotted](0.0,53) -- (1.5,51);
            \draw[gray][very thick][dotted](73.5,51) -- (75,53);
            \draw[gray, thick, dashed] (1.5,51) -- (17.5,35) -- (21.5,43) -- (23.5,41) -- (26.5,35) -- (52.5,9) -- (73.5,51);
            \draw[gray, thick, dashed] (76,15) -- (80,15);
            \node at (82.5,15) {{$R_1$}};
            \draw[brown, thick] (76,20) -- (80,20);
            \node at (82.5,20) {{$R_0$}};
       \end{tikzpicture}
\caption{Reclaimer scheduling instance and optimal reclaimer
movements used in the proof of Theorem~\ref{thm:oracle_hardness}. The assignment of stockpiles to
reclaimers is indicated by colors, the numbers above the stockpiles are their lengths, and the
numbers below the stockpiles are their indices.}
\label{fig:instance-pf}
\end{figure}
We claim that the smallest possible makespan for this assignment is
$75B$, and that this makespan can be achieved if and only if the
\textsc{Partition} instance is a YES-instance. First, suppose the
instance is a YES-instance, and that $I\subseteq\{1,\ldots,m\}$ is
an index set with $\sum_{i\in I}a_i=B$. In this case, a makespan of
$75B$ is achieved by the following schedule (see
Figure~\ref{fig:instance-pf}).
\begin{itemize}
\item $R_0$ moves from $x=0$ to $x=35B$ without reclaiming anything, and reaches $x=35B$ at time $t=17.5B$. Then it reclaims stockpile $m+5$ from left to right, moves back to $x=35B$ (arriving at time $t=26.5B$), reclaims stockpile $m+4$ from right to left, then stockpile $m+1$ from left to right, and finally returns to its starting point at time $t=75B$.
\item $R_1$ moves from $x=53B$ to $x=51B$ while reclaiming the stockpiles $j$ with $j\in I$, then it reclaims stockpile $m+3$ from right to left, reaching $x=35B$ at time $t=17.5B$. It moves to $x=43B$ without reclaiming anything, reclaims stockpile $m+6$ from right to left, moves to $x=35B$, where it arrives at time $t=26.5B$. Then it reclaims stockpile $m+2$, moves back to $x=51B$, and finally to $x=53B$, reclaiming the stockpiles $j$ with $j\in\{1,\ldots,m\}\setminus I$.
\end{itemize}

Next, we assume the existence of a schedule with a makespan of at
most $75B$. We want to show that this implies that the instance is a
YES-instance. For the sake of contradiction, assume that the
\textsc{Partition} instance does not have a solution. In order to
reclaim the stockpile $m+2$, reclaimer $R_1$ has to spend a time
interval of length $39B$ in the interval $X=[9B,35B]$. It cannot
enter this interval before time $9B$, and the latest possible time
for leaving $X$ is $75B-9B=66B$. This implies that $R_1$ has to
enter $X$ between $t=9B$ and $t=(66-39)B=27B$. Let
$I,I'\subseteq\{1,\ldots,m\}$ be the sets of stockpiles that $R_1$
reclaims before and after its first visit to $X$, respectively. Note
that our assumption on the \textsc{Partition} instance implies
$\sum_{j\in I}a_j\neq B$.
\begin{description}
\item[Case 1.]
$R_1$ enters $X$ at time $t_0<26.5B$. Since $R_0$ cannot finish
reclaiming stockpile $m+5$ and be back at $x=35B$ before time
$26.5B$, this implies that $R_0$ starts reclaiming stockpile $m+5$
after $R_1$ has left $X$. If $R_0$ does not reclaim both stockpiles
$m+1$ and $m+4$ before stockpile $m+5$, then its makespan is at
least $t_0+39B+6B+16B+9B\geqslant 79B$. So we may assume that $R_0$
reclaims stockpiles $m+1$, $m+4$ and $m+5$ in this order. Its
makespan is at least $t_0+39B+6B+20.5B$, hence $t_0\leqslant
(75-65.5)B=9.5B$. This implies that before entering $X$, the maximal
time that $R_1$ can spend in the interval $[51B,53B]$ is
$9.5B-8B=1.5B$. Together with our assumption that $\sum_{j\in
I}a_j\neq B$, this implies $\sum_{j\in I}a_j < B$. On the other
hand, $R_0$ does not leave the interval $[35B,53B]$ before time
$9B+9B/2+41B+3B=57.5B$, and this is the earliest time at which $R_1$
can start reclaiming stockpile $m+3$. Consequently,
$57.5B+16B+B+\frac12\sum_{j\in I'}a_j\leqslant 75B$, i.e.,
$\sum_{j\in I'}a_j<B$, which is a contradiction to the fact that
$\sum_{j \in I \cup I'} a_j = 2B$.

\item[Case 2.]
$R_1$ enters $X$ at time $t_0\in[26.5B,\,27B]$.
$26.5B+39B+16B=81.5B>75B$ implies that stockpile $m+3$ needs to be
reclaimed before $R_1$ enters $X$. From our assumption about the
\textsc{Partition} instance and
\[75B\geqslant 26.5B+39B+8B+B+\frac12\sum_{j\in I'}a_j=74.5B+\frac12\sum_{j\in I'}a_j\]
it follows that $\sum_{j\in I'}a_j<B$, hence $\sum_{j\in I}a_j>B$.
Since $26.5B+39B+6B+20.5B=92B>75B$, we deduce that $R_0$ reclaims
stockpile $m+5$ before $R_1$ enters $X$, i.e., before time $27B$.
This is only possible if $m+5$ is the first stockpile reclaimed by
$R_0$, because $9B+16B+6B>27.5B$. Together with the makespan bound
of $75B$, this implies that $R_0$ enters the interval $[35B,53B]$ at
time $17.5B$ and leaves it at time $26.5B$. From $\sum_{j\in
I}a_j>B$ it follows that $R_1$ cannot finish reclaiming stockpile
$m+3$ at time $17.5B$. This leaves two possibilities.
\begin{description}
\item[Case 2.1.] Stockpile $m+6$ is reclaimed before $R_1$ enters $X$. Then the entering time is at least
\[3B/2+4B+2B+5B+16B=28.5B,\]
which is the required contradiction.
\item[Case 2.2.] Stockpile $m+6$ is reclaimed after $R_1$ enters $X$. Then the makespan of $R_1$ is at least
\[26.5B+39B+3B+2B+5B=75.5B,\]
which is the required contradiction.\qedhere
\end{description}
\end{description}
\end{proof}

\subsection{Precedence constraints}

\subsubsection{Single reclaimer}

In this section, for sake of simplicity, we assume that the stockpiles are indexed by their position in the
precedence chain, i.e., $J = \{1, 2, \ldots, n \}$ and $1
\rightarrow 2 \rightarrow \cdots \rightarrow n$ (where $i
\rightarrow j$ means that stockpile $i$ has to be reclaimed before
stockpile $j$). Furthermore, for convenience, we add two dummy
stockpiles, one at the beginning of the precedence chain (stockpile
0) and one at the end of the precedence chain (stockpile $n+1$) with
$(l_0,r_0) = (l_{n+1},r_{n+1}) = (0,0)$.

\begin{thm}
Determining an optimal schedule for a single reclaimer when the
positions of the stockpiles are given and the stockpiles have to be
reclaimed in a prespecified order can be done in time $O(n)$.
\end{thm}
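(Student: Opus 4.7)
The plan is to solve this by dynamic programming on the direction in which each stockpile is reclaimed. The key observation is that reclaiming stockpile $j$ takes a fixed time $p_j = r_j - l_j$ regardless of direction, and forces the reclaimer to enter at one endpoint and exit at the other. So each stockpile $j \in \{0, 1, \ldots, n+1\}$ admits only two choices $d_j \in \{L, R\}$: in direction $R$ the reclaimer enters at $l_j$ and exits at $r_j$, and in direction $L$ these are swapped. Let $\mathrm{in}(j, d)$ and $\mathrm{out}(j, d)$ denote the corresponding entry and exit positions.

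Once the vector of directions $(d_0, d_1, \ldots, d_{n+1})$ is fixed, the entire trajectory is essentially determined: between reclaiming stockpiles $j$ and $j+1$ the reclaimer should travel in a straight line from $\mathrm{out}(j, d_j)$ to $\mathrm{in}(j+1, d_{j+1})$ at the maximum speed $s$. Using the dummy stockpiles at $(0, 0)$ (for which both in and out are $0$), the total makespan becomes
\[
C \;=\; \sum_{j=1}^{n} p_j \;+\; \frac{1}{s}\sum_{j=0}^{n}\bigl|\mathrm{out}(j, d_j) - \mathrm{in}(j+1, d_{j+1})\bigr|.
\]
The first sum is a constant, so the problem reduces to choosing directions to minimize the second sum.

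I would then phrase this as a shortest path problem in a DAG with $2(n+2)$ vertices, one for each pair $(j, d)$, and edges $(j, d_j) \to (j+1, d_{j+1})$ of weight $|\mathrm{out}(j, d_j) - \mathrm{in}(j+1, d_{j+1})|/s$. Letting $T(j, d)$ denote the minimum travel time to process stockpiles $1, \ldots, j$ with $j$ reclaimed in direction $d$, one has the recursion
\[
T(j, d) \;=\; \min_{d' \in \{L, R\}}\left\{\,T(j-1, d') \;+\; \frac{1}{s}\bigl|\mathrm{out}(j-1, d') - \mathrm{in}(j, d)\bigr|\,\right\},
\]
with base case $T(0, R) = 0$. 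Each of the $O(n)$ states is computed in $O(1)$ time, so the optimal makespan is $\min_{d}T(n+1, d) + \sum_{j=1}^{n} p_j$, obtainable in $O(n)$ time.

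The only nontrivial point is to justify restricting to ``direct-travel'' schedules, i.e., that neither idling nor overshooting a stockpile can be beneficial. This is intuitively clear because the reclaimer acts alone on a one-dimensional rail, so any extra motion or wait strictly inflates the makespan without enabling any later reclaiming step to start earlier. I would make this rigorous by a short exchange argument: given any feasible schedule, locate the first moment at which the reclaimer is either idle or moving in a direction that carries it past its next entry point, and truncate/redirect that motion; the resulting schedule remains feasible and has makespan no larger, and iterating yields a schedule of the form covered by the DP above. This reduction is the main (though minor) obstacle; everything else is routine.
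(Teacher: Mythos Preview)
Your proposal is correct and takes essentially the same approach as the paper: a two-state-per-stockpile dynamic program over the reclaim direction, with the paper running it backward (defining $f_l(j),f_r(j)$ as the optimal cost from an endpoint of $j$ onward) while you run it forward. Your exchange argument justifying the restriction to direct-travel schedules is a detail the paper leaves implicit, but otherwise the arguments coincide.
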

\begin{proof}
Let $f_r(j)$ be the optimal makespan that can be obtained starting
from the right endpoint of stockpile $j$ and processing stockpiles
$j+1, \ldots, n+1$. Similarly, let $f_l(j)$ to be the optimal
makespan that can be obtained starting from the left endpoint of
stockpile $j$ and processing stockpiles $j+1, \ldots, n+1$.
Naturally, we are interested to $f_r(0) = f_l(0)$. The functions
$f_r$ and $f_l$ can be computed as follows.

\begin{align*}
  f_{r}(j) &=
  \begin{cases}
    0 & \text{if }j=n+1,\\
    \lvert r_{j+1}-l_{j+1}\rvert+\min\left\{\frac{\lvert r_{j}-l_{j+1}\rvert}{s}+f_{r}(j+1),\ \frac{\lvert r_{j}-r_{j+1}\rvert}{s}+f_{l}(j+1)\right\} & \text{if }j\leqslant n.
  \end{cases}\\
  f_{l}(j) &=
  \begin{cases}
     0 & \text{if }j=n+1,\\
    \lvert r_{j+1}-l_{j+1}\rvert+\min\left\{ \frac{\lvert l_{j}-l_{j+1}\rvert}{s}+f_{r}(j+1),\ \frac{\lvert l_{j}-r_{j+1}\rvert}{s}+f_{l}(j+1)\right\} & \text{if }j\leqslant n.
   \end{cases}
\end{align*}
and $f_l(0)$ can be computed backward from $f_l(n)$ and $f_r(n)$ in
$O(n)$ time.
\end{proof}

\subsubsection{Two reclaimers}

Each stockpile must be processed either from left to right or from
right to left by a reclaimer.

\begin{thm}\label{thm:two_pred}
If the travel speed $s$ is an integer, then an optimal schedule for two reclaimers when the positions of the stockpiles are given and the stockpiles have to be reclaimed in a prespecified order can be determined in pseudo-polynomial time, in particular $O(nL^3)$.
\end{thm}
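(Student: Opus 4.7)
The plan is to design a pseudo-polynomial dynamic program whose state consists of the index $j$ of the most recently reclaimed stockpile together with the integer positions $a, b \in \{0, 1, \ldots, L\}$ of the two reclaimers at its completion time $t_j$, subject to $a \leqslant b$ (no-passing). Let $T(j, a, b)$ denote the minimum $t_j$ for which this configuration is achievable. With the base case $T(0, 0, L) = 0$, the optimal makespan equals
\begin{equation*}
\min_{0 \leqslant a \leqslant b \leqslant L} \Bigl\{ T(n, a, b) + \max\bigl\{ a/s,\, (L-b)/s \bigr\} \Bigr\},
\end{equation*}
since both reclaimers must still return to their starting endpoints in parallel after the last reclaim. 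There are $O(nL^2)$ states.

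My first step is an integrality lemma: there is an optimal schedule whose reclaimer positions at every event time $t_j$ are integers. I would prove this by induction on $j$. The base case is immediate since the starting positions are $0$ and $L$. For the inductive step, note that the active reclaimer for stockpile $j+1$ ends at an integer endpoint $l_{j+1}$ or $r_{j+1}$, and by pushing all unavoidable waiting onto the idle reclaimer and eliminating idle-time padding on the active reclaimer one may assume that the increment $t_{j+1} - t_j$ has the form (integer)$/s$; consequently the idle reclaimer's maximum displacement $s(t_{j+1} - t_j)$ in that interval is an integer, so any target position can be perturbed to the nearest integer reachable without obstructing the active reclaimer. Since the active reclaimer's required starting positions in all future intervals are stockpile endpoints, hence integers, this perturbation does not increase the makespan.

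My second step is to implement the transitions. From a state $(j, a, b)$ with value $T(j, a, b)$, I enumerate the choice of which reclaimer $\sigma \in \{0, 1\}$ performs stockpile $j+1$ (two choices), the direction $d \in \{+1, -1\}$ of that reclaim (two choices), and the integer target position of the idle reclaimer at $t_{j+1}$ (at most $L+1$ choices). For each such choice the active reclaimer's trajectory over $[t_j, t_{j+1}]$ is determined (travel from its current position to a stockpile endpoint, then reclaim at speed $1$), and the minimum feasible transition time accounting for the no-passing constraint between the two piecewise-linear trajectories can be computed in $O(1)$ as the maximum of (i) the active reclaimer's travel-plus-reclaim time, (ii) the idle reclaimer's travel time to the chosen target, and (iii) any additional delay needed to prevent the two trajectories from crossing. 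This gives $O(L)$ transition choices per state and $O(1)$ work per choice, for a total running time of $O(nL^3)$.

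The main obstacle will be the integrality lemma: one has to carefully verify that the perturbation of non-integer positions to integer ones can be carried out so as to respect the no-passing constraint both on the interval leading up to $t_{j+1}$ and in the subsequent schedule, exploiting that all stockpile endpoints, $L$, and $s$ are integers. Once this is settled, together with a careful case analysis for the minimum no-passing-constrained transition time, the $O(nL^3)$ bound follows at once from the state/transition count.
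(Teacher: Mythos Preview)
Your proposal is essentially the same approach as the paper's: a dynamic program over states $(j,x_0,x_1)$ recording the stage and the integer positions of the two reclaimers, with $O(L)$ transitions per state coming from the idle reclaimer's repositioning, yielding $O(nL^3)$ overall. The paper runs the recursion backward (defining $f(j,x_0,x_1)$ as the minimum remaining time from that configuration) rather than forward as you do, and its integrality argument is a bit terser---it simply observes that the active reclaimer's elapsed time $t$ satisfies $ts\in\mathbb{Z}$, so the idle reclaimer either travels the full integer distance $ts$ or stops at a stockpile endpoint---but the content is the same. Your extra care about the no-passing constraint \emph{during} a transition (your item (iii)) is something the paper glosses over; if anything your version is slightly more complete on that point.
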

\begin{proof}
We consider $n+1$ stages $j = 0, 1, \ldots, n$ and indicate with
$x_{0}^{(j)}$ and $x_{1}^{(j)}$ the positions of reclaimers $R_{0}$
and $R_{1}$ at stage $j$, respectively. At stage $j = 0$ reclaimers
$R_{0}$ and $R_{1}$ are located at positions $x_{0}^{(0)} = 0$ and
$x_{1}^{(1)} = L$, respectively. At stage $0 < j < n$ one of the
reclaimers has just finished reclaiming stockpile $j$ and the other
reclaimer has repositioned. At stage $n$, the reclaimers $R_{0}$ and
$R_{1}$ move back to positions $0$ and $L$, respectively (taking
$\lvert x_{0}^{(n)}-0\rvert/s$ and $\lvert x_{1}^{(n)}-L\rvert/s$,
respectively).

The system evolves from stage $j$ to stage $j+1$ according to the
following rules. One reclaimer, say $R_{0}$ at position
$x_{0}^{(j)}$, is chosen to move either to point $l_{j+1}$ and
reclaim stockpile $j+1$ ending at point $r_{j+1}$ and taking time
$t=\lvert x_{0}^{(j)}-l_{j+1}\rvert/s + (r_{j+1}-l_{j+1})$ or to
point $r_{j+1}$ and reclaim stockpile $j+1$ ending at point
$l_{j+1}$ and taking time $t = \lvert x_{0}^{(j)}-r_{j+1}\rvert/s +
(r_{j+1}-l_{j+1})$. In either case, the final position of $R_{0}$ is
denoted by $x_{0}^{(j+1)}$. The other reclaimer, in this case
$R_{1}$ at position $x_{1}^{(j)}$, will reposition to a point in the
set $[x_{0}^{(j+1)}, L] \cap [ x_{1}^{(j)} - t s,\,x_{1}^{(j)} + t
s]$. That is, the position $x_{1}^{(j+1)}$ of $R_{1}$ at stage $j+1$
is restricted by the final position of $R_{0}$, by the endpoint of
the pad $L$, and by the maximum distance $ts$ that $R_1$ can travel.

Note that when $x_0^{(j)}$ and $x_1^{(j)}$ are integer points, the
maximum distance $ts$ that $R_1$ can travel is integer, and thus, if
$R_1$ travels the maximum distance, it will end up at an integer
point. Furthermore, if $R_{1}$ does not travel the maximum distance,
it will stop at a stockpile endpoint, and, thus, travel an integer
distance as well (stockpile endpoints are integers) and end up at an
integer point. Because both reclaimers start at integer points, both
reclaimers will be at integer points at every stage.

Let $f(j, x_{0}, x_{1})$ be the minimum makespan that can be
obtained when the two reclaimers start from $x_{0}$ and $x_{1}$,
respectively, to process stockpiles $j+1, \ldots, n$. The optimal
makespan $f(0, 0, L)$ can be computed by backward dynamic
programming.

Let the set of points that can be reached by $R_{0}$ from $x$ in
time $t$ when $R_{1}$ ends in point $y$ be denoted by
\[\Gamma_{0}(x,y,t) = [0, y] \cap [x - ts, x + ts] \cap\mathbb{N}.\]
Similarly, let the set of points that can be reached by $R_{1}$ from
$x$ in time $t$ when $R_{0}$ stops in $y$ be denoted by
\[\Gamma_{1}(x,y,t) = [y,L] \cap [x - ts, x + ts] \cap \mathbb{N}.\]
The recursion is given by $f(n, x_{0}, x_{1})=\max\{x_{0}/s,
(L-x_{1})/s\}$ and $f(j,x_0,x_1)=\min\{F_1,F_2,F_3,F_4\}$ for $j<n$,
where

\begin{align*}
  F_1 &= \frac{|x_{0}-l_{j+1}|}{s}+|r_{j+1}-l_{j+1}| + \min\limits_{x\in\Gamma_{1}(x_{1},r_{j+1},t)}f(j+1,r_{j+1},x),\\
  F_2 &= \frac{|x_{0}-r_{j+1}|}{s}+|r_{j+1}-l_{j+1}| + \min\limits_{x\in\Gamma_{1}(x_{1},l_{j+1},t)}f(j+1,l_{j+1},x),\\
  F_3 &= \frac{|x_{1}-l_{j+1}|}{s}+|r_{j+1}-l_{j+1}| + \min\limits_{x\in\Gamma_{0}(x_{0},r_{j+1},t)}f(j+1,x,r_{j+1}),\\
  F_4 &= \frac{|x_{1}-r_{j+1}|}{s}+|r_{j+1}-l_{j+1}| + \min\limits_{x\in\Gamma_{0}(x_{0},l_{j+1},t)}f(j+1,x,l_{j+1})
\end{align*}
Thus, computing $f(0, 0 , L)$ does not require more than $O(nL^{3})$
time, which is pseudo-polynomial with respect to the instance size.
\end{proof}

\begin{rem}
If the travel speed is rational, say $s=a/b$ for integers
$a\geqslant b$, then multiplying through with $b$ and repeating the
proof of Theorem~\ref{thm:two_pred}, we obtain an optimal solution
in time $O(n(bL)^3)$.
\end{rem}

\section{Reclaimer Scheduling With Positioning Decisions}\label{sec:with}

\subsection{No precedence constraints}
\subsubsection{Single reclaimer}

\begin{thm}\label{thm:pos_single}
Positioning stockpiles and simultaneously determining an optimal
schedule for a single reclaimer is NP-hard.
\end{thm}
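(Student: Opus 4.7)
The plan is to reduce from \textsc{Partition}. Given positive integers $a_1, \ldots, a_m$ with $\sum_i a_i = 2B$, I construct a single-reclaimer instance with $n = m$ stockpiles of lengths $p_i = a_i$, pad length $L = 2B$, and travel speed $s = 2$. The decision question is whether a schedule of makespan at most $2B$ exists.

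The key observation is that Theorem~\ref{thm:FB-optimal} pins down the optimal makespan for \emph{any} fixed placement: by~(\ref{eq:lower_bound}), it equals exactly $2r/s + \sum_i p_i(1 - 1/s)$, where $r = \max\{r_{n_1}, r_n\}$ is the farthest right endpoint across the two pads. With our parameters the second term is the constant $B$, so minimizing the makespan collapses to minimizing $r$. For any pad assignment with total stockpile length $A_1$ on $P_1$ and $A_2 = 2B - A_1$ on $P_2$, the non-overlap constraint gives $r \geqslant \max(A_1, A_2) \geqslant B$, with equality in the second inequality iff $A_1 = A_2 = B$. Conversely, the bound $r = \max(A_1, A_2)$ is achieved by packing the assigned stockpiles contiguously starting at $0$ on each pad; since the $a_i$ are integers and $\max(A_1, A_2) \leqslant 2B = L$, this placement is feasible and integer-valued.

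Hence the optimal makespan of the constructed instance is $2\max(A_1, A_2)/s + B$ minimized over pad assignments, and equals $2B$ iff some subset of $\{a_1, \ldots, a_m\}$ sums to exactly $B$. If the \textsc{Partition} instance is a NO-instance, then by integrality of the $a_i$ every assignment satisfies $\max(A_1, A_2) \geqslant B + 1$, forcing makespan at least $2B + 1$. This gives a strict gap between YES and NO instances, so deciding whether the optimal makespan is $\leqslant 2B$ is at least as hard as \textsc{Partition}, and the decision version of the positioning problem is NP-hard.

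There is no genuinely hard step here: for a single reclaimer the schedule is essentially determined by $r$ (via the tight lower bound from Theorem~\ref{thm:FB-optimal}), and the geometry on each pad collapses to the single number $A_i$. The only things to verify are the tightness of~(\ref{eq:lower_bound}), which is Theorem~\ref{thm:FB-optimal}, and the observation that contiguous packing at integer positions realizes $r = \max(A_1, A_2)$, so that deciding the positioning problem is exactly deciding whether the lengths can be split into two equal-sum groups.
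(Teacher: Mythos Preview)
Your reduction is correct and follows the same approach as the paper: both reduce from \textsc{Partition} with pad length $L=2B$ and stockpiles of lengths $a_i$, and both use the tight lower bound of Theorem~\ref{thm:FB-optimal} to reduce the question to whether the $a_i$ can be split into two groups of equal sum. The only difference is that the paper takes $s=1$, which zeroes out the second term in~(\ref{eq:lower_bound}) and makes the optimal makespan simply $2\max(A_1,A_2)$; your choice $s=2$ works just as well but carries the extra additive constant~$B$.
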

\begin{proof}
Reduction from \textsc{Partition} to an instance with pad length
$L=2B$. We map each element $a_i$ to a stockpile $i$ of length $a_i$
and set $s=1$. \textsc{Partition} has a solution if and only if the
stockpiles can be divided over the two pads in such a way that they
take up an equal amount of space, i.e., if the makespan is equal to
$2B$.
\end{proof}

\subsubsection{Two reclaimers}

\begin{thm}\label{thm:pos_two}
Positioning stockpiles and simultaneously determining an optimal
schedule for two reclaimers is NP-hard.
\end{thm}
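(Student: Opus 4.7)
My plan is to reduce from \textsc{Partition} in the same spirit as the single-reclaimer proof of Theorem~\ref{thm:pos_single}, but with a travel speed large enough that the added flexibility of a second reclaimer cannot overcome the integrality gap between a balanced and an imbalanced partition of the stockpile lengths. Concretely, given positive integers $a_1,\ldots,a_m$ summing to $2B$, I would construct an instance of the two-reclaimer positioning problem with pad length $L=2B$, travel speed $s=B+1$, and $n=m$ stockpiles of lengths $p_i=a_i$, and show that the optimal makespan is at most $B+B/s$ if and only if the \textsc{Partition} instance is a YES-instance.

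The ``if'' direction is constructive. Given $I\subseteq\{1,\ldots,m\}$ with $\sum_{i\in I}a_i=B$, I would place the stockpiles in $I$ side by side on pad $P_1$ inside $[0,B]$ and the remaining stockpiles side by side on pad $P_2$ inside $[B,2B]$. Then $R_0$ sweeps $[0,B]$ to the right at reclaim speed $1$ (processing all of $I$ in time $B$) and returns to $0$ at travel speed $s$ in time $B/s$, while $R_1$ does the symmetric sweep of $[2B,B]$ on $P_2$. The two reclaimers meet at $x=B$ but never cross, so this is a feasible schedule of makespan $B+B/s$.

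For the ``only if'' direction I would argue by contrapositive. In any feasible schedule each stockpile is processed in a single pass by exactly one reclaimer, so the stockpiles decompose as $I\cup\bar I$ where $R_0$ reclaims $I$ and $R_1$ reclaims $\bar I$. Since each reclaimer's completion time is at least its own reclaim time, the makespan is at least $\max\left\{\sum_{i\in I}a_i,\,\sum_{i\in\bar I}a_i\right\}$. If \textsc{Partition} has no solution then $\sum_{i\in I}a_i$ is an integer different from $B$, hence this maximum is at least $B+1$, which strictly exceeds $B+B/s$ for $s=B+1$. The only step that could in principle be subtler than in the single-reclaimer case is this subset-sum lower bound, but it uses only the assignment of stockpiles to reclaimers and nothing about travel, positions, or the no-passing interaction, so it goes through unchanged and is the reason the argument stays this short.
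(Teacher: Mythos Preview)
Your proof is correct. The ``if'' construction is feasible (at time $t\leqslant B$ the reclaimers are at positions $t$ and $2B-t$, so $H_1\geqslant H_0$ holds with equality only at $t=B$), and the ``only if'' direction follows cleanly from the assignment lower bound $C\geqslant\max\bigl\{\sum_{i\in I}a_i,\sum_{i\in\bar I}a_i\bigr\}$ together with integrality and the choice $s=B+1$, which makes $B+B/s<B+1$.

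The paper's reduction is different. It keeps the travel speed at $s=1$, adds two dummy stockpiles of length $B$ (so the total stockpile length is $4B=2L$ and both pads must be completely filled), and asks whether makespan $2B$ is achievable. With $s=1$, a reclaimer with makespan $2B$ travels total distance $2B$, so it must be reclaiming the entire time; this forces $R_0$ to cover exactly $[0,B]$ on each pad and $R_1$ to cover $[B,2B]$ on each pad, and the two dummies then occupy two of these four half-pads, leaving the original integers to split evenly between the other two. Your route is more direct and avoids this geometric argument about filling half-pads; the paper's route, on the other hand, establishes hardness already for the natural case $s=1$, whereas your construction needs a travel speed that grows with the instance (still polynomial, so the reduction is valid).
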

\begin{proof}
Reduction from \textsc{Partition} to an instance with pad length
$L=2B$. We map each element $a_i$ to a stockpile $i$ of length
$a_i$, add two dummy stockpiles each with length $B$, and set $s=1$.
\textsc{Partition} has a solution if and only if the stockpiles can
be divided equally over the two reclaimers and equally over the two
pads in such a way that they take up an equal amount of space, i.e.,
if the makespan (for both reclaimers) is equal to $2B$.
\end{proof}

\subsection{Precedence constraints}

\subsubsection{Single reclaimer}

\red{Note that our problem is feasible only if it is possible to place all stockpiles on the two pads. An
obvious necessary condition is that the sum of all stockpile lengths is at most $2L$. This is
clearly not sufficient, as $p_1=p_2=p_3=6$ and $L=10$ give an infeasible instance, and if the sum of
the stockpile lengths is equal to $2L$ then it is actually NP-hard to decide if the instance is feasible. With the
stronger assumption that $p_1+\cdots+p_n\leqslant 3L/2$ the problem is always feasible, and in fact we can find an optimal solution as follows.}  
Let $P =p_1+\cdots+p_n$ be the sum of all stockpile lengths, and let
\begin{align*}
  P^t &= \sum_{i=1}^tp_i, & \overline P^t &=\sum_{i=t+1}^np_i= P-P^t
\end{align*}
for $t\in\{0,\ldots,n\}$. 
Let $k$ be the unique index with $P^{k-1}\leqslant \overline P^k$ and $P^k >
\overline P^{k+1}$. The stockpiles are positioned as follows.
\begin{description}
\item[Case 1.] $\min\{ P^k,\, \overline P^{k-1}\} \leqslant L$. If $P^k < \overline P^{k-1}$, then
  place stockpiles $1, \ldots, k$ on pad $P_1$, one after the other starting from 0 and place
  stockpiles $k+1, \ldots, n$ on pad $P_2$, one after the other in reverse order starting from 0. If
  $P^k \geqslant \overline P^{k-1}$, then place stockpiles $1, \ldots, k-1$ on pad $P_1$, one after
  the other starting from 0 and place stockpiles $k, \ldots, n$ on pad $P_2$, one after the other in
  reverse order starting from 0. \red{The stockpiles on pad $P_1$ are reclaimed from left to right and
  then the stockpiles on pad $P_2$ are reclaimed from right to left.}
\item[Case 2.] $\min\{P^k,\,\overline P^{k-1}\} > L$. In this case $p_k>\max\{L/2,\,P^{k-1},\,\overline P^k\}$ and $P-p_{k} < L$ (because we have assumed that $P\leqslant 3L/2$). Place stockpiles $1, \ldots,k-1$ on pad $P_1$, one after the other starting from 0, followed by
$k+1, \ldots, n$ also on pad $P_1$, one after the other in reverse order starting from $P -
p_{k}$. Place stockpile $k$ on pad $P_2$ starting from $\max \{P - 2 p_{k}, 0 \}$. \red{The stockpiles
$1,\ldots,k$ are reclaimed from left to right, and the stockpiles $k+1,\ldots,n$ are reclaimed from right to left.}
\end{description}
The stockpile positions are described more precisely in Algorithm~\ref{alg:fb_positioning}, and the
schedule is completely determined by the positions and the given reclaim directions.
\begin{algorithm}
  \caption{Forward-Backward positioning}\label{alg:fb_positioning}

  \begin{tabbing}
    .....\=.....\=.....\=........................ \kill \\
\textbf{if} $\min \{ P^k,\, \overline P^{k-1} \} \leqslant L$ \textbf{then}\\
\> \textbf{if} $P^k < \overline P^{k-1}$ \textbf{then}\\
\> \> \textbf{for} $i=1,\ldots,k$ \textbf{do} $(l_i,\,r_i)\leftarrow (P^{i-1},\,P^i)$ (on pad 1)\\
\> \> \textbf{for} $i=k+1,\ldots,n$ \textbf{do} $(l_i,\,r_i)\leftarrow (\overline P^i,\,\overline
P^{i-1})$ (on pad 2)\\
\> \textbf{else}\\
\> \> \textbf{for} $i=1,\ldots,k-1$ \textbf{do} $(l_i,\,r_i)\leftarrow (P^{i-1},\,P^i)$ (on pad 1)\\
\> \> \textbf{for} $i=k,\ldots,n$ \textbf{do} $(l_i,\,r_i)\leftarrow (\overline P^i,\,\overline
P^{i-1})$(on pad 2)\\
\textbf{else}\\
\> \textbf{for} $i=1,\ldots,k-1$ \textbf{do} $(l_i,\,r_i)\leftarrow (P^{i-1},\,P^i)$ (on pad 1)\\
\> \textbf{for} $i=k+1,\ldots,n$ \textbf{do} $(l_i,\,r_i)\leftarrow (P^{k-1}+\overline
P^i,\,P^{k-1}+\overline P^{i-1})$ (on pad 1)\\
\> $(l_k,\ r_k)\leftarrow\left(\max\{P - 2 p_k,\, 0 \},\ \max\{ P-p_k,\, p_k \} \right)$ (on pad 2)
  \end{tabbing}
\end{algorithm}
\begin{lem}\label{lem:fb_objective}
For instances with $p_1+\cdots+p_n\leqslant 3L/2$, the stockpile positions determined by
Algorithm~\ref{alg:fb_positioning} with the reclaim directions described above achieve a makespan of
\begin{equation}\label{eq:FB_makespan}
C = P+\min\limits_{0\leqslant t\leqslant n}\left\lvert P^t-\overline P^t\right\rvert/s.
\end{equation}
\end{lem}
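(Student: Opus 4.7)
The plan is to compute the makespan in each branch of Algorithm~\ref{alg:fb_positioning} and to verify that in every branch it equals $P+|P^k-\overline P^k|/s$, after which it suffices to check that $|P^k-\overline P^k|=\min_t|P^t-\overline P^t|$. Throughout I will write $\Delta_t=P^t-\overline P^t=2P^t-P$, which is non-decreasing in $t$ with $\Delta_0=-P$ and $\Delta_n=P$, so $(|\Delta_t|)_{0\leqslant t\leqslant n}$ is V-shaped: non-increasing while $\Delta_t\leqslant 0$, non-decreasing thereafter.

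The minimality step is quick. Rewriting the defining conditions of $k$ as $\Delta_{k-1}+\Delta_k\leqslant 0$ and $\Delta_k+\Delta_{k+1}>0$, a brief sign analysis on $\Delta_k$ (using monotonicity of $\Delta_t$) gives $|\Delta_k|\leqslant|\Delta_{k-1}|$ and $|\Delta_k|\leqslant|\Delta_{k+1}|$. By V-shapedness of the sequence $(|\Delta_t|)$, a local minimum is a global minimum, so $|\Delta_k|=\min_t|\Delta_t|$.

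For Case~1, the walk in subcase 1a ($P^k<\overline P^{k-1}$) is: reclaim stockpiles $1,\ldots,k$ left-to-right on $P_1$ (reclaim time $P^k$, ending at $P^k$), travel to the right end $\overline P^k$ of stockpile $k+1$ on $P_2$ (travel time $|P^k-\overline P^k|/s=|\Delta_k|/s$), then reclaim $k+1,\ldots,n$ right-to-left on $P_2$ (reclaim time $\overline P^k$, ending at $\overline P^n=0$, the home position). Hence $C=P^k+|\Delta_k|/s+\overline P^k=P+|\Delta_k|/s$. Subcase 1b is forced by $P^{k-1}\leqslant\overline P^k$ together with $P^k\geqslant\overline P^{k-1}$ into the equality $P^{k-1}+P^k=P$, so $\Delta_{k-1}+\Delta_k=0$ and $|\Delta_{k-1}|=|\Delta_k|$; the analogous left-to-right/right-to-left walk with the split shifted to index $k-1$ yields the same $C=P+|\Delta_k|/s$.

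For Case~2, from $P\leqslant 3L/2$ and $\min\{P^k,\overline P^{k-1}\}>L$ I would first deduce $P^{k-1}<L/2$, $\overline P^k<L/2$, $p_k>\max\{L/2,P^{k-1},\overline P^k\}$ and $P-p_k<L$, which validates the placement. In each of the two sub-subcases for stockpile~$k$ on $P_2$ (placed at $[P-2p_k,P-p_k]$ when $p_k\leqslant P/2$, otherwise at $[0,p_k]$), the reclaimer does: reclaim $1,\ldots,k-1$ on $P_1$, travel to stockpile~$k$ on $P_2$, reclaim $k$ left-to-right, travel back to the right end $P-p_k$ of stockpile $k+1$ on $P_1$ (a zero-length switch in subcase 2a), reclaim $k+1,\ldots,n$ right-to-left ending at $P^{k-1}$, and return to $0$. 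The reclaim times sum to $P$, and after straightforward cancellations using $P^k=P^{k-1}+p_k$ the four travel segments collapse into a total travel of $2P^k-P=\Delta_k$, which is positive because $P^k>L\geqslant P/2$. Hence $C=P+\Delta_k/s=P+|\Delta_k|/s$. Combining the three cases with the minimality of $|\Delta_k|$ yields~(\ref{eq:FB_makespan}). The main obstacle is the bookkeeping in Case~2: one must verify in each sub-subcase that the travel segments around the oversized stockpile~$k$ really do cancel down to $\Delta_k$; once that is done, everything else is arithmetic.
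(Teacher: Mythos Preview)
Your proof is correct and follows the paper's approach: compute the makespan branch by branch and identify it as $P+|\Delta_k|/s$, where $\Delta_k$ attains the minimum. You are in fact more explicit than the paper in two spots---the paper asserts minimality of $|\Delta_k|$ ``by definition'' without your V-shapedness argument, and it glosses over why the Case~1b travel, which is literally $|\Delta_{k-1}|/s$, equals $|\Delta_k|/s$---but otherwise the computations coincide (the paper's single Case~2 formula is exactly your two sub-subcases with the $\max$ terms cancelling).
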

\begin{proof}
\red{By definition $k$ is the index for which the minimum in~\eqref{eq:FB_makespan} is obtained. In Case
1, there is exactly one time interval in which the reclaimer moves without reclaiming anything,
namely when it moves from $r_k$ to $r_{k+1}$ or from $r_{k-1}$ to $r_k$ between reclaiming the
stockpiles on pad $P_1$ and the stockpiles on pad $P_2$. This takes time $\left\lvert P^k-\overline
  P^k\right\rvert/s$ and the result follows.}

In Case 2, there are three time intervals in which the reclaimer moves without
reclaiming anything, namely when it moves (1) from $r_{k-1}$ to $l_k$, (2) from $r_k$ to
$r_{k+1}$ and (3) from $l_n$ to $0$. These travel time intervals
have lengths
\begin{align*}
& P^{k-1}-\max\{P-2p_k,\,0\},\\
& \max\{p_k-(P-p_k),\,0\}=\max\{2p_k-P,\,0\},\\
& P^{k-1},
\end{align*}
and this yields a makespan
\[C = P+\frac{P^{k-1}-P+2p_k+P^{k-1}}{s}=P+\frac{2P^k-P}{s}=P+\frac{P^k-\overline P^k}{s}.\qedhere\]
\end{proof}

The next lemma states that the RHS of~(\ref{eq:FB_makespan}) is a lower bound on the makespan.
\begin{lem}\label{lem:fb_lower_bound}
For any placement of the stockpiles and any feasible reclaimer
schedule, the makespan $C$ satisfies
\[ C \geqslant P+\min\limits_{0\leqslant k\leqslant n}\left\lvert P^k-\overline P^k\right\rvert/s.\]
\end{lem}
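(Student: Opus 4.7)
The plan is to express $C$ in terms of the signed displacements during the non-reclaim travel phases and then lower-bound the total travel. For a fixed feasible schedule, let $y_k\in[0,L]$ be the reclaimer's position just after stockpile $k$, let $\epsilon_i\in\{-1,+1\}$ encode the direction in which stockpile $i$ is traversed, and let $d_i$ be the signed non-reclaim displacement between stockpile $i$ and stockpile $i+1$ (with the conventions $y_0=0$ and a final leg back to $0$). Since non-reclaim travel occurs at speed at most $s$, one has $C \geqslant P + \frac{1}{s}\sum_{i=0}^n|d_i|$, so the lemma reduces to proving $\sum_{i=0}^n|d_i|\geqslant \min_k|P^k-\overline P^k|$.

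The key telescoping identity is $\sum_{i=0}^{k-1}d_i = y_k - S_k$ and $\sum_{i=k}^n d_i = -y_k - \overline S_k$, where $S_k=\sum_{i\leqslant k}\epsilon_i p_i$ and $\overline S_k=\sum_{i>k}\epsilon_i p_i$. The triangle inequality then gives, for every $k$,
\[\sum_{i=0}^n|d_i| \geqslant |y_k-S_k|+|y_k+\overline S_k|.\]
I would specialize $k$ to the index identified in Algorithm~\ref{alg:fb_positioning} -- the unique $k$ with $P^{k-1}\leqslant\overline P^k$ and $P^k>\overline P^{k+1}$ -- and show that the right-hand side is at least $|P^k-\overline P^k|$. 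Using $|S_k|\leqslant P^k$, $|\overline S_k|\leqslant\overline P^k$, and $y_k\in[0,L]$, this follows by case analysis on the signs of $y_k-S_k$ and $y_k+\overline S_k$; in the ``aligned'' case the expression collapses to $|S_k+\overline S_k|=|\sum\epsilon_ip_i|$, and in the other cases it simplifies to an expression that can be bounded from below using the feasibility assumption $P\leqslant 3L/2$.

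The main obstacle lies in the case where the reclaim-direction vector $\epsilon$ is chosen so that $|\sum\epsilon_i p_i|$ is strictly smaller than $\min_k|P^k-\overline P^k|$, which can happen whenever $\{p_i\}$ admits a near-equal subset-sum partition that is not realized by any prefix. In that regime the pure triangle-inequality bound is too weak, and one must exploit the positioning constraints -- most crucially the non-overlap of stockpiles on the same pad together with $l_i,r_i\in[0,L]$ -- to force additional travel. Concretely, a balanced $\epsilon$ would demand a zig-zag trajectory in which either some intermediate position $y_k$ must leave $[0,L]$, or two stockpiles assigned to the same pad would have to overlap, and in either situation the resulting infeasibility translates (via the partial-sum identity above applied at an unfavourable $k$) into a lower bound on $\sum|d_i|$ at least as large as the prefix minimum. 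This case-analysis is the bulk of the work; the rest is routine bookkeeping to derive the telescoping identity and to handle Case~2 of the algorithm, where a single stockpile of length greater than $L/2$ sits alone on one pad and forces a trajectory that is not strictly unimodal.
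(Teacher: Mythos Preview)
Your plan is quite different from the paper's argument, and the part you yourself flag as ``the bulk of the work'' is a genuine gap that your sketch does not close.

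The paper never works with the signed displacements $d_i$. Instead it fixes the rightmost point $x_1$ of the trajectory and the first time $t_1$ at which it is attained. By the chain precedence, the stockpiles completed by time $t_1$ form a prefix $\{1,\dots,k\}$, and the decisive geometric fact is that every one of them lies inside $[0,x_1]$. Splitting this prefix by pad into $X_1,X_2\subseteq[0,x_1]$, the paper observes that during $[0,t_1]$ the reclaimer must traverse $X_1\cap X_2$ at least once while \emph{not} reclaiming (both pads are occupied there, so one of the two passes is idle) and must also cross the uncovered region $[0,x_1]\setminus(X_1\cup X_2)$. This, together with the trivial inclusion $X_1\cup X_2\subseteq[0,x_1]$ (hence $x_1+\ell(X_1\cap X_2)\geqslant P^k$), yields the bound on $t_1$; the mirror argument on $[t_1,C]$ gives the $\overline P^k$ side, and adding the two produces $C\geqslant P+\lvert P^k-\overline P^k\rvert/s$ directly. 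There is no case analysis, and the hypothesis $P\leqslant 3L/2$ is never invoked---that assumption is needed only for the matching upper bound in Lemma~\ref{lem:fb_objective}, so your plan to use it here is a red flag.

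Your framework discards exactly the information this argument exploits. Once you pass to $(\epsilon_i,d_i,y_k)$ you retain $y_k\in[0,L]$ but lose which pad each stockpile occupies, and the paper's bound hinges on the measure of the pad-overlap $X_1\cap X_2$. Your proposed rescue---``either some $y_k$ leaves $[0,L]$ or two stockpiles on the same pad overlap''---cannot work as stated: the placement is \emph{given} and feasible, so neither infeasibility ever arises; what forces extra travel is not an infeasibility but a covering count on how many times each point of $[0,x_1]$ is crossed. Concretely, for $p=(2,4,2)$ and $\epsilon=(+,-,+)$ your ``aligned'' collapse gives $\lvert S_n\rvert=0$ while $\min_k\lvert P^k-\overline P^k\rvert=4$; the bound is in fact recovered at $k=2$ because $y_2\geqslant 0$ forces $\lvert y_2-S_2\rvert+\lvert y_2+\overline S_2\rvert=2(y_2+2)\geqslant 4$, but you give no systematic rule for selecting such a $k$, and the algorithm's $k$ (which you propose to use) is the wrong choice in general. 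Identifying the right $k$ amounts to rediscovering the rightmost-point idea, at which point the measure argument is both shorter and cleaner than the algebraic route.
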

\begin{proof}
Suppose we have an optimal stockpile placement together with an
optimal reclaimer schedule. Let $x_1$ be the rightmost point reached
by the reclaimer, and let $t_1$ be the time when $x_1$ is reached
for the first time. Let $I$ be the set of stockpiles whose
reclaiming is at or before time $t_1$ and let $J$ be the set of
stockpiles whose reclaiming is finished after time $t_1$. Clearly
$I=\{1,\ldots,k\}$ for some $k$, where $k=0$ corresponds to
$I=\emptyset$. The sets $I$ and $J$ can be partitioned according to
the pads on which the stockpiles are placed: $I=I_1\cup I_2$ where
$I_1$ contains the stockpiles on pad $P_1$, and $I_2$ contains the
stockpiles on pad $P_2$, and similarly for $J=J_1\cup J_2$. Let
\begin{align*}
X_i &= \bigcup\limits_{j\in I_i}[l_j,r_j], & Y_i &= \bigcup\limits_{j\in J_i}[l_j,r_j]
\end{align*}
for $i\in\{1,2\}$. Recall that the total length of a set
$X\subseteq[0,L]$ is denoted by $\ell(X)$. The total stockpile
length equals
\begin{equation}\label{eq:total_length}
  P=\ell(X_1)+\ell(X_2)+\ell(Y_1)+\ell(Y_2).
\end{equation}
Between $t=0$ and $t=t_1$ the reclaimer has to visit (1) the set
$X_1\triangle X_2$ while reclaiming, (2) the set $X_1\cap X_2$ twice
while reclaiming and at least once while moving without reclaiming,
(3) the set $[0,x_1]\setminus(X_1\cup X_2)$ at least once without
reclaiming. This yields
\begin{multline*}t_1\geqslant \ell(X_1)+\ell(X_2)+\frac{\ell(X_1\cap X_2)+x_1-\ell(X_1\cup X_2)}{s}\\
=\ell(X_1)+\ell(X_2)+\frac{x_1-\ell(X_1)-\ell(X_2)+2\ell(X_1\cap X_2)}{s}.
\end{multline*}
Applying the same argument to the time interval $[t_1,C]$ and the
sets $Y_1$ and $Y_2$, we have
\[C-t_1\geqslant \ell(Y_1)+\ell(Y_2)+\frac{x_1-\ell(Y_1)-\ell(Y_2)+2\ell(Y_1\cap Y_2)}{s}.\]
Adding these two inequalities, taking into
account~(\ref{eq:total_length}), we obtain
\begin{equation*}
C\geqslant P+\frac{2(x_1+\ell(X_1\cap X_2)+\ell(Y_1\cap Y_2))-P}{s}.
\end{equation*}
Using $x_1+\ell(X_1\cap X_2)\geqslant\ell(X_1)+\ell(X_2)=P^k$ and $x_1+\ell(Y_1\cap Y_2)\geqslant\ell(Y_1)+\ell(Y_2)=\overline P^k$, we obtain
\[C\geqslant P+\frac{2\max\{P^k,\,\overline P^k\}-P}{s}=P+\left\lvert P^k-\overline P^k\right\rvert/s.\qedhere\]
\end{proof}
\red{
From Algorithm~\ref{alg:fb_positioning}, and Lemmas~\ref{lem:fb_objective} and
~\ref{lem:fb_lower_bound} we get the following theorem.
\begin{thm}\label{thm:fb_optimal}
If the sum of the stockpile lengths is at most $3L/2$ then the problem of finding optimal stockpile
positions and a corresponding schedule for a single reclaimer can be solved in time $O(n)$. 
\end{thm}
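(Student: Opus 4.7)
The plan is to obtain Theorem~\ref{thm:fb_optimal} as a direct consequence of the three ingredients already developed: the explicit construction in Algorithm~\ref{alg:fb_positioning}, the upper bound in Lemma~\ref{lem:fb_objective}, and the matching lower bound in Lemma~\ref{lem:fb_lower_bound}. First I would verify that, under the hypothesis $p_1+\cdots+p_n\leqslant 3L/2$, Algorithm~\ref{alg:fb_positioning} always produces a feasible placement. In Case~1, each pad holds a prefix or a suffix of the stockpile sequence whose total length does not exceed $L$ by the choice of $k$ and the case distinction, so the stockpiles fit without overlap. In Case~2, the hypothesis $p_k>\max\{L/2,P^{k-1},\overline P^k\}$ combined with $P-p_k<L$ guarantees that the ``hole'' of length $P-p_k$ on pad $P_1$ is at most $L$ and that stockpile $k$ fits on pad $P_2$; feasibility of the reclaim directions is immediate from the construction.

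Next I would invoke Lemma~\ref{lem:fb_objective} to conclude that the schedule produced by the algorithm has makespan exactly
\[
C = P+\min_{0\leqslant t\leqslant n}\lvert P^t-\overline P^t\rvert/s,
\]
and Lemma~\ref{lem:fb_lower_bound} to conclude that no feasible placement together with any feasible schedule for a single reclaimer can do better. Together these give optimality.

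For the running time, I would observe that the prefix sums $P^t$ (and hence $\overline P^t = P-P^t$) for $t=0,1,\ldots,n$ can be computed in $O(n)$ time by a single left-to-right scan, that the index $k$ with $P^{k-1}\leqslant \overline P^k$ and $P^k>\overline P^{k+1}$ can be identified in the same scan, and that once $k$ is known Algorithm~\ref{alg:fb_positioning} emits the $n$ stockpile positions in $O(n)$ time. Since the reclaim directions are fixed (left-to-right on pad $P_1$ followed by right-to-left on pad $P_2$, with the obvious modification in Case~2), the schedule itself is then determined and can be output in linear time.

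The only genuinely non-routine step is the feasibility check in Case~2, so that is where I would be most careful; everything else is bookkeeping on top of Lemmas~\ref{lem:fb_objective} and~\ref{lem:fb_lower_bound}, which already carry the analytical content of the result.
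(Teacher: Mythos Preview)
Your proposal is correct and takes essentially the same approach as the paper, which states the theorem as an immediate consequence of Algorithm~\ref{alg:fb_positioning} together with Lemmas~\ref{lem:fb_objective} and~\ref{lem:fb_lower_bound}. If anything, you are more explicit than the paper in verifying feasibility of the placement in the two cases and in spelling out the $O(n)$ running-time argument, but the underlying structure---construction plus matching upper and lower bounds---is identical.
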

}
\subsubsection{Two reclaimers}

\begin{thm}\label{thm:hardness:pos_two_rec_prec}
Positioning stockpiles and simultaneously determining an optimal
schedule for two reclaimers when the stockpiles have to be reclaimed
in a given order is NP-hard.
\end{thm}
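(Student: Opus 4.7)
The plan is a polynomial-time reduction from \textsc{Partition}, exploiting the observation already noted in the paper that when the total stockpile length equals $2L$, even deciding feasibility becomes hard. Given an instance of \textsc{Partition} consisting of positive integers $a_1,\ldots,a_m$ with $\sum_{i=1}^m a_i = 2B$, I would construct a reclaimer scheduling instance with pad length $L=B$, travel speed $s=1$, and $n=m$ stockpiles of lengths $p_i=a_i$ ordered by the single chain precedence $1\to 2\to\cdots\to m$.

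Since $\sum_{i=1}^n p_i = 2B = 2L$, any feasible placement must perfectly tile both pads, so the stockpiles must split into two subsets, one per pad, each with total length exactly $B$. Such a split exists if and only if the \textsc{Partition} instance is a YES-instance. In the YES case, a feasible schedule is constructed by placing the stockpiles according to the partition and then letting $R_0$ visit all the stockpiles in chain order while $R_1$ stays at $L$ throughout, which trivially satisfies the no-passing constraint; the resulting makespan is bounded by $O(mB)$. In the NO case no valid placement exists, so there is no feasible schedule. Setting the target to any polynomial $T\geqslant O(mB)$, the decision problem ``is there a schedule with makespan at most $T$?'' correctly decides \textsc{Partition}, which proves NP-hardness of the optimization problem.

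The main obstacle is verifying that once the stockpiles are placed, the chain precedence and the no-passing constraint do not introduce any further obstruction beyond placement feasibility. This is settled by the single-reclaimer argument above: $R_0$ can freely switch between the two pads and visit the stockpiles in the required order, eventually returning to $0$, while $R_1$ remains idle at $L$ so the no-passing inequality $H_1(t)\geqslant H_0(t)$ holds vacuously. A minor secondary point is confirming that the target makespan $T$ can be taken to be a polynomial in the input; since \textsc{Partition} is only weakly NP-hard, a pseudo-polynomial bound such as $T=O(mB)$ is acceptable, and one could alternatively refine the construction (say by adjusting $L$ or adding dummy stockpiles on a single pad) if a strongly polynomial target is desired.
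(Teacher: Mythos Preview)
Your reduction is correct: with $L=B$ and $\sum p_i=2L$, a placement exists iff the \textsc{Partition} instance is a YES-instance, and in the YES case letting $R_0$ reclaim everything in chain order while $R_1$ idles at $L$ gives a feasible schedule whose makespan is at most $(m+3)B$, so the decision version with this threshold decides \textsc{Partition}. The worry you raise about $T$ being ``only pseudo-polynomial'' is not an issue: $T=(m+3)B$ has bit-length polynomial in the input size, which is all a Karp reduction requires.

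However, your route is genuinely different from the paper's. You locate the hardness entirely in the \emph{placement} step: your instance is one where even deciding whether the stockpiles fit on the two pads is \textsc{Partition}, and the precedence chain and the second reclaimer play no role at all (indeed, your argument works verbatim for the single-reclaimer variant and for the no-precedence variant, and is essentially the observation the paper itself makes just before Algorithm~\ref{alg:fb_positioning}). The paper, by contrast, reduces from a $1{:}6$ variant of \textsc{Partition} and builds an instance with four additional gadget stockpiles of carefully chosen lengths, pad length $L=36B$, and total stockpile length $54B=3L/2$. In that regime placement is always feasible, so the hardness is forced to come from the interaction between the chain precedence and the two reclaimers: the paper shows that the trivial lower bound (sum of reclaim times) is achievable iff the $1{:}6$ split exists, and the argument pins down the positions and reclaimer assignments of the four gadget stockpiles before reaching the partition constraint. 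What your approach buys is brevity; what the paper's approach buys is a demonstration that the problem remains hard even when the bin-packing aspect of positioning is trivial, so the hardness genuinely stems from the two-reclaimer scheduling under precedence rather than from placement feasibility.
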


To prove the NP-hardness we use 1,6-\textsc{Partition}, the following
variation of \textsc{Partition}:
\begin{description}
\item[1,6-\textsc{Partition}.] Given a set $A =\{a_1, \dots, a_n\}$ of positive integers with $\sum_{i=1}^n a_i = 7B$, can the set $A$ be partitioned into two disjoint subsets $A_1$ and $A_2$ such that $\sum_{a_i \in A_1} a_{i} = B$ and $\sum_{a_i\in A_2} a_{i} = 6B$?
\end{description}
We illustrate the idea of the proof with the following example.

\begin{example}\label{ex:1_6_partition}
Consider the following instance of 1,6-\textsc{Partition}: a set $A
= \{a_1, \dots, a_n\} = \{5, 1, 6, 1, 1, 7\}$ with $\sum_{i=1}^n a_i
= 7B = 21$, i.e., $B = 3$. Create the following instance of the
reclaimer scheduling problem: pad length $L=108$, traveling speed
$s=3$, and a set of $n+4 = 10$ stockpiles of lengths 30, 87, 3, 21,
5, 1, 6, 1, 1, 7, respectively, to be reclaimed in that order. Note
that four special stockpiles have been added that have to be
reclaimed first.

An obvious lower bound on the objective function value is 162, the
sum of the reclaim times of the stockpiles.  Next consider the
stockpile placements and reclaimer assignments shown in Figure~\ref{fig:instance-pf-ex-prec},
i.e., stockpiles 1 and 3 together
with stockpiles 5, 7, and 10 are assigned to $R_0$ and stockpiles 2
and 4 together with stockpiles 6, 8, and 9 are assigned to $R_1$.
Furthermore, let $R_0$ reclaim stockpile 1 going out and
stockpiles 3, 5, 7, and 10 coming back, and let $R_1$
reclaim stockpile 2 going out and stockpiles 4, 6, 8, and 9 coming
back.
\begin{figure}[htb]
\centering
        \begin{tikzpicture}
        \coordinate (A) at (0,1.2);
        \coordinate (B) at (0.05,1.6);
        \coordinate (C) at (4.25,1.6);
        \coordinate (D) at (4.3,1.2);
        \fill [brown] (A) -- (B) -- (C) -- (D) -- cycle;
        \fill [gray] (4.3,1.2) -- (4.35,1.6) -- (6.75,1.6) -- (6.8,1.2) -- cycle;
        \fill [gray] (8.6,1.2) -- (8.65,1.6) -- (8.75,1.6) -- (8.8,1.2) -- cycle;
        \fill [gray] (11,1.2) -- (11.05,1.6) -- (11.15,1.6) -- (11.2,1.2) -- cycle;
        \fill [gray] (11.2,1.2) -- (11.25,1.6) -- (11.35,1.6) -- (11.4,1.2) -- cycle;
        \draw (A) -- (B) -- (C) -- (D) -- cycle;
        \draw (4.3,1.2) -- (4.35,1.6) -- (6.75,1.6) -- (6.8,1.2) -- cycle;
        \draw (8.6,1.2) -- (8.62,1.6) -- (8.78,1.6) -- (8.8,1.2) -- cycle;
        \draw (11,1.2) -- (11.02,1.6) -- (11.18,1.6) -- (11.2,1.2) -- cycle;
        \draw (11.2,1.2) -- (11.22,1.6) -- (11.38,1.6) -- (11.4,1.2) -- cycle;
        \draw[decoration={brace}, decorate] (6.8,1.6) -- (8.6,1.6);
        \coordinate [label = above:{\small $15$}]  (A) at (7.7,1.7);
        \draw[decoration={brace}, decorate] (8.8,1.6) -- (11,1.6);
        \coordinate [label = above:{\small $18$}]  (A) at (9.9,1.7);
        \draw[decoration={brace}, decorate] (11.4,1.6) -- (14,1.6);
        \coordinate [label = above:{\small $21$}]  (A) at (12.7,1.7);
        \draw[decoration={brace}, decorate] (3.1,1.6) -- (4.3,1.6);
        \coordinate [label = above:{\small $9$}]  (A) at (3.7,1.7);
        \draw (0,1.2) -- (14,1.2);
        \coordinate [label = below:{\small $1$}] (A) at (2.15,1.2);
        \coordinate [label = below:{\small $4$}] (B) at (5.55,1.2);
        \coordinate [label = below:{\small $6$}] (C) at (8.7,1.2);
        \coordinate [label = below:{\small $8$}] (D) at (11.08,1.2);
        \coordinate [label = below:{\small $9$}] (E) at (11.32,1.2);
        \coordinate [label = above:{\small $30$}] (A) at (2.15,1.6);
        \coordinate [label = above:{\small $21$}] (B) at (5.55,1.6);
        \coordinate [label = above:{\small $1$}] (C) at (8.7,1.6);
        \coordinate [label = above:{\small $1$}] (D) at (11.1,1.6);
        \coordinate [label = above:{\small $1$}] (E) at (11.3,1.6);
        \draw (0,-.2) -- (14,-.2);
        \fill [brown] (0,-.2) -- (0.05,.2) -- (0.95,.2) -- (1,-.2) --cycle;
        \fill [brown] (1,-.2) -- (1.05,.2) -- (1.85,.2) -- (1.9,-.2) -- cycle;
        \fill [brown] (1.9,-.2) -- (1.95,.2) -- (2.65,.2) -- (2.7,-.2) -- cycle;
        \fill [brown] (2.7,-.2) -- (2.75,.2) -- (3.05,.2) -- (3.1,-.2) -- cycle;
        \fill [gray] (3.1,-.2) -- (3.15,.2) -- (13.95,.2) -- (14,-.2) -- cycle;
        \draw (0,-.2) -- (0.05,.2) -- (0.95,.2) -- (1,-.2) --cycle;
        \draw (1,-.2) -- (1.05,.2) -- (1.85,.2) -- (1.9,-.2) -- cycle;
        \draw (1.9,-.2) -- (1.95,.2) -- (2.65,.2) -- (2.7,-.2) -- cycle;
        \draw (2.7,-.2) -- (2.75,.2) -- (3.05,.2) -- (3.1,-.2) -- cycle;
        \draw (3.1,-.2) -- (3.15,.2) -- (13.95,.2) -- (14,-.2) -- cycle;

        \coordinate [label = below:{\small $3$}] (A) at (2.9,-.2);
        \coordinate [label = below:{\small $5$}] (B) at (2.3,-.2);
        \coordinate [label = below:{\small $7$}] (C) at (1.45,-.2);
        \coordinate [label = below:{\small $10$}] (D) at (.5,-.2);
        \coordinate [label = below:{\small $2$}] (F) at (8.55,-.2);
        \coordinate [label = above:{\small $87$}] (F) at (8.55,.2);
        \coordinate [label = above:{\small $3$}] (A) at (2.9,.2);
        \coordinate [label = above:{\small $5$}] (B) at (2.3,.2);
        \coordinate [label = above:{\small $6$}] (C) at (1.45,.2);
        \coordinate [label = above:{\small $7$}] (D) at (.5,.2);
        \coordinate [label = above:$P_1$] (P1a) at (-.4,1.2);
        \coordinate [label = above:$P_2$] (P2a) at (-.4,0);
        \draw (0,-1.1) -- (0.1,-.8) -- (2.9,-.8) -- (3,-1.1) -- cycle;
        \fill[brown] (0,-1.1) -- (0.1,-.8) -- (2.9,-.8) -- (3,-1.1) -- cycle;
        \coordinate [label = below:reclaimer $R_0$] (R0) at (1.5,-1.1);
        \draw (11,-1.1) -- (11.1,-.8) -- (13.9,-.8) -- (14,-1.1) -- cycle;
        \fill[gray] (11,-1.1) -- (11.1,-.8) -- (13.9,-.8) -- (14,-1.1) -- cycle;
        \coordinate [label = below:reclaimer $R_1$] (R1) at (12.5,-1.1);
   \end{tikzpicture}
\caption{Stockpile positions and reclaimer assignment (indicated by colors) for the reclaimer scheduling instance in
Example~\ref{ex:1_6_partition}. The numbers above the stockpiles are their lengths, and the
numbers below the stockpiles indicate their position in the precedence order.} \label{fig:instance-pf-ex-prec}
\end{figure}
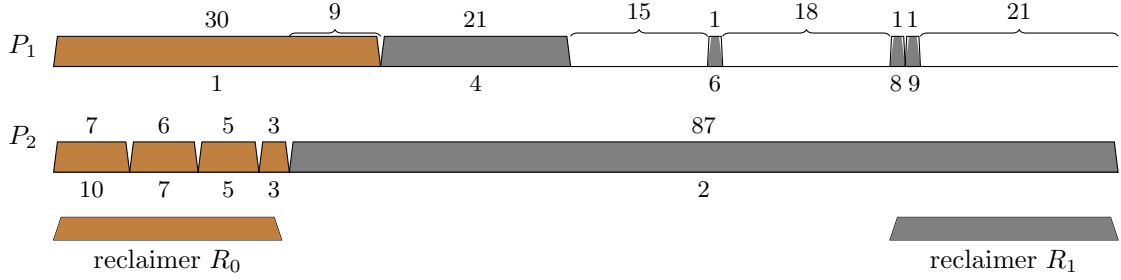

Observe that $R_1$ can complete the reclaiming of
stockpile 4 at time $30 + 87 + 3 + 21 = 141$ (the value 3 corresponds
to the travel time required to go from the left-most point of
stockpile 2 to the right-most point of stockpile 4. Furthermore,
observe that the remaining reclaim time assigned to Reclaimer $R_1$
is 3 and that Reclaimer $R_1$ also has to travel $108 - 30 - 21 - 3 =
54$ to get back to its starting position, which takes 18 for a total
of $3 + 18 = 21$. So Reclaimer $R_1$ can return at time 162 if and
only if it does not incur any waiting time, i.e., it can start
reclaiming stockpiles 6, 8, and 9 as soon as it reaches their
left-most point. Their positions are chosen precisely to make this
happen. For example, the left-most position of stockpile 6 is 66,
i.e., 15 away from 51, which implies that while $R_0$ is
reclaiming stockpile 5, which takes 5, reclaimer $R_1$ can move from
the right-most position of stockpile 4 to the left-most position of
stockpile 6. While $R_1$ reclaims stockpile 6, reclaimer
$R_0$ waits at the right-most position of stockpile 7. And so on.
Finally, observe that the stockpiles 6, 8, and 9 correspond to a
subset $A_1$ with $\sum_{a_i \in A_1} a_{i} = 1 + 1 + 1 = 3 = B$ and
the stockpiles 5, 7, and 10 correspond to a subset $A_2$ with
$\sum_{a_i \in A_2} a_{i} = 5 + 6 + 7 = 18 = 6B$.
\end{example}
More generally, for an instance of 1,6-\textsc{Partition}, we create a
corresponding instance of the reclaimer scheduling problem with pad
length $L = 36B$, traveling speed $s = 3$, and a set of $n+4$
stockpiles of lengths $10B, 29B, B, 7B, a_1, \ldots, a_n$,
respectively, to be reclaimed in that order. We will show that the
instance is a yes-instance of 1,6-\textsc{Partition} if and only if there
exists a reclaimer schedule in which both reclaimers return to their
starting positions at time $54B$.

\begin{proof} Suppose the instance of 1,6-\textsc{Partition} is a
yes-instance, then the placements and assignments shown in Figure
\ref{fig:instance-pf-prec}, i.e., stockpiles 1 and 3 together with
the stockpiles corresponding to the subset $A_1$ are assigned to
$R_0$ and stockpiles 2 and 4 together with the stockpiles
corresponding to the subset $A_2$ are assigned to $R_1$, $(l_1, r_1)
=(0,10B)$, $(l_2, r_2) = (7B,36B)$, $(l_3, r_3) = (6B, 7B)$, and
$(l_4, r_4) = (10B, 17B)$, the stockpiles in $A_2$ are placed on pad $P_2$
 in the interval $[0,6B]$ and the stockpiles in $A_1$ are placed on
pad $P_1$ in the interval $[17B, 36B]$ in such a way that the
distance between two consecutive stockpiles $i$ and $j$ is $3
\sum_{k=i+1}^{j-1}a_{k}$. It can easily be verified that schedule
$S^*$ has an objective function value equal to the lower bound of
$54B$.

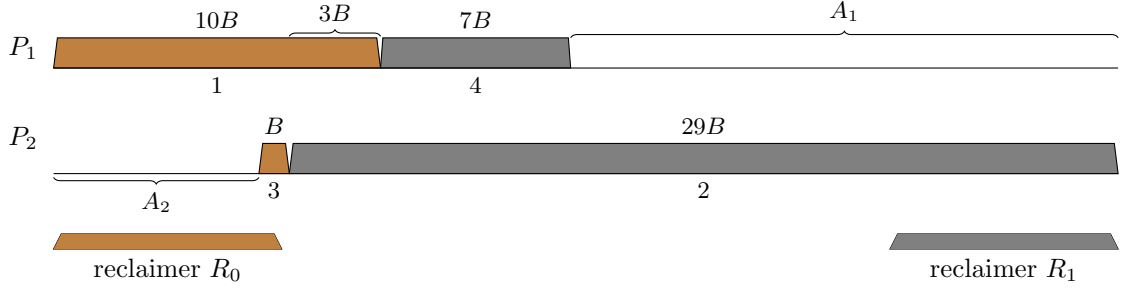
\begin{figure}[htb]
\centering
         \begin{tikzpicture}
        \coordinate (A) at (0,1.2);
        \coordinate (B) at (0.05,1.6);
        \coordinate (C) at (4.25,1.6);
        \coordinate (D) at (4.3,1.2);
        \fill [brown] (A) -- (B) -- (C) -- (D) -- cycle;
        \fill [gray] (4.3,1.2) -- (4.33,1.6) -- (6.77,1.6) -- (6.8,1.2) -- cycle;
        \draw (A) -- (B) -- (C) -- (D) -- cycle;
        \draw (4.3,1.2) -- (4.33,1.6) -- (6.77,1.6) -- (6.8,1.2) -- cycle;
        \draw[decoration={brace}, decorate] (6.8,1.6) -- (14,1.6);
        \coordinate [label = above:{\small $A_1$}]  (A) at (10.4,1.7);
        \draw[decoration={brace}, decorate] (3.1,1.62) -- (4.3,1.62);
        \coordinate [label = above:{\small $3B$}]  (A) at (3.7,1.7);
        \draw (0,1.2) -- (14,1.2);
        \coordinate [label = below:{\small $1$}] (A) at (2.15,1.2);
        \coordinate [label = below:{\small $4$}] (B) at (5.55,1.2);
        \coordinate [label = above:{\small $10B$}] (A) at (2.15,1.6);
        \coordinate [label = above:{\small $7B$}] (B) at (5.55,1.6);
        \draw (0,-.2) -- (14,-.2);
        \fill [brown] (2.7,-.2) -- (2.75,.2) -- (3.05,.2) -- (3.1,-.2) -- cycle;
        \fill [gray] (3.1,-.2) -- (3.15,.2) -- (13.95,.2) -- (14,-.2) -- cycle;
        \draw (2.7,-.2) -- (2.75,.2) -- (3.05,.2) -- (3.1,-.2) -- cycle;
        \draw (3.1,-.2) -- (3.15,.2) -- (13.95,.2) -- (14,-.2) -- cycle;

        \coordinate [label = below:{\small $3$}] (A) at (2.9,-.2);
        \coordinate [label = below:{\small $2$}] (F) at (8.55,-.2);
        \coordinate [label = above:{\small $29B$}] (F) at (8.55,.2);
        \coordinate [label = above:{\small $B$}] (A) at (2.9,.2);
        \draw[decoration={brace}, decorate] (2.7,-.25) -- (0,-.25);
        \coordinate [label = below:{\small $A_2$}]  (A) at (1.35,-.33);
        \coordinate [label = above:$P_1$] (P1a) at (-.4,1.2);
        \coordinate [label = above:$P_2$] (P2a) at (-.4,0);
        \draw (0,-1.2) -- (0.1,-1) -- (2.9,-1) -- (3,-1.2) -- cycle;
        \fill[brown] (0,-1.2) -- (0.1,-1) -- (2.9,-1) -- (3,-1.2) -- cycle;
        \coordinate [label = below:reclaimer $R_0$] (R0) at (1.5,-1.2);
        \draw (11,-1.2) -- (11.1,-1) -- (13.9,-1) -- (14,-1.2) -- cycle;
        \fill[gray] (11,-1.2) -- (11.1,-1) -- (13.9,-1) -- (14,-1.2) -- cycle;
        \coordinate [label = below:reclaimer $R_1$] (R1) at (12.5,-1.2);
        \end{tikzpicture}
 \caption{Stockpile positions and reclaimer assignment (indicated by colors) for the reclaimer
   scheduling instance in the proof of Theorem~\ref{thm:hardness:pos_two_rec_prec}.} \label{fig:instance-pf-prec}
\end{figure}

Next, we prove that lower bound of $54B$ is not achievable if (1)
stockpiles 1, 2, 3, and 4 are placed differently or (2) the instance
of 1,6-\textsc{Partition} is a no-instance.

First, observe that to achieve the lower bound there cannot be any
time between the end of the reclaiming of one stockpile and the
start of the reclaiming of the next stockpile. Furthermore, w.l.o.g., we
can assume that stockpile 1 is assigned to $R_0$ with placement
$(l_1, r_1) = (0, 10B)$.

\vspace{12pt} \noindent \textbf{Claim 1: The lower bound of $54B$
cannot be achieved if stockpiles 1 and 2 are assigned to the same
reclaimer.}

\noindent This is obvious because the stockpiles cannot fit together
on a single path and have different lengths. Therefore, stockpile 2
has to be assigned to $R_1$ and placed on pad $P_2$. It also follows
that stockpile 2 has to be reclaimed from right to left.

\vspace{12pt} \noindent \textbf{Claim 2: The lower bound of $54B$
cannot be achieved if stockpile 3 is assigned to $R_1$ or stockpile
4 to $R_0$.}

To avoid time between the end of reclaiming of stockpile 2 and the
start of reclaiming of stockpile 3, stockpile 3 has to be placed on
pad $P_2$ to the left of stockpile 2. As a consequence, stockpile 4
has to be placed on pad $P_1$, because there is not enough space
left to place it on pad $P_2$. As a result, stockpile 3 and
stockpile 4 cannot be reclaimed by the same reclaimer, because the
right-most position of stockpile 3 will be less than or equal to
$7B$ and the left-most position of stockpile 4 will be greater than
or equal to $10B$. Furthermore, if stockpile 4 would be assigned to
Reclaimer $R_0$, then, because Reclaimer $R_0$ always has to be to
the left of Reclaimer $R_1$, it is unavoidable to incur travel time
before the start of the reclaiming of stockpile 4. Thus, stockpile 4
has to be assigned to $R_1$ and stockpile 3 to $R_0$.

\vspace{12pt} \noindent \textbf{Claim 3: The lower bound of $54B$
cannot be achieved unless the travel time between $l_2$ and $l_4$ is
exactly $B$.}

Since $l_2 \leqslant 7B$ and $l_4 \geqslant 10B$, we have $\frac{l_4 - l_2}{3}
\geqslant B$. Since stockpile 3 has length $B$ and has to be reclaimed
between stockpile 2 and 4, if $l_4 > 10B$ or $l_2 < 7B$, then there
will be travel time of at least $\frac{l_4 - l_2 - B}{3}$ in the
schedule.

From the above three claims it follows that to be able to achieve
the lower bound of $54B$, stockpiles 1 and 3 have to be assigned to
$R_0$, stockpiles 2 and 4 have to be assigned to $R_1$, and the four
stockpiles have to be placed as follows: $(l_1, r_1) = (0, 10B), \
(l_2, r_2) = (7B, 36B), (l_3, r_3) = (6B, 7B)$, and $(l_4, r_4) =
(10B, 17B)$.

\vspace{12pt} \noindent \textbf{Claim 4: The lower bound of $54B$ is
achievable iff the instance of 1,6-\textsc{Partition} is a yes-instance.}

Observe that $R_1$ can complete the reclaiming of stockpile 4 at
time $47B$ and at that time will be at position $17B$. The remaining
space of $19B$ on pad $P_1$ has to be allocated to stockpiles, say
$x$, and unoccupied space, say $y$. To reach its starting position
at or before time $54B$, the time spend on reclaiming, i.e., $x$,
and the time spend on traveling, i.e., $y/3$ should be less than or
equal to $7B$. Thus, we have
\begin{align*}
x + y & = 19B \\
3x + y & \leqslant 21B,
\end{align*}
which implies $x \leqslant B$, i.e., the stockpiles placed on pad
$P_1$ should take up no more space than $B$. However, given that the
remaining space available for the placement of stockpiles on pad
$P_2$ is $6B$, this implies that the stockpiles corresponding to
$a_1, a_2, \ldots, a_n$ with $\sum_{j=1}^n a_j = 7B$ have to be
partitioned into two subsets $A_1$ and $A_2$ with $\sum_{a_i \in
A_1} a_{i} = B$ and $\sum_{a_i \in A_2} a_{i} = 6B$, i.e., the
instance of 1,6-\textsc{Partition} is a yes-instance. 
\end{proof}

\section{Final Remarks}
\label{sec:final}

We have studied a number of variants of an abstract scheduling
problem inspired by the scheduling of reclaimers in the stockyard of
a coal export terminal. 
\red{We leave the following open question for future work.
\begin{enumerate}
\item For the following problems, we used reduction from \textsc{Partition} to prove that they are NP-hard, but we did not decide if they
  are strongly NP-hard.
  \begin{itemize}
  \item Find an optimal schedule for two reclaimers with given stockpile positions and arbitrary
    reclaim order (Theorem~\ref{thm:np_hard_1}).
  \item Find an optimal schedule for two reclaimers with given stockpile positions and and given
    assignment of stockpiles to reclaimers (Theorem~\ref{thm:oracle_hardness}).
  \item Find optimal stockpile positions and reclaimer schedules for two reclaimers with
    arbitrary reclaim order (Theorem~\ref{thm:pos_two}).
  \item Find optimal stockpile positions and reclaimer schedules for two reclaimers with given reclaim
    order (Theorem~\ref{thm:hardness:pos_two_rec_prec}). 
  \end{itemize}
We conjecture that these problems are not strongly NP-hard.
\item We described a pseudo-polynomial algorithm for the problem to schedule two reclaimers for
  given stockpile positions and given reclaim order (Theorem~\ref{thm:two_pred}). We conjecture that
  this problem can actually be solved in polynomial time.   
\end{enumerate}}
One important aspect of the real-life
reclaimer scheduling problem, which is ignored so far, is its
dynamic nature. Vessels arrive over time, and, as a result, the
stockpiles that need be stacked and reclaimed are not all known at
the start of the planning horizon (and do not all fit together on
the pads). We are currently investigating multi-vessel variants of
the problems studied in this paper that explicitly take into account
the time dimension of the reclaimer scheduling problem.


\begin{thebibliography}{}

\bibitem[\protect\citeauthoryear{Bierwirth and Meisel}{Bierwirth and
  Meisel}{2009}]{bierwirth2009fast}
Bierwirth, C. and F.~Meisel (2009).
\newblock A fast heuristic for quay crane scheduling with interference
  constraints.
\newblock {\em Journal of Scheduling\/}~{\em 12\/}(4), 345--360.

\bibitem[\protect\citeauthoryear{Bierwirth and Meisel}{Bierwirth and
  Meisel}{2010}]{Bierwirth2010615}
Bierwirth, C. and F.~Meisel (2010).
\newblock A survey of berth allocation and quay crane scheduling problems in
  container terminals.
\newblock {\em European Journal of Operational Research\/}~{\em 202\/}(3),
  615--627.

\bibitem[\protect\citeauthoryear{Boland, Gulczynski, and Savelsbergh}{Boland
  et~al.}{2012}]{Boland}
Boland, N., D.~Gulczynski, and M.~Savelsbergh (2012).
\newblock A stockyard planning problem.
\newblock {\em EURO Journal on Transportation and Logistics\/}~{\em 1\/}(3),
  197--236.

\bibitem[\protect\citeauthoryear{Daganzo}{Daganzo}{1989}]{RePEc:eee:transb:v:23:y:1989:i:3:p:159-175}
Daganzo, C.~F. (1989).
\newblock The crane scheduling problem.
\newblock {\em Transportation Research Part B: Methodological\/}~{\em 23\/}(3),
  159--175.

\bibitem[\protect\citeauthoryear{Hu and Yao}{Hu and
  Yao}{2012}]{Hu:2012:SSD:2441300.2441305}
Hu, D. and Z.~Yao (2012, November).
\newblock Stacker-reclaimer scheduling in a dry bulk terminal.
\newblock {\em Int. J. Comput. Integr. Manuf.\/}~{\em 25\/}(11), 1047--1058.

\bibitem[\protect\citeauthoryear{Kim and Park}{Kim and
  Park}{2004}]{kim2004crane}
Kim, K.~H. and Y.-M. Park (2004).
\newblock A crane scheduling method for port container terminals.
\newblock {\em European Journal of operational research\/}~{\em 156\/}(3),
  752--768.

\bibitem[\protect\citeauthoryear{Kim and Kim}{Kim and
  Kim}{1999}]{young1999routing}
Kim, K.~Y. and K.~H. Kim (1999).
\newblock A routing algorithm for a single straddle carrier to load export
  containers onto a containership.
\newblock {\em International Journal of Production Economics\/}~{\em 59\/}(1),
  425--433.

\bibitem[\protect\citeauthoryear{Kim and Kim}{Kim and Kim}{2003}]{NAV:NAV10076}
Kim, K.~Y. and K.~H. Kim (2003).
\newblock Heuristic algorithms for routing yard-side equipment for minimizing
  loading times in container terminals.
\newblock {\em Naval Research Logistics (NRL)\/}~{\em 50\/}(5), 498--514.

\bibitem[\protect\citeauthoryear{Legato, Trunfio, and Meisel}{Legato
  et~al.}{2012}]{legato2012modeling}
Legato, P., R.~Trunfio, and F.~Meisel (2012).
\newblock Modeling and solving rich quay crane scheduling problems.
\newblock {\em Computers \& Operations Research\/}~{\em 39\/}(9), 2063--2078.

\bibitem[\protect\citeauthoryear{Lim, Rodrigues, and Xu}{Lim
  et~al.}{2007}]{NAV:NAV20189}
Lim, A., B.~Rodrigues, and Z.~Xu (2007).
\newblock A m-parallel crane scheduling problem with a non-crossing constraint.
\newblock {\em Naval Research Logistics (NRL)\/}~{\em 54\/}(2), 115--127.

\bibitem[\protect\citeauthoryear{Liu, Wan, and Wang}{Liu
  et~al.}{2006}]{NAV:NAV20108}
Liu, J., Y.-w. Wan, and L.~Wang (2006).
\newblock Quay crane scheduling at container terminals to minimize the maximum
  relative tardiness of vessel departures.
\newblock {\em Naval Research Logistics (NRL)\/}~{\em 53\/}(1), 60--74.

\bibitem[\protect\citeauthoryear{Moccia, Cordeau, Gaudioso, and Laporte}{Moccia
  et~al.}{2006}]{NAV:NAV20121}
Moccia, L., J.-F. Cordeau, M.~Gaudioso, and G.~Laporte (2006).
\newblock A branch-and-cut algorithm for the quay crane scheduling problem in a
  container terminal.
\newblock {\em Naval Research Logistics (NRL)\/}~{\em 53\/}(1), 45--59.

\bibitem[\protect\citeauthoryear{Ng}{Ng}{2005}]{RePEc:eee:ejores:v:164:y:2005:i:1:p:64-78}
Ng, W. (2005).
\newblock Crane scheduling in container yards with inter-crane interference.
\newblock {\em European Journal of Operational Research\/}~{\em 164\/}(1),
  64--78.

\bibitem[\protect\citeauthoryear{Ng and Mak}{Ng and
  Mak}{2005a}]{doi:10.1080/03052150500323849}
Ng, W.~C. and K.~L. Mak (2005a).
\newblock An effective heuristic for scheduling a yard crane to handle jobs
  with different ready times.
\newblock {\em Engineering Optimization\/}~{\em 37\/}(8), 867--877.

\bibitem[\protect\citeauthoryear{Ng and Mak}{Ng and Mak}{2005b}]{Ng2005263}
Ng, W.~C. and K.~L. Mak (2005b).
\newblock Yard crane scheduling in port container terminals.
\newblock {\em Applied Mathematical Modelling\/}~{\em 29\/}(3), 263--276.

\bibitem[\protect\citeauthoryear{Ng and Mak}{Ng and
  Mak}{2006}]{doi:10.1080/03052150600691038}
Ng, W.~C. and K.~L. Mak (2006).
\newblock Quay crane scheduling in container terminals.
\newblock {\em Engineering Optimization\/}~{\em 38\/}(6), 723--737.

\bibitem[\protect\citeauthoryear{Petering}{Petering}{2009}]{petering2009effect}
Petering, M.~E. (2009).
\newblock Effect of block width and storage yard layout on marine container
  terminal performance.
\newblock {\em Transportation Research Part E: Logistics and Transportation
  Review\/}~{\em 45\/}(4), 591--610.

\bibitem[\protect\citeauthoryear{Peterkofsky and Daganzo}{Peterkofsky and
  Daganzo}{1990}]{peterkofsky1990branch}
Peterkofsky, R.~I. and C.~F. Daganzo (1990).
\newblock A branch and bound solution method for the crane scheduling problem.
\newblock {\em Transportation Research Part B: Methodological\/}~{\em 24\/}(3),
  159--172.

\bibitem[\protect\citeauthoryear{Psaraftis, Solomon, Magnanti, and
  Kim}{Psaraftis et~al.}{1990}]{psaraftis1990routing}
Psaraftis, H.~N., M.~M. Solomon, T.~L. Magnanti, and T.-U. Kim (1990).
\newblock Routing and scheduling on a shoreline with release times.
\newblock {\em Management Science\/}~{\em 36\/}(2), 212--223.

\bibitem[\protect\citeauthoryear{Sun and Tang}{Sun and Tang}{2013}]{6565191}
Sun, D. and L.~Tang (2013).
\newblock Benders approach for the raw material transportation scheduling
  problem in steel industry.
\newblock In {\em Control and Automation (ICCA), 2013 10th IEEE International
  Conference on}, pp.\  481--484.

\bibitem[\protect\citeauthoryear{Tsitsiklis}{Tsitsiklis}{1992}]{tsitsiklis1992special}
Tsitsiklis, J.~N. (1992).
\newblock Special cases of traveling salesman and repairman problems with time
  windows.
\newblock {\em Networks\/}~{\em 22\/}(3), 263--282.

\bibitem[\protect\citeauthoryear{Zhang, Wan, Liu, and Linn}{Zhang
  et~al.}{2002}]{zhang2002dynamic}
Zhang, C., Y.-w. Wan, J.~Liu, and R.~J. Linn (2002).
\newblock Dynamic crane deployment in container storage yards.
\newblock {\em Transportation Research Part B: Methodological\/}~{\em 36\/}(6),
  537--555.

\bibitem[\protect\citeauthoryear{Zhu and Lim}{Zhu and
  Lim}{2006}]{ZhuYandLiA2006}
Zhu, Y. and A.~Lim (2006).
\newblock Crane scheduling with non-crossing constraint.
\newblock {\em The Journal of the Operational Research Society\/}~{\em
  57\/}(12), 1464--1471.

\end{thebibliography}

\end{document}